\title{\bf Bianchi cosmologies in a Thurston-based \\ theory of gravity}
\author[1,2]{
    {Quentin} 
    {Vigneron}\footnote{\href{mailto:quentin.vigneron@ens-lyon.fr}{quentin.vigneron@ens-lyon.fr}}}
\author[2]{
    {Hamed} 
    {Barzegar}\footnote{\href{mailto:hamed.barzegar@ens-lyon.fr}{hamed.barzegar@ens-lyon.fr}}}
\affil[1]{\small\it{Institute of Astronomy, Faculty of Physics, Astronomy and Informatics}, {Nicolaus Copernicus University}, {{Grudzi{\k{a}}dzka 5}, {Toru\'n}, {87-100} {Poland}}}
\affil[2]{\small\it{ENS de Lyon, CRAL UMR5574, Universite Claude Bernard Lyon 1, CNRS, Lyon 69007, France}}
\date{\vspace{-.5cm}(\today)}
\begin{document}

\maketitle

\begin{abstract}

The strong interplay between Bianchi--Kantowski--Sachs (BKS) spacetimes and Thurston geometries motivates the exploration of the role of topology in our understanding of gravity. As such, we study non-tilted BKS solutions of a theory of gravity that explicitly depends on Thurston geometries.
We show that shear-free solutions with perfect fluid, as well as static vacuum solutions, exist for all topologies. Moreover, we prove that, aside from non-rotationally-symmetric Bianchi II models, all BKS metrics isotropize in the presence of a positive cosmological constant, and that recollapse is never possible when the weak energy condition is satisfied. This contrasts with General Relativity (GR), where these two properties fail for Bianchi~IX and KS metrics. No additional parameters compared to GR are required for these results. We discuss, in particular, how this framework might allow for simple inflationary models in any~topology.

\end{abstract}

\newpage
\setcounter{tocdepth}{2}

\setlength{\cftbeforesecskip}{5pt}
\tableofcontents
\newpage

\section{Introduction}

Locally spatially homogeneous (LSH) spacetimes are interesting classes of solutions to theories of gravity because of their rich structures and, to some extent, their usefulness in describing the Universe as a whole (cf., e.g., \cite{1975_RyanShepley_homogeneousRC, 1980_KramerStephani_exact, 1997_Wainwright_et_al_BOOK}) or local structures in it (see, e.g., \cite{2021_Giani_BIX, 2024_Malkiewicz_Ostrowski_BIX}). In fact, the first of such solutions (i.e., vacuum Bianchi type I solution) is almost as old as the birth of the General Relativity (GR) by the work of Kasner \cite{1921_Kasner}, which was not motivated by group theoretical considerations. However, Bianchi models (as the major class of LSH spacetimes) based on Bianchi's classification in his seminal work \cite{1898_Bianchi, 2001_Bianchi_Jantzen, 2001_Jantzen_editorsnote}, were introduced first in the famous solution by G\"odel in 1949 that he gave as a present on the occasion of the 70th birthday of Einstein (see, e.g., \cite{2009_Rindler_Goedel}). G\"odel's groundbreaking work included Bianchi types~III and VIII  (cf.~\cite{Ozsvath_1970_ClassIIadnIII, 1982_Jantzen_quaternions, 2001_Jantzen_editorsnote}), and was followed by the presentation of his results (though without proof) on Bianchi type IX in 1950 \cite{1952_Godel_rotating, 2000_Godel_reprint1952}.\footnote{G\"odel remained actively engaged with ongoing developments in Bianchi cosmology until the final years of his life; see \cite{2001_Jantzen_editorsnote} and \cite{2009_Rindler_Goedel} for a summary of his remarkable role in GR and relativistic cosmology.} 
Nevertheless, it was Taub (1951) who systematically used Bianchi's classification for LSH spacetimes, i.e., where the spatial sections admit a simply transitive $3$-dimensional isometry group \cite{1951_Taub_empty}---as G\"odel investigated homogeneous spacetimes---which was revived later by Heckmann and Sch\"ucking in the late 1950's \cite{1962_HeckmannSchuecking_RelativisticCosmology}. Finally, Kantowski and Sachs (1966) investigated LSH spacetimes  with no simply transitive subgroup, known as Kantowski--Sachs solutions~\cite{1966_KantowskiSach_SH}, thereby completing the classification of LSH cosmological spacetimes.\footnote{Although Bianchi already considered this case in \cite[\S9]{1898_Bianchi} (see \cite[Section~9]{2001_Bianchi_Jantzen} for the English translation), he ``curiously'' omitted the discussion of $3$-geometries that possess a higher degree of symmetry but lack a simply transitive $3$-dimensional subgroup when summarizing the complete classification in \cite[\S38]{1898_Bianchi} (\cite[Section~38]{2001_Bianchi_Jantzen}); see also \cite{2001_Jantzen_editorsnote}. Hence, this justifies referring to LSH solutions as simply Bianchi cosmologies, a designation used in the title of our paper.}
Since then, LSH cosmological models have attracted a lot of interest, for example analysis of: 
the dynamics towards the singularity, in particular the mixmaster universe and the BKL conjecture \cite{1969_Misner, 1970_Belinskij_et_al} (see \cite{2025_Ringstroem_BKL} and references therein for recent developments on this topic), the vacuum regime (e.g., \cite{2013_Ringstroem_toplogy, 2013_Ringstroem_Cauchy}), the late time dynamics and the question of the isotropization of anisotropic initial conditions (e.g., \cite{1973_Collins_et_al, 1976_Barrow, 1983_Wald, 1997_Wainwright_et_al_BOOK, 2003_Coley_DynamicalSystem}), the dependence on the matter content (e.g., \cite{1970_Maccallum_et_al_b, 1997_Wainwright_et_al_BOOK, 2000_Coley_et_al, 2003_Coley_DynamicalSystem, 2008_Rendall_PDEinGR, 2011_Calogero_et_al, 2014_Fadragas_et_al, 2021_Barzegar, 2018_NormannHervik_pfields}) and its tilt (e.g., \cite{1973_King_et_al, 1999_Goliath_et_al}), and the description of first order perturbations (e.g., \cite{2003_Tanimoto, 2003_Tanimoto_et_al, 2007_Pereira_et_al, 2012_Pereira_et_al, 2015_Pereira_et_al, 2016_Pereira_et_al, 2017_Franco_et_al}). Additionally, the full classification of closed $3$-manifolds (cf.~\cite{1982_Thurston,2002_Perelman,2003_Perelman_a,2003_Perelman_b}) raised the question of determining how spatial topology constrains LSH cosmological models, and in particular how it constrains the spatial geometry (e.g., \cite{1985_Fagundes, 1994_Koike_et_al,1998_Kodama,2002_Kodama}), the matter content (e.g., \cite{2002_Kodama, 2001_Barrow_et_al_a}) and their dynamics (e.g., \cite{1993_Fujiwara_et_al, 1997_Tanimoto_et_al, 1998_Kodama, 2001_Barrow_et_al_a, 2001_Barrow_et_al_b, 2002_Kodama,2025_Smith_et_al}). 
A strong focus has also been made on considering modified gravity theories in the context of LSH cosmological models (e.g., \cite{1995_Mimoso_et_al, 2007_Goheer_et_al}). They may offer a better framework than GR with regards to, e.g., resolving cosmological singularities \cite{2019_de_Cesare_et_al} or for constructing inflation \cite{2024_Vigneron_et_al_b}.
Last but not least, Bianchi models serve as minisuperspace models in quantum gravity approaches: they reduce the infinite-dimensional superspace of all $3$-geometries to a finite number of degrees of freedom, making quantization tractable (cf., e.g., \cite{1991_Ashtekar_et_al_topology, 1992_Ashtekar_et_al_symmetries, 1994_Koike_et_al, 2010_Barbero_et_al_quantiziation, 2022_Geiller_et_al_dynamicalsymmetries}).

From an observational point of view, Bianchi cosmologies offer a crucial framework for testing the fundamental assumption of isotropy in the standard cosmological model that assumes GR. While early analysis using the WMAP data had a preference for a Bianchi VII$_h$ model exhibiting global vorticity and shear \cite{2006_Jaffe_et_al_BVIIhWMAP}, the inclusion of dark energy \cite{2006_Jaffe_et_al_BVIIhruledout} and the later Planck observations \cite{2013_PlanckData_topology, 2015_PlanckData_topology} found no evidence for a significant anisotropic component. This placed stringent limits on such models, and at the same time on the topology of the Universe.\footnote{These analyses were however not exhaustive in different topological degrees of freedom that can be considered \cite{2024_Akrami_et_al}. In this regard, a more exhaustive analysis on how much the Planck data constrain topology is currently being conducted by the COMPACT collaboration \cite{2023_Petersen_et_al_COMPACT, 2024_Mihaylov_et_al_COMPACT_ERRATUM, 2024_Eskilt_et_al_COMPACT, 2024_Tamosiunas_et_al_COMPACT, 2024_Samandar_et_al_COMPACT, 2025_Saha_et_al_COMPACT, 2025_Samandar_et_al_COMPACT, 2025_Copi_et_al_COMPACT}.} These findings are in agreement with the standard cosmological model in which inflation plays a fundamental role in driving any initial global anisotropy to zero at late times. This result finds its root in Wald's theorem \cite{1983_Wald}, stating that negative curvature models become isotropic asymptotically during inflation. While most LSH models fulfill that condition, Bianchi IX ($\mS^3$ topologies) and Kantowski--Sachs models ($\RSS$ topologies) do not, and therefore require additional fine-tuning on their initial conditions to explain the late-time isotropy and avoid an early recollapse (cf., e.g., \cite{1987_Barrow, 1988_Barrow, 1988_Gotz, 1988_Ponce, 1993_Moniz}). In the absence of a physical mechanism excluding these topologies for our Universe, having a non-fine-tuned mechanism that prevents their recollapse and in the same time leads to their isotropization is needed. This motivates, in particular, the study of anisotropic models in modified gravity theories (see, e.g., \cite{1999_Cervantes_Cota_et_al, 2005_Miritzis}).

Thus, there is a strong interplay between LSH models and spatial topology, with motivations ranging from early universe physics to late time isotropization, and in this context, modified gravity theories can offer new perspectives. The goal of this paper is to study this interplay within a recently proposed {\it parameter-free} modified gravity theory \cite{2024_Vigneron}. In this framework, named {\it topo-GR}, a reference (topological) spacetime Ricci curvature $\stTbarRic$ depending on the spatial topology is added to the Einstein equation. This direct link between the field equation of topo-GR and topology naturally suggests strong departures with GR regarding LSH solutions. Indeed, already for homogeneous-isotropic metrics—hence restricting the spatial topology to be Euclidean, spherical or hyperbolic—it was shown in \cite{2023_Vigneron_et_al_b} that the expansion of homogeneous-isotropic models in topo-GR does not depend on the spatial curvature parameter, hence contrasting with GR. This result found a first application in \cite{2024_Vigneron_et_al_b} where it was shown that it allows for a canonical quantization of inflation and a Bunch--Davies prescription in spherical and hyperbolic topologies, something not possible in GR.

Our goal is to provide the systems of equations of LSH solutions in topo-GR with non-tilted perfect fluids in each of the (Thurston) families of topologies. We do not aim at providing a full dynamical analysis of these equations. Rather we will present specific solutions, such as static vacuum solutions and shear-free anisotropic solutions. Furthermore, we will show that, except for a subcase of Bianchi II metrics, no recollapse is possible and late time isotropization is ensured by a positive cosmological constant, hence generalizing the result of Wald \cite{1983_Wald} for topo-GR to the $\mS^3$ and $\RSS$ topologies.

The paper is organized as follows. Before proceeding to
\Cref{sec_Max_Min} that reviews the Thurston classification and its correspondence to Bianchi and Kantowski--Sachs metrics, we begin by establishing the notations, definitions, and conventions used throughout this paper in \Cref{sec_notation}. 
\Cref{sec_topo_GR} recalls the fundamentals of topo-GR and introduces its field equations. 
Section~\ref{sec_topo_GR_3_1} derives the $3 + 1$ equations of topo-GR and their homogeneous version for a generic LSH solution. In particular, we derive a Wald-like theorem for isotropization in the presence of a cosmological constant. 
Sections~\ref{sec_all_models} computes the precise field equations for each type of LSH solutions. 
Finally, \Cref{sec_discussion_Nil} discusses a peculiar result for $\Nils$-topologies, and \Cref{sec_Conclusion} concludes our study.

\subsection{Notations, definitions, and conventions}
\label{sec_notation}

Throughout this paper, we will use the following notations. We consider a spacetime 4-manifold $(\CM, \T g)$  where $\CM \cong \mR\times\Sigma$ with $\Sigma$ a closed Riemannian $3$-manifold. Bold letters are used to represent tensors in a coordinate-free notation. Indices running from 0 to 3 will be represented by Greek letters and those running from 1 to 3 by Roman letters. Local coordinate functions are denoted by $(x^\mu) = (t,x^i) \equiv (t, \T x)$.
We occasionally use the musical symbols $\sharp$ and $\flat$---for the \emph{musical isomorphism}---to denote the metric-dual $1$-form and metric-dual vector of a given vector and $1$-form, respectively, for a given metric. 
Angle brackets represent the (projected) symmetric tracefree part of a given tensor $\T A$ with components $A_{ij}$ on a Riemannian $3$-manifold $(\Sigma,\T h)$, i.e., ${A_{\langle ij \rangle} \coloneqq A_{(ij)} - h_{ij}} h^{k\ell} A_{k\ell} / 3$. For a metric $\T h$ we denote by $\T\nabla[\T h]$ and $\TRic[\T h]$ its Levi-Civita connection and Ricci tensor, respectively, denoted by $\T\nabla$ and $\TRic$, whenever no ambiguity arises. The corresponding objects associated with the metric $\Tbarh$ will be designated by $\Tbarnabla$ and $\sTbarRic$. 
To distinguish spacetime objects from spatial ones we occasionally use an upper-left index $4$ (e.g., $\stTnabla$). The semidirect product is denoted by  $\rtimes$.
For the ease of reference, we summarize all the groups used in this work in \Cref{tab_def_Var}. Finally, we set the speed of light at $c = 1$.

A foliation will be called an  \emph{$\T n$-foliation} whenever the $1$-form $\T n$ is orthogonal to the corresponding foliation of $\CM$, and it is called a \emph{$\Sigma$-foliation} if the leaves of the foliation of $\CM$ are diffeomorphic to a hypersurface $\Sigma$.

For any manifold $\Sigma$, there is a unique universal covering space denoted by $\tilde\Sigma$. Any geometric quantity defined on $\Sigma$ can be lifted to a geometric quantity on $\tilde\Sigma$, but the reverse does not hold necessarily. In this work, unless an ambiguity is possible, we will denote equivalently the quantity defined on $\Sigma$ and the one lifted on~$\tilde\Sigma$.

Given a manifold $\Sigma$ and a tensor field $\T F$ on $\Sigma$, we define
\begin{equation}
    \Sym(\Sigma,\T F)
        \coloneqq \left\{\phi \in \Diff(\Sigma) \ | \ \phi^* \T F = \T F \right\}, \quad
    \sym(\Sigma,\T F)
		\coloneqq \left\{\T\xi \in \mathfrak{X}(\Sigma) \ | \ \Lie{\T \xi} \T F = 0 \right\},
\end{equation}
where $\mathfrak{X}(\Sigma)$ is the set of all vector fields on $\Sigma$. Given a torsion-free connection $\T \nabla$ on $\Sigma$, we define
\begin{equation}
    \sym(\Sigma,\T\nabla)
		\coloneqq \left\{\T\xi \in \mathfrak{X}(\Sigma) \ | \ \left(\Lie{\T \xi} \T \nabla\right)^\mu{}_{\alpha\beta} = \nabla_\alpha\nabla_\beta \xi^\mu + \Riem[\T\nabla]^\mu{}_{\beta\alpha\gamma}\,\xi^\gamma = 0 \right\}.
\end{equation}
For a tensor field $\T F$, we say that a (local) vector field $\T\xi$ is a (local) \emph{$\T F$-colineation} if $\Lie{\T\xi} \T F= 0$. If $\T F$ is a metric tensor, we will instead say that $\T\xi$ is a \emph{Killing vector field}.

Finally, for a (semi-)Riemannian manifold $(\Sigma,\T h)$, we define
\begin{equation}
    \begin{alignedat}{4}
	\Isom(\Sigma,\T h) &\coloneqq \Sym(\Sigma, \T h)\,, \quad&
        \isom(\Sigma,\T h) &\coloneqq \sym(\Sigma, \T h)\,, \\
	\Aff(\Sigma,\T h) &\coloneqq \Sym(\Sigma, \T\nabla[\T h])\,, \quad&
        \aff(\Sigma,\T h) &\coloneqq \sym(\Sigma, \T\nabla[\T h])\,, \\
	\ColRic(\Sigma,\T h) &\coloneqq \Sym(\Sigma, \TRic[\T h])\,, \quad&
        \colRic(\Sigma,\T h) &\coloneqq \sym(\Sigma, \TRic[\T h])\,,
    \end{alignedat}
\end{equation}
for which the following inclusions hold (cf.~\cite{1969_Katzin_et_al})\footnote{While $\Isom(\Sigma,\T h)$ and $\Aff(\Sigma,\T h)$ are always (finite dimensional) Lie groups, $\ColRic(\Sigma,\T h)$ can be infinite dimensional.}
\begin{equation}
    \Isom(\Sigma,\T h) \subseteq \Aff(\Sigma,\T h) \subseteq \ColRic(\Sigma,\T h)\,, \\
    \isom(\Sigma,\T h) \subseteq \aff(\Sigma,\T h) \subseteq \colRic(\Sigma,\T h)\,.
\end{equation}

\begin{table}[h]
\centering\small
\caption{\small Thurston geometries and their correspondence to Bianchi--Kantowski--Sachs metrics (based on \cite{1985_Fagundes, 1993_Fujiwara_et_al, 1994_Koike_et_al, 1997_Tanimoto_et_al, 1998_Kodama, 2002_Kodama}): $(i)$ $\Gmax$ is the group of the maximal geometry; $(ii)$ $\Gmin$ is the group of a minimal subgeometry of $(X,\Gmax)$; $(iii)$ $\Isom(\tilde\Sigma,\T h_{\Gmin})_{\rm general}$ is the local isometry group of a \emph{general} minimal metric $\T h$ defined on a closed manifold $\Sigma$.
\label{tab_Gmax_Gmin}}
\renewcommand{\arraystretch}{2.2}
\renewcommand{\tabcolsep}{10pt}
\newcommand{\vspacetable}[1]{\multicolumn{2}{c}{} \vspace{#1}\\}
\begin{tabular}{llll}

\toprule
\vspacetable{-36pt}
\makecell[l]{Geometry}
& $\Gmax$
& $\Isom(\tilde\Sigma,\T h_{\Gmin})_{\rm general}$
& $\Gmin$
\\
\midrule

$\mE^3$
& $\IO(3)$
& $\begin{dcases} \Gmax \\ \BVIIn \rtimes {\rm D}_2 \end{dcases}$
& $\begin{dcases} \BI = \mR^3  \\ \BVIIn \end{dcases}$
\\

$\mS^3$
& $\rmO(4)$
& ${\rm SU}(2)\rtimes \rmD_2$
& $\BIX = {\rm SU}(2)$
\\

$\mH^3$
& $\rmO_+(1,3)$
& $\Gmax$
& $\begin{dcases} \BV \\ \BVIIh \end{dcases}$
\\

$\mR\times\mH^2$
& $\rmO_+(2,1) \times \IO(1)$
& $\Gmax$
& $\BIII$
\\

$\Nils$
& $\Nil \rtimes \rmO(2)$
& $\Gmax$ 
& $\BII = \Nil$
\\

$\Sols$
& $\Sol \rtimes \rmD_4$
& $\Sol \rtimes \rmD_2$
& $\BVIn = \Sol$ 
\\

$\SLRRs$
& ${\SLRR} \rtimes \rmO(2)$
& $\begin{dcases} \SLRR \rtimes \rmD_2  \\ \Gmax \end{dcases}$
& $\begin{dcases} \BVIII = \SLRR \\ \BIII \end{dcases}$
\\

$\mR\times\mS^2$
& $\rmO(3) \times \IO(1)$
& $\Gmax$
& $\Gmax$
\\
\bottomrule

\end{tabular}
\end{table}

\section{Maximal and minimal geometries with their metrics} \label{sec_Max_Min}

\subsection{Maximal geometries}\label{sec_Thurston}

In this section, we briefly introduce the Thurston classification, and refer the reader for more technical details to, e.g., \cite{1982_Thurston, 1997_Thurston_threegeometry, 1983_Scott_threegeometries, 1994_Koike_et_al, 1997_Tanimoto_et_al, 1998_Kodama, 2002_Kodama}.

A \emph{geometry} is defined by a pair $(X, G)$, where $X$ is a manifold and $G$ is a group acting transitively on $X$ such that the point stabilizer (isotropy subgroup) is compact. These properties guarantee the existence of a complete $G$-invariant Riemannian metric on $X$.\footnote{In this sense, a geometry is an equivalence class of homogeneous Riemannian manifolds.} A geometry $(X, {G}')$ is a \emph{subgeometry} of  $(X, G)$ if ${G}'$ is a subgroup of $G$. We call a geometry \emph{minimal} if it does not have a proper subgeometry, and denote it by $(X,\Gmin$). We call a geometry \emph{maximal} if it is not a proper subgeometry of any geometry, and denote it by $(X,\Gmax$). For such maximal geometries, no larger group includes $G$ as a proper subgroup while maintaining the transitive action and compact stabilizer condition. Such maximal geometries are often termed \emph{Thurston}; indeed, they were classified by Thurston which we summarize in the following theorem.\footnote{The classification of maximal Lorentzian geometries admitting compact quotients has also been done, see, e.g., \cite{2007_Dumitrescu_et_al}. There are four maximal Lorentzian geometries on top of the ones that are also maximal Riemannian geometries. These additional geometries do not have compact point stabilizer. Moreover, a maximal Lorentzian metric is not necessarily complete, contrary to Riemannian maximal metrics.}
\begin{theorem}[Thurston, 1982 \cite{1982_Thurston, 1997_Thurston_threegeometry}]
The eight maximal, simply connected $3$-dimen{-}sional geometries $(X, \Gmax)$ admitting compact quotient manifolds are as follows:
\begin{enumerate}
    \item the $\mE^3$-geometry (or Euclidean geometry): $X = \mR^3$ and $\Gmax = \IO(3) \eqcolon \GE$,

    \item the $\mS^3$-geometry (or spherical geometry): $X = \mS^3$ and $\Gmax = \rmO(4) \eqcolon \GS$,

    \item the $\mH^3$-geometry (or hyperbolic geometry): $X = \mR^3$ and $\Gmax = \rmO_+(3,1) \eqcolon \GH$,

    \item the $\RSS$-geometry: $X =\RSS$ and $\Gmax = \rmO(3) \times \IO(1) \eqcolon \GRSS$,

    \item the $\RHH$-geometry: $X = \mR^3$ and $\Gmax = \rmO_+(2,1) \times \IO(1) \eqcolon \GRHH$,
    
    \item the $\Nils$-geometry: $X = \mR^3$ and $\Gmax = \Nil \rtimes \rmO(2) \eqcolon \GNil$,
    
    \item the $\Sols$-geometry: $X = \mR^3$ and $\Gmax = \Sol \rtimes \rmD_4 \eqcolon \GSol$,

    \item the $\SLRRs$-geometry: $X = \mR^3$ and $\Gmax = {\SLRR} \rtimes \rmO(2) \eqcolon \GSLRR$.
\end{enumerate}
\end{theorem}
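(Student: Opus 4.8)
The plan is to follow the standard classification strategy, organized by the point stabilizer. Since a geometry $(X,\Gmax)$ carries a complete $\Gmax$-invariant Riemannian metric and $\Gmax$ acts by isometries, an isometry fixing a point $x$ is determined by its derivative there; hence the stabilizer $G_x$ embeds faithfully into $\rmO(T_xX)\cong\rmO(3)$. The identity component $G_x^0$ is then a connected closed subgroup of $\SO(3)$ and so is conjugate to $\SO(3)$, $\SO(2)$, or the trivial group. This trichotomy on $\dim G_x$ drives the whole argument, after which \emph{maximality} and the \emph{existence of compact quotients} are imposed to isolate exactly the eight geometries.

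First I would treat the isotropic case $\dim G_x=3$. Here $G_x^0=\SO(3)$ acts transitively on directions, forcing the sectional curvature to be pointwise independent of the $2$-plane; by Schur's theorem it is then globally constant, so $X$ is a complete simply connected space form and is isometric to $\mS^3$, $\mE^3$, or $\mH^3$ according to the sign of the curvature, with maximal isometry groups $\rmO(4)$, $\IO(3)$, and $\rmO_+(3,1)$. Each admits compact quotients (spherical space forms, flat tori, closed hyperbolic manifolds). Next comes the rotationally symmetric case $\dim G_x=1$, with $G_x^0=\SO(2)$: the fixed axis of $\SO(2)$ in each tangent space defines a $\Gmax$-invariant line field together with an orthogonal $2$-plane field, and integrating these produces a $\Gmax$-invariant fibration of $X$ over a simply connected $2$-dimensional geometry whose stabilizer contains $\SO(2)$ as the full rotation group, hence a constant-curvature surface $\mS^2$, $\mE^2$, or $\mH^2$. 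Classifying the invariant fibrations over each base (product versus a twisted line bundle, measured by an Euler-class-type invariant) yields the candidates: over $\mS^2$ the product $\RSS$ (the twisted bundle reproduces the isotropic $\mS^3$); over $\mE^2$ the product is absorbed into $\mE^3$ while the twisted bundle gives $\Nils$; over $\mH^2$ the product $\RHH$ and the twisted bundle $\SLRRs$.

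Finally, the case $\dim G_x=0$: the stabilizer is finite, so $\Gmax$ is a $3$-dimensional Lie group acting essentially simply transitively, and $X$ may be identified with a simply connected $3$-dimensional Lie group carrying a left-invariant metric. This is exactly where the Bianchi classification enters. Two filters then cut the list down. The compact-quotient requirement forces $\Gmax$ to be unimodular, since a cocompact lattice cannot exist otherwise; this eliminates the one-parameter family of non-unimodular solvable (Bianchi type-$\mathrm{VI}_h$) geometries, which are precisely the homogeneous geometries admitting no compact quotient. Maximality then discards every left-invariant geometry that is a proper subgeometry of one already found (the abelian group into $\mE^3$, $\BVIII=\SLRR$ into the $\SLRRs$ geometry, $\BII=\Nil$ into $\Nils$, and so on). What survives as a genuinely new maximal geometry admitting a compact quotient is $\Sols$, with $\Gmax=\Sol\rtimes\rmD_4$.

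The main obstacle is the maximality bookkeeping together with the compact-quotient filter in the last two cases. It is exactly the requirement of admitting a compact quotient---equivalently unimodularity of $\Gmax$---that renders the list finite, by excluding the continuum of non-unimodular homogeneous geometries; and confirming that each retained group is genuinely maximal (its stabilizer cannot be enlarged, and it is not a proper subgeometry of a larger geometry) requires a careful case-by-case comparison of the invariant connections and of the full isometry groups. Assembling these verifications, along with the explicit construction of compact quotients (Seifert-fibered spaces in the $\dim G_x=1$ case), is what makes the theorem deep; a complete treatment would follow Thurston and Scott rather than being reproduced here.
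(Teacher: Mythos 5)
The paper does not prove this theorem at all: it is imported verbatim as Thurston's classification, with the proof delegated to the cited references (Thurston's notes and Scott's survey). Your outline is therefore not competing with an in-paper argument; it is a reconstruction of the standard proof from exactly those references, and as a strategy it is the right one — the trichotomy on the dimension of the point stabilizer, Schur's theorem in the isotropic case, the invariant Seifert fibration over a two-dimensional geometry in the $\SO(2)$ case, and the identification with a unimodular Lie group in the finite-stabilizer case.

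Two cautions on the last step, which is where your sketch is loosest. First, the filter ``non-unimodular $\Rightarrow$ no compact quotient'' applies to the group that is actually the full isometry group of the invariant metric, not to every transitive subgroup: Bianchi V and ${\rm VII}_{h>0}$ are non-unimodular yet underlie geometries with abundant compact quotients, because every invariant metric there is hyperbolic and the relevant $\Gmax$ is $\rmO_+(1,3)$, which lands them in your isotropic case rather than the simply transitive one. The family genuinely excluded by the compact-quotient requirement consists of the geometries whose maximal group really does have finite stabilizer and is non-unimodular — this includes Bianchi IV as well as the ${\rm VI}_{h\neq 0,-1}$ family, consistent with the paper's own remark in \Cref{sec_Minimal_geometries} that $\BVIh$ and $\BIV$ are minimal geometries that are not subgeometries of any maximal geometry. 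Second, the maximality bookkeeping (checking that each retained stabilizer cannot be enlarged, and that none of the eight is a proper subgeometry of another) and the existence of compact quotients in the Seifert-fibered cases are the substantive verifications; you correctly flag that you are deferring them, so the proposal should be read as a faithful road map to the external proof rather than a self-contained argument.
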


A given group $\Gmax$ defines a family (or class) of $3$-dimensional topological spaces. For this reason we will sometimes refer to ``the class of topology'' of a closed $3$-manifold. A precise topological space within each class is characterized by a discrete subgroup $\Gamma$ of~$\Gmax$.

\begin{definition}[Geometric manifold]
A manifold $\Sigma$ is a  \emph{geometric $3$-manifold} if it is a closed $3$-manifold and $\Sigma = \tilde\Sigma/\Gamma$ where $\Gamma$ is a discrete subgroup of one of the maximal geometries, and  $\tilde\Sigma$ is the universal covering space of $\Sigma$. Equivalently, we say that $\Sigma$ is modeled on a Thurston geometry.
\end{definition}
\begin{theorem}[Thurston--Hamilton--Perelman, 2003 \cite{1982_Thurston, 1982_Hamilton_Ricci, 2002_Perelman, 2003_Perelman_a}]\label{thm_THP}
Every closed $3$-manifold is a connected sum of geometric $3$-manifolds.
\end{theorem}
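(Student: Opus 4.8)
The statement is the geometrization theorem, so the only honest plan is to assemble the full Hamilton--Perelman program and feed it with the pieces produced by classical $3$-manifold topology. First I would reduce to indecomposable pieces: by the Kneser--Milnor prime decomposition every closed orientable $3$-manifold is a connected sum of prime factors, unique up to order and orientation, which already accounts for the ``connected sum'' in the statement and reduces the problem to geometrizing a prime manifold. For a prime factor other than $\mS^2\times\mS^1$ I would then cut along the canonical Jaco--Shalen--Johannson family of incompressible tori, obtaining pieces that are each either Seifert fibered or atoroidal. Both steps are classical and combinatorial, so I would treat them as routine and concentrate on geometrizing the resulting pieces.

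The analytic heart is to equip each piece with one of the eight geometries, and for this I would run Hamilton's Ricci flow $\partial_t \T g = -2\,\TRic[\T g]$ from an arbitrary initial metric \cite{1982_Hamilton_Ricci}, following Perelman \cite{2002_Perelman, 2003_Perelman_a}. The crucial tools I would invoke are the monotonicity of Perelman's $\CW$-entropy and the resulting no-local-collapsing theorem, which excludes cigar-type blow-ups and forces every finite-time singularity model to be a $\kappa$-noncollapsed ancient solution. Using the classification of such $\kappa$-solutions (the canonical-neighbourhood theorem), I would perform Ricci flow with surgery: each high-curvature region is either a neck modeled on a shrinking round cylinder $\RSS$ or a cap, and surgery excises necks and caps with uniform geometric control, producing a flow defined for all time with only finitely many surgeries in any bounded interval.

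I would then split into the two large-scale regimes. If the flow goes extinct in finite time---which occurs exactly for manifolds with finite fundamental group \cite{2003_Perelman_b}---I would conclude that the piece is a spherical space form, i.e.\ of $\mS^3$ type, the simply connected case being the Poincar\'e conjecture. Otherwise I would invoke Perelman's thick--thin decomposition of the $t$-rescaled flow at large time: the thick part converges in the Gromov--Hausdorff sense to a complete finite-volume hyperbolic manifold, producing the $\mH^3$ pieces cut along their cusp tori, while the thin part collapses with locally bounded curvature. Applying the collapsing theory for $3$-manifolds (Cheeger--Gromov--Fukaya together with Shioya--Yamaguchi, completed by Morgan--Tian, Kleiner--Lott and Cao--Ge), I would identify the thin part as a graph manifold; its classical geometrization (cf.~\cite{1983_Scott_threegeometries}) then distributes it among the remaining geometries $\mE^3,\RSS,\RHH,\Nils,\SLRRs$ and $\Sols$, the last arising for torus (semi-)bundles with Anosov monodromy. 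Together with the extinction and thick cases this exhausts all eight Thurston types.

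The main obstacle is unambiguously the singularity analysis together with the surgery construction: showing that every rescaled blow-up limit is a $\kappa$-solution, classifying those solutions, and---hardest of all---carrying out surgery with curvature bounds uniform enough that the surgery times cannot accumulate and the topology stays controlled across every surgery. A close second is the large-time collapsing step, where extracting a graph/Seifert structure from a family of collapsing bounded-curvature metrics requires the full $3$-dimensional collapsing theorem. By contrast, the prime and torus decompositions and the final reassembly of the geometric pieces into the connected-sum statement are comparatively straightforward bookkeeping.
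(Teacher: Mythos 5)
The paper does not prove this statement at all: it is imported wholesale from the literature (Thurston, Hamilton, Perelman), so there is no ``paper's proof'' to compare against. Your outline is a faithful and correctly ordered roadmap of the standard Hamilton--Perelman argument that those references contain---Kneser--Milnor prime decomposition, JSJ tori, Ricci flow with surgery controlled by the $\CW$-entropy and no-local-collapsing, finite-time extinction for finite fundamental group, and the thick--thin decomposition with collapsing theory for the long-time regime. Of course, at this level of granularity each named step is itself a major theorem, so what you have written is a correct architecture rather than a proof; but for a result of this nature that is the only reasonable thing to produce, and you have identified the genuinely hard points (the surgery construction with non-accumulating surgery times, and the $3$-dimensional collapsing theorem) accurately.

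One substantive caveat, which your own sketch exposes: the argument you describe proves geometrization in its standard form---every prime summand decomposes \emph{along incompressible tori} into geometric pieces---which is not literally the statement ``every closed $3$-manifold is a connected sum of geometric $3$-manifolds.'' A prime manifold with a nontrivial JSJ decomposition (e.g.\ a generic graph manifold) is not itself a geometric $3$-manifold in the sense of the paper's Definition, yet it is not a nontrivial connected sum either. So the theorem as stated in the paper is a loose paraphrase, and your argument, carried to completion, would prove the correct statement rather than the quoted one. This is a defect of the paper's formulation, not of your proof strategy, but you should not claim that the connected-sum phrasing follows from the decomposition you construct.
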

\begin{remark}
\Cref{thm_THP} states that the knowledge of every geometric $3$-manifold, and equivalently the knowledge of every discrete subgroup $\Gamma$ (acting freely on $X$) of the maximal geometries is sufficient to construct every closed $3$-manifold. However, geometric $3$-manifolds are the only closed $3$-manifolds admitting locally homogeneous Riemannian metrics. Otherwise, when a non-trivial connected sum is considered, no such metric can be defined (the only exception is $\mR\mP^3 \# \mR\mP^3$ which can be modeled on $\RSS$, see \cite{1974_Tollefson_stwoxsone, 1983_Scott_threegeometries}). For this reason, in this paper, we will only consider manifolds modeled on a maximal geometry. Moreover, we will not be interested in the precise topology, given by the discrete subgroup $\Gamma \subset \Gmax$, but rather we will be interested only in the maximal geometry on which the manifold is modeled.
\end{remark}
\begin{definition}[Maximal metric]
Given a closed  $3$-manifold $\Sigma$ modeled on a $(X,\Gmax)$ geometry, a Riemannian metric $\T h$ on $\Sigma$ is said to be \emph{maximal}, or \emph{Thurston}, if the isometry group of its lift on $\tilde\Sigma$ is $\Gmax$, i.e., $\Isom(\tilde\Sigma,\T h) = \Gmax$. 
\end{definition}
Since $\Gmax$ acts transitively on $\tilde\Sigma$, maximal metrics are locally homogeneous metrics on~$\Sigma$.

\subsection{Minimal geometries}\label{sec_Minimal_geometries}

Locally homogeneous metrics that are not necessarily maximal are defined to be locally invariant by a minimal geometry. It was shown in \cite{1985_Fagundes, 1994_Koike_et_al, 1997_Tanimoto_et_al}, summarized in the following theorem, that, apart from the $\RSS$-geometry, all the minimal subgeometries of the maximal geometries are described by simply transitive $3$-dimensional groups $G_3$, referred to as the \emph{Bianchi groups}.\footnote{The classification of minimal geometries made by Kodama in \cite[Table~1]{1998_Kodama} and \cite[Table~1]{2002_Kodama} is not coherent between these two papers, and differs from \Cref{thm_minimal}. While no precise definition of ``minimal geometry'' is given in these papers, the reason for the difference is likely because the author referred to minimal geometries as ``the minimal groups admitting compact quotient with a discrete subgroup $\Gamma \subset \Gmin$.'' Here, we do not require the existence of a $\Gamma \subset \Gmin$ such that $\Sigma = \tilde\Sigma/\Gamma$. Hence, we follow the definition in \cite{1994_Koike_et_al, 1997_Tanimoto_et_al}.\label{footnote_Kodama}} They are classified from ${\rm I}$ to ${\rm IX}$ (\cite{1898_Bianchi, 2001_Bianchi_Jantzen}), and we denote them by $\BI$ to $\BIX$.
\begin{theorem}[\cite{1985_Fagundes, 1994_Koike_et_al, 1997_Tanimoto_et_al}]\label{thm_minimal}
The minimal subgeometries of the eight maximal geometries are for
\begin{enumerate}
    \item the $\mE^3$-geometry: $\Gmin = \BI$ and $\Gmin = \BVIIn$,

    \item the $\mS^3$-geometry: $\Gmin = \BIX$,

    \item the $\mH^3$-geometry: $\Gmin = \BV$ and $\Gmin = \BVIIh$,

    \item the $\RSS$-geometry: $\Gmin = \Gmax$,

    \item the $\RHH$-geometry: $\Gmin = \BIII$,
    
    \item the $\Nils$-geometry: $\Gmin = \BII$,
    
    \item the $\Sols$-geometry: $\Gmin = \BVIn$,

    \item the $\SLRRs$-geometry: $\Gmin = \BVIII$ and $\Gmin = \BIII$.
\end{enumerate}
\end{theorem}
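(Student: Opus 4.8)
The plan is to convert the geometric minimality condition into a Lie‑algebraic classification of three‑dimensional simply transitive subgroups, and to isolate $\RSS$ as the single exception. Let $(X,G')$ be any subgeometry of $(X,\Gmax)$, so $G'\subseteq\Gmax$ acts transitively on the $3$-manifold $X$ with compact stabilizer $H'=G'_x$. Because the isotropy representation embeds $H'\hookrightarrow\rmO(3)$, one has $\dim H'\le 3$ and $\dim G' = 3+\dim H'\ge 3$. The first key observation I would record is that, since every Thurston model space $X$ is simply connected, a connected $3$-dimensional subgroup whose Lie algebra is transverse to $\mathfrak h'=\mathrm{Lie}(H')$ has an open orbit through $x$, and the orbit map is then a covering of the simply connected $X$, hence a diffeomorphism: the subgroup acts \emph{simply transitively}, is diffeomorphic to $X$, and its Lie algebra is one of the nine Bianchi algebras. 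Such a group is automatically minimal, since any proper Lie subgroup has dimension $\le 2$ and only lower‑dimensional orbits, so it cannot be transitive. Conversely, a minimal subgeometry with $\dim G'>3$ can exist only when $\Gmax$ possesses no $3$-dimensional simply transitive subgroup. This dichotomy reduces the theorem to (a) classifying the $3$-dimensional simply transitive subgroups of each $\Gmax$ and (b) identifying where none exists.

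For (b), the obstruction occurs precisely for $\RSS$, and I would argue it through diffeomorphism type. A $3$-dimensional group acting simply transitively on $X$ is diffeomorphic to $X$; but every simply connected $3$-dimensional Lie group is diffeomorphic either to $\mR^3$ (the solvable types together with the universal cover of $\SLRR$) or to $\mS^3$ (the group $\mathrm{SU}(2)$). The model space $X=\mR\times\mS^2$ is noncompact, excluding $\mS^3$, and has $\pi_2\neq 0$, excluding $\mR^3$; hence no simply transitive subgroup exists and the minimal subgeometry must keep its $\SO(2)$ isotropy, so that $\Gmin=\Gmax$ in the convention of the cited works. Running the same classification in the opposite direction for $\mS^3$ shows that the only simply connected compact $3$-dimensional Lie group is $\mathrm{SU}(2)$, which yields the single type $\BIX$.

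For (a), the computation is algebraic. Writing $\mathfrak g=\mathrm{Lie}(\Gmax)$ and $\mathfrak h=\mathrm{Lie}(H)$ for the isotropy, a $3$-dimensional subgroup acts simply transitively exactly when its Lie algebra $\mathfrak g_3$ is a subalgebra with $\mathfrak g=\mathfrak g_3\oplus\mathfrak h$ as vector spaces and integrates to a closed subgroup acting properly. I would therefore enumerate all complementary $3$-dimensional subalgebras up to the adjoint action of $H$ and read off the Bianchi type from their structure constants. This is expected to reproduce the list: $\BI$ and $\BVIIn$ for $\mE^3$; $\BV$ and $\BVIIh$ for $\mH^3$; $\BIII$ for $\RHH$; $\BII$ for $\Nils$; $\BVIn$ for $\Sols$ (here $\Gmax=\Sol\rtimes\rmD_4$ already has finite isotropy, so its identity component $\Sol$ is the simply transitive subgroup); and $\BVIII$ together with $\BIII$ for $\SLRRs$.

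The main obstacle I anticipate is step (a): establishing \emph{completeness} of the subalgebra classification, so that no Bianchi type is missed and none is counted spuriously, together with the global passage from a subalgebra to a closed subgroup acting simply transitively with compact quotients. The genuinely delicate cases are those admitting two inequivalent minimal types: $\mE^3$, where the abelian translational $\BI$ subgroup must be distinguished from the screw‑motion $\BVIIn$ subgroup $\mR^2\rtimes\mR$ acting by rotation; $\mH^3$, where $\BV$ must be separated from the one‑parameter family $\BVIIh$; and $\SLRRs$, where both $\BVIII$ and a $\BIII$ subgroup arise. In each of these I would verify the transversality $\mathfrak g=\mathfrak g_3\oplus\mathfrak h$ explicitly, compute the commutators to match the Bianchi normal form, and check that conjugation by $H$ does not merge the two families.
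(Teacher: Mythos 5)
The paper does not prove \Cref{thm_minimal}: it is quoted directly from \cite{1985_Fagundes, 1994_Koike_et_al, 1997_Tanimoto_et_al}, so there is no internal proof to compare against. Judged on its own terms, your reduction is the standard one from those references and the architecture is sound: the dichotomy between maximal groups that admit a three-dimensional simply transitive subgroup and those that do not, the identification of $\RSS$ as the sole exception via $\pi_2(\mR\times\mS^2)\neq 0$ together with the fact that a simply connected three-dimensional Lie group is diffeomorphic to $\mR^3$ or $\mS^3$, and the observation that a simply transitive subgroup is automatically a minimal subgeometry are all correct and are precisely the ingredients the cited works use.

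Two points fall short of a proof. First, the step ``open orbit through $x$, hence the orbit map is a covering of the simply connected $X$'' is not valid as written: an open orbit of a subgroup need not be all of $X$ (its complement can consist of lower-dimensional orbits), so simple connectedness of $X$ buys you nothing until transitivity is established. You need an extra argument --- for instance, that the orbit, being a homogeneous Riemannian manifold in the induced $\Gmax$-invariant metric, is complete and therefore closed as well as open in $X$ --- or a direct case-by-case verification of transitivity. Second, and more substantially, your step (a) is where the entire content of the theorem lives, and you explicitly defer it (``this is expected to reproduce the list''). Without actually enumerating the three-dimensional subalgebras of each $\mathfrak{g}_{\max}$ transverse to the isotropy, checking that they integrate to closed subgroups, and verifying that conjugation by the stabilizer neither merges nor omits a Bianchi type (the delicate cases being $\BI$ versus $\BVIIn$ in $\mE^3$, $\BV$ versus the family $\BVIIh$ in $\mH^3$, and the non-obvious $\BIII$ subgroup of $\SLRR\rtimes\rmO(2)$, which cannot sit inside the simple factor $\SLRR$ and must mix the Borel subalgebra with the central direction), the statement is not established. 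As a plan the proposal is on target; as a proof it is incomplete.
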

While  $\BVIh$ and $\BIV$ groups define minimal geometries, they are not subgeometries of any maximal geometry. For this reason, no metric defined on a closed manifold can be locally invariant by these groups. Hence, for the rest of the paper, we will not consider these groups. 

\begin{remark}
For the $\mH^3$ and $\RHH$ geometries, the discrete subgroups $\Gamma \subset \Gmax$ used to construct closed manifolds can never be subsets of the minimal groups. In this sense, $\BV$, $\BVIIh$, and $\BIII$ cannot be compactified \cite{1991_Ashtekar_et_al_topology}. However, some metrics locally invariant by these minimal groups can still be defined on closed manifolds, but their isometry group will necessarily be bigger than $\Gmin$. This is discussed in \Cref{sec_Isom_Bianchi}.
\end{remark}
\begin{definition}[Minimal metric]\label[definition]{def_minimal_metric}
Given a closed  geometric $3$-manifold $\Sigma$, a Riemannian metric $\T h$ on $\Sigma$ is said to be \emph{minimal}, or \emph{Bianchi--Kantowski--Sachs} \emph{(BKS)}, if its (local) isometry group $\Isom(\tilde\Sigma,\T h)$ includes (one of) the minimal group $\Gmin$ of the maximal geometry on which $\Sigma$ is modeled. 
\end{definition}
\Cref{def_minimal_metric} implies that any maximal metric is also minimal, but the reverse is not necessarily true. In \Cref{sec_Isom_Bianchi}, we detail which minimal metrics are necessarily maximal. This is summarized in the third column of \Cref{tab_Gmax_Gmin}.

\subsection{Minimal metrics as left-invariant metrics on Bianchi groups}\label{sec_Bianchi}

\begin{table}[t!]
\centering\small
\caption{\small Classification of Bianchi groups with respect to Milnor bases. Note that we do not consider here that these bases are orthonormal with respect to a given metric. That is why the values of the structure constants (for a given sign) are totally free, i.e, not constrained by compactness considerations.\label{tab_Bianchi_groups}}
\renewcommand{\arraystretch}{2}
\renewcommand{\tabcolsep}{10pt}
\newcommand{\vspacetable}[1]{\multicolumn{2}{c}{} \vspace{#1}\\}

\begin{tabular}{llcccc}

\toprule
\makecell{Maximal \\ geometry}
& \makecell[l]{\; Bianchi \\ \; group}
& $n_1$
& $n_2$
& $n_3$
& $a$
\\
\midrule

$\mE^3$
& $\begin{dcases} \BI \\ \BVIIn \end{dcases}$
& $0$
& $\begin{dcases} 0 \\ + \end{dcases}$
& $\begin{dcases} 0 \\ + \end{dcases}$
& $0$
\\

$\mS^3$
& $\ \ \: \BIX$
& $+$
& $+$
& $+$
& $0$
\\

$\mH^3$
& $\begin{dcases} \BV \\ \BVIIh \end{dcases}$
& $0$
& $\begin{dcases} 0 \\ + \end{dcases}$
& $\begin{dcases} 0 \\ \tfrac{a^2}{h \,n_2} > 0 \end{dcases}$
& $+$
\\

$\RHH$
& $\ \ \: \BIII$
& $0$
& $+$
& $-\tfrac{a^2}{n_2}<0$
& $+$
\\

$\Nils$
& $\ \ \: \BII$
& $+$
& $0$
& $0$
& $0$
\\

$\Sols$
& $\ \ \: \BVIn$
& $0$
& $+$
& $-$
& $0$
\\

$\SLRRs$
& \makecell[l]{$\ \ \: \BVIII$ \\ \ (and $\BIII$)}
& \makecell[c]{$-$ \\ \ }
& \makecell[c]{$+$ \\ \ }
& \makecell[c]{$+$ \\ \ }
& \makecell[c]{$0$ \\ \ }
\\
\bottomrule

\end{tabular}
\end{table}

When the minimal group $\Gmin$ is a Bianchi group $G_3$, it acts simply transitively on $X$. Thus, $X$ can be identified with the group itself, and a minimal metric is a \emph{left invariant} metric on~$G_3$. A Killing vector field from the Bianchi group, denoted by $\T\xi$, is therefore a generator of the left-translation and the metric has constant components in a left-invariant basis, denoted by $\{\e_i\}_{i\in\{1,2,3\}}$. The \emph{commutation coefficients} $C^k_{ij} \T\xi_k \coloneqq [\T{\xi}_i, \T{\xi}_j]$ of a basis $\{\T\xi_i\}_{i\in\{1,2,3\}}$ uniquely determine the Bianchi group. They can always be written in the~form
\begin{align}
    C^k_{ij} = \epsilon_{ij\ell} n^{\ell k} + 2\delta^k_{[i} a_{j]}\,,
\end{align}
where $n^{ij}$ is a symmetric matrix, and such that $n^{ij} a_j = 0$ due to the Jacobi identities on~$C^k_{ij}$.\footnote{$n^{ij}$ and $a_i$ define a symmetric $(2,0)$-tensor density of weight $1$ and a $1$-form field globally on the group, respectively. However, $a_i$ can never descend to a global quantity on a compact manifold.} We can always choose a basis $\{\T\xi_i\}$ such that (cf., e.g., \cite{1969_Ellis_et_al, 1980_KramerStephani_exact, 1997_Wainwright_et_al_BOOK})
\begin{equation}\label{eq_Bianchi_n_i_a}
	(n^{ij}) = {\rm diag}(n_1,n_2,n_3)\,, \\  (a_i) = (a,0,0)\,, \\ a\,n_1 = 0\,, \\\ a \geq 0\,.
\end{equation}
The signs of $n_i$, $a$ and $h\coloneqq a^2/(n_2 n_3)$ determine the $G_3$ group, as presented in Table~\ref{tab_Bianchi_groups}.

The (reciprocal) algebra of any left-invariant basis $\{\e_i\}$ on $X$, i.e., $[\T\xi_i, \e_j] = 0 \ \forall \  i,j\in\{1,2,3\}$, is isomorphic to the algebra $\mathfrak{g}_3$ \cite{1969_Ellis_et_al}. Moreover, denoting by $\{\e^i\}$ the dual to the left-invariant basis, the Maurer--Cartan equation reads 
\begin{equation}\label{eq_Maurer_Cartan}
    \T\dd \e^i = - \frac{1}{2} C^i_{jk} \e^j \wedge \e^k\,.
\end{equation}
Therefore, we can always choose a left-invariant basis whose commutation coefficients correspond to~\eqref{eq_Bianchi_n_i_a}. Such a basis is called a \emph{Milnor basis} \cite{1976_Milnor}. We can always choose a Milnor basis that diagonalizes any left-invariant Riemannian metric $\T h$ \cite{1969_Ellis_et_al}. Two choices of such a Milnor basis will be of interest for this paper:
\begin{enumerate}
	\item the basis is \emph{orthonormal} with respect to $\T h$, i.e., $\T h(\e_i, \e_j) = \delta_{ij}$. In general, no freedom is left on the basis to choose the values of the constants $n_i$ and $a$, unless some $n_i$ are degenerate. When considered in a spacetime context, this approach is called the \emph{orthonormal approach}, which we detail in Section~\ref{sec_orthonormal_approach}.
	\item the constants $n_i$ and $a$ are either $1$, $-1$, or $0$, and the metric is diagonal. We call this Milnor basis \emph{unit}. Depending on the Bianchi group, some rescaling freedom of the basis vectors can remain, enabling us to choose some of the diagonal components of the metric.
\end{enumerate}
Since all maximal metrics are Bianchi metrics, except for the $\RSS$-geometry, these metrics can be expressed in an orthonormal Milnor basis with a specific choice of structure constants. We provide them in the second column of Table~\ref{tab_Thurston_Rbar_max}. We also provide their maximal Ricci tensor, calculated from the following formula (cf., e.g., \cite[Equation~(3.7b)]{1969_Ellis_et_al}
\begin{equation}\label{eq_Ricci_Bianchi}
    \sR_{ij} = 2 n_{ik} n^k{}_j - n^k{}_k n_{ij} - 2 \epsilon^{k\ell}{}_{(i}\, n_{j)k}\, a_\ell - \delta_{ij} \left[2 a_k a^k + n_{k\ell} n^{k\ell} - \tfrac{1}{2} \left(n^c{}_c\right)^2\right],
\end{equation}
where indices are raised and lowered by the metric $\T h$.


\subsection{Isometry groups of minimal metrics}\label{sec_Isom_Bianchi}

\subsubsection{Spatial isometry group}

While a minimal metric is defined such that it is $\Gmin$-invariant, this does not imply that its isometry group corresponds to $\Gmin$; indeed, additional discrete isometries (discrete isotropies) are always present in the form of the dihedral group of order $4$, denoted by $\rmD_2$ (see, e.g., \cite{1998_Kodama,2002_Kodama}). These isometries represent, in general, rotations of angle $\pi$ around the left-invariant basis vectors (for the Bianchi cases).\footnote{The dihedral group of order 8, denoted by $\rmD_4$, present in the maximal group of the $\Sols$-geometry, completes $\rmD_2$ by rotations of angle $\pi/2$ around the basis vectors. We direct the reader to the works of Kodama \cite{1998_Kodama, 2002_Kodama} for a detailed analysis of these discrete symmetries and their representations  of the Bianchi groups.} However, for some geometries, a minimal metric always has additional \emph{continuous} symmetries (continuous isotropies) such that the metric is actually maximal (see, e.g. \cite{2012_Papadopoulos_et_al}). This is the case for the $\mE^3$ (for a $\BI$-invariant metric), $\mH^3$, $\RHH$, $\Nils$ and $\RSS$-geometries. In this sense, a $\BI$, a $\BII$, a $\BIII$ (on a closed manifold), a $\BV$, a $\BVIIh$ (on a closed manifold), and a $\KS$ metric are all maximal metrics. For the other geometries, a general minimal metric  has only three continuous symmetries described by the (minimal) Bianchi group. This is summarized in the third column of Table~\ref{tab_Gmax_Gmin}, where $\Isom(\tilde\Sigma,\T h_{\Gmin})_{\rm general}$ corresponds to the local isometry group of a \emph{general} $\Gmin$-invariant metric $\T h$ defined on a closed manifold $\Sigma$.

Furthermore, as presented in Tables~\ref{tab_Gmax_Gmin} and~\ref{tab_Bianchi_groups}, some geometries ($\mE^3$, $\mH^3$ and $\SLRRs$) admit several minimal geometries.

For $\mE^3$, a minimal metric can be described either by a $\BI$ or $\BVIIn$ metric. In the former case it is always a maximal metric, while in the latter case it can be non-maximal. In this sense $\BI$ metrics are subcases of $\BVIIn$ metrics and they are equivalent when the latter is characterised by $n_2=n_3$ in an orthonormal Milnor basis. However, this inclusion does not hold when a spacetime metric is considered (cf.~\cite{1969_Ellis_et_al}, see also Sections~\ref{sec_Spacetime_Isom} and~\ref{sec_E3} below).

For $\mH^3$, a minimal metric can be described either by a $\BV$ or $\BVIIh$ metric, which differ in general. However, when we demand that the metric descend to a globally defined metric on a closed manifold (as we do in this paper), then both metrics are maximal metrics, and differ only by an homothety due to Mostow rigidity theorem \cite{1968_Mostow}.

For $\SLRRs$, a minimal metric can be described either by a $\BIII$ or $\BVIII$  metric. The $\BVIII$ metric can be fully general, however, for the $\BIII$ metric $n_2 \not=a$ is required in an orthonormal Milnor basis. For $n_2 = a$, the metric is instead defined on the $\RHH$-geometry. Interestingly, choosing $n_2 \not= a$ or $n_2 = a$ does not change the dimension of $\Isom(\tilde\Sigma,\T h)$, which is always $4$. However, the isometry group itself changes. In the former case it is $\GSLRR$ and in the latter case it is $\GRHH$. Bianchi III metrics are therefore always maximal $\RHH$ or $\SLRRs$-metrics. Consequently, in the $\SLRRs$ case, they are subcases of $\BVIII$ metrics (we provide in \Cref{app_SLRR_dico} their correspondence). However, contrary to the Euclidean case, this property also holds at the spacetime level (as shown in \cite{1969_Ellis_et_al}). This means that, when studying minimal metrics on the $\SLRRs$-geometry, one needs only to consider $\BVIII$ metrics.
\begin{remark}
    Depending on the fundamental group $\pi_1(\Sigma)$, additional continuous symmetries can be forced, even imposing a general minimal metric to be maximal. In this paper, we will not consider the dependence of the isometry group on the topology, and we direct the reader to the works by Kodama \cite{1998_Kodama, 2002_Kodama} for a detailed analysis of this~dependence.
\end{remark}

\subsubsection{Spacetime isometry group}\label{sec_Spacetime_Isom}

While some Bianchi metrics always have continuous isotropies, as shown in the previous section, these symmetries are not necessarily symmetries of the spacetime metric $\T g$. In other words, the spacetime metric does not necessarily inherit the isometry group of the spatial metric, but, in general, only the minimal group (cf.~\cite{1981_Szafron}). For an isometry of the spatial minimal metric to be an isometry of the spacetime metric, a necessary and sufficient condition is that the shear tensor $\T \sigma$ (see \Cref{sec_3+1} for the definition) is invariant by this isometry. Whether or not the shear tensor shares some isotropies with the metric depends on the constraint equations, the matter content and the fundamental group $\pi_1(\Sigma)$ (see, e.g., \cite{1998_Kodama, 2002_Kodama,2001_Barrow_et_al_b}). Thus, there is an important distinction to make between the fact that some Bianchi metrics always have more than three continuous symmetries, and the notion of \emph{locally rotationally symmetric} (LRS) metrics used in the literature on Bianchi spacetimes (see, e.g., \cite{1969_Ellis_et_al}); LRS is a property of the set $(\T h, \T\sigma)$ and not the spatial metric $\T h$ alone.
\begin{remark}
There always exists a subgroup of $\Sym(\tilde\Sigma,\T\sigma)$ that is isomorphic to $\Isom(\tilde\Sigma,\T h)$, but this does not necessarily imply $\Sym(\tilde\Sigma,\T h) \subseteq \Sym(\tilde\Sigma,\T \sigma)$. This is the case if the Killing vector generating the infinitesimal isotropy of the metric rotates with time. A simple example is that of $\BI$ metric of the form $h_{ij}\dd x^i \dd x^j = A_1(t)\dd x^2 + A_2(t)\dd y^2 + A_3(t)\dd z^2$. Its shear tensor is $\sigma_{ij}\dd x^i \dd x^j = \partial_t A_1\,\dd x^2 + \partial_t A_2\,\dd y^2 + \partial_t A_3\,\dd z^2$. The vector $\T\xi_4 = A_2(t)\,{y}\,\Tparx - A_1(t)\,{x}\,\Tpary$ is a (time-dependent) Killing vector of isotropy for the metric. However, only its time derivative is a shear collineation. Thus, both $\T h$ and $\T\sigma$ have a continuous isotropy, but the associated collineation is different. $\T\xi_4$ is a shear collineation if and only if $A_2(t) = A_1(t)$, in which case we say that the set $(\T h, \T \sigma)$ is locally isotropic (LRS) around the $z$-axis. A similar example can be built for the Nil ($\BII$) metric.
\end{remark}

\subsection{Ricci tensors of the maximal metrics}\label{sec_max_Ric_max_h}

\begin{table}[t!]
\centering\small
\caption{\small Maximal metrics and their maximal Ricci tensors, \emph{expressed in an orthonormal Milnor basis}, apart from the $\RSS$-geometry where we directly use coordinates. \label{tab_Thurston_Rbar_max}}

\renewcommand{\arraystretch}{2}
\renewcommand{\tabcolsep}{10pt}

\begin{tabular}{lcc}

\toprule
\makecell{\bf Maximal \\ \bf geometry}
& \bf \makecell[c]{ An orthonormal Milnor basis \\ of a maximal metric}
& \bf {$\sTRic$}
\arraybackslash\\
\midrule

$\mE^3$
& $n_1 = n_2 = n_3 = a = 0$
& $\T 0$
\\

$\mS^3$
& $n_1 = n_2 = n_3 > 0, \quad a = 0$
& $\dfrac{n_1^2}{2} \, {\rm diag}(1,1,1)$
\\

$\mH^3$
& $n_1 = n_2 = n_3 = 0, \quad a >0$
& $-2 \, a^2\, {\rm diag}(1,1,1)$
\\

$\RHH$
& $n_1 = 0,\quad n_2 = -n_3 = a > 0$
& \addstackgap{$- 2 \,a^2\left(\begin{smallmatrix}
            -2  & 0 & 0 \\
            0 & -1 & 1 \\
            0 & 1 & -1
            \end{smallmatrix}\right)$}
\\

$\Nils$
& $n_1 > 0, \quad  n_2 = n_3 = a = 0$
& $\dfrac{n_1^2}{2} \, {\rm diag}(1,-1,-1)$
\\

$\Sols$
& $n_1 = 0, \quad n_2 = - n_3 > 0, \quad a = 0$
& $-2 \, n_2^2\, {\rm diag}(1,0,0)$
\\

$\SLRRs$
& $n_1 < 0,  \quad n_2 = n_3 > 0,  \quad a = 0$
& \makecell[c]{ (Maximal iff $n_1 = -2\,n_2$) \\ $2\,n_2^2\, {\rm diag}(1,-2,-2)$}
\\

$\RSS$
& $A^2 \, \left(\dd x^2 + \sin^2 x\,  \dd y^2\right) + \dd z^2$
& $\dd x^2 + \sin^2 x\,  \dd y^2$
\\
\bottomrule

\end{tabular}
\end{table}

The Ricci tensor of maximal metrics is a building block of the topo-GR theory, as we will present in Section~\ref{sec_topo_GR} below. In this section, we summarize some properties of this tensor.

Up to isometries, a maximal geometry can sometimes define a family of maximal metrics on the universal covering space $\tilde\Sigma = X$, rather than a single metric. For maximal metrics described as Bianchi metrics, this family can be characterized by the structure constants $n_i$ and $a$ of an orthonormal Milnor basis. In Table~\ref{tab_Thurston_Rbar_max}, we provide the maximal metrics in terms of choice of $n_i$ and $a$ for each maximal geometry. For the $\mE^3$-geometry, the maximal metric is unique. For the geometries $\{\mS^3, \mH^3, \RHH, \Nils, \Sols, \RSS\}$ there is a one-parameter family of maximal metrics. For the $\SLRRs$-geometry, there is a two-parameter family of maximal metrics. However, while for the geometries other than $\SLRRs$ the parameter freedom on the Ricci tensors does not change their symmetry group, for $\SLRRs$ it does. As shown in \Cref{app_SL2R}, when $n_1\not=2\,n_2$, $\sTRic$ has four continuous symmetries (the same as those of its metric), but when $n_1 = -2\,n_2$, $\sTRic$ has six continuous symmetries (of course the metric still has only four of them). Therefore, for the $\SLRRs$-geometry, there is a subset of maximal metrics that maximizes the symmetry group of the Ricci tensor.\footnote{Interestingly, this tensor can also be obtained from a non-maximal $\BVIII$ metric, as shown in Appendix~\ref{app_SL2R}.} When constructing topo-GR, this is that maximal Ricci tensor which we will consider. We provide the Ricci tensors of the maximal metrics in the third column of \Cref{tab_Thurston_Rbar_max}.

Note that while it seems that the Ricci tensors for some geometries are not unique due to the presence of a structure constant, this is not the case. That structure constant, which always appears as a conformal factor, can be absorbed by a change of left-invariant basis, from the orthonormal basis of the maximal metric to a unit Milnor basis (i.e., in which the structure constants are $1$, $-1$ or $0$). One can do so by a transformation proportional to the identity, e.g., for $\mS^3$ by $\e_i \mapsto \e_i /n_1$. No free parameter is left in the commutation coefficients of the basis, nor in the components of the Ricci tensors.  Consequently, while a maximal geometry can define, up to isometries, a family of maximal metrics on $X$, it always defines a \emph{unique} maximal Ricci tensor. As will be detailed in \Cref{sec_topo_GR_Rbar_def}, this uniqueness is the reason why topo-GR can be seen as a parameter-free modification of~GR.

Finally, for topo-GR, the following property will also be of use: for all the Ricci tensors listed in \Cref{tab_Thurston_Rbar_max}, any vector field in their kernel is a collineation, i.e.,
\begin{equation}\label{eq_property_kernel_Rbar}
	\Lie{\T v} {\sTRic} = 0\,; \quad \forall \, \T v \in \ker\left({\sTRic}\right).
\end{equation}

\section{Topo-GR}\label{sec_topo_GR}

\subsection{Field equations }\label{sec_topo_GR_Field_Equations}

The theory developed in \cite{2024_Vigneron} features two geometric structures on a 4-manifold~$\CM$:
\begin{enumerate}
    \item a (physical) Lorentzian metric $\T g$, with its Levi-Civita connection $\stTnabla$, its Riemann tensor $\stTRiem$ and its Ricci tensor $\stTRic$.
    \item a \emph{reference}
    (or background) torsion-free connection $\stTbarnabla$, with its Riemann tensor $\stTbarRiem$ and its Ricci tensor $\stTbarRic$. This second connection is non-dynamical and is fixed solely by topological considerations (detailed below), independently of the metric $\T g$ or the matter content. Notably, $\stTbarnabla$ is \emph{a priori} not assumed to arise from any underlying additional Lorentzian metric.
\end{enumerate}
The presence of the additional connection $\stTbarnabla$ places this theory under the broad family of modified gravity theories known as {\it Metric Affine Gravity} (cf., e.g.,~\cite{1995_Hehl_et_al}). But, while in these theories the connection $\stTbarnabla$ is often considered flat, leading to {\it General Teleparallel Gravity} (cf., e.g.,~\cite{2020_Beltran_Jimenez_et_al}), a central idea in the approach of~\cite{2024_Vigneron} is that $\iR{4\,}{\TbarRiem}$ depends on the topology of $\CM$. This is detailed in \Cref{sec_topo_GR_Rbar_def} below.

The Lagrangian of topo-GR is given by 
\begin{equation}\label{eq_action}
	S	\coloneqq \int_\CM\sqrt{-g} \, \left[\frac{1}{2\kappa}\left(\stR_{\mu\nu} - {\stbarR_{\mu\nu}}\right) g^{\mu\nu} + \CL_{\rm m}\right] \dd^4 x\,. 
\end{equation}
Assuming that the matter Lagrangian does not depend on $\stTbarnabla$, and with $\delta\stbarR_{\mu\nu} = 0$, i.e., $\stTbarnabla$ is a background/non-dynamical connection, we obtain from this Lagrangian the following field equations
\begin{align}
	\stR_{\alpha\beta} - \stbarR_{\alpha\beta}
	    &= \kappa \, \left(T_{\alpha\beta} - \frac{T}{2} g_{\alpha\beta} \right) + \Lambda g_{\alpha\beta}\,, \label{eq_topo_GR_eq_1} \\
	\stnabla^\mu \stbarR_{\mu\alpha}
	    &= \frac{1}{2} \stnabla_\alpha \left( g^{\mu\nu} \stbarR_{\mu\nu}\right),\label{eq_topo_GR_eq_2}
\end{align}
where $\kappa \coloneqq 8\pi G$.

The action~\eqref{eq_action} corresponds to a modified Hilbert action that can be obtained with the procedure $\stR_{\mu\nu} \rightarrow \stR_{\mu\nu} - \stbarR_{\mu\nu}$ from the Hilbert action. The same applies for the field equation~\eqref{eq_topo_GR_eq_1}. The second field equation~\eqref{eq_topo_GR_eq_2} corresponds to the conservation of the term $\stbarR_{\mu\nu}  -  g_{\mu\nu} g^{\alpha\beta} \stbarR_{\alpha\beta} / 2$. Very importantly, the two field equations depend on the reference connection \emph{only} via its Ricci curvature $\stTbarRic$.

\subsection{Reference Ricci tensor }\label{sec_topo_GR_Rbar_def}

A fundamental aspect of topo-GR is the presence of a \emph{reference} Ricci tensor $\stTbarRic$. In this section, we first provide its definition, and then consider its consequence.

We assume the $4$-manifold $\CM$ to be of the form $\CM \cong \mR\times\Sigma$ with $\Sigma$ being a closed geometric $3$-manifold. Then, we define $\stTbarRic$ by the following two conditions:
\begin{enumerate}
    \item it is the Ricci tensor of a \emph{reference} connection $\stTbarnabla$ that is complete and torsion-free, and is characterized using the \emph{external Whitney sum} of a connection $\Tbarnabla_\mR$ on $\mR$ and a connection $\Tbarnabla_\Sigma$ on~$\Sigma$, i.e.,
    \begin{equation}\label{eq_def_nabla_bar}
    	\stTbarnabla = \Tbarnabla_\mR \boxplus \Tbarnabla_\Sigma\,,
    \end{equation}

    \item $\TRic[\Tbarnabla_\Sigma]$ is the maximal Ricci tensor arising from a maximal Riemannian metric on~$\Sigma$ (cf.~\Cref{tab_Thurston_Rbar_max}).
\end{enumerate}

The external Whitney sum of the two tangent bundles $T \mR$ and $T \Sigma$ is $T \mR \boxplus T \Sigma := {\rm pr}_1^* (T \mR) \oplus {\rm pr}_2^* (T \Sigma)$ with the Whitney sum of the connections as $\Tbarnabla_\mR \boxplus \Tbarnabla_\Sigma := {\rm pr}_1^* (\Tbarnabla_\mR) \oplus {\rm pr}_2^* (\Tbarnabla_\Sigma)$ where ${\rm pr}_1: \mR \times \Sigma \rightarrow \mR$ and ${\rm pr}_2: \mR \times \Sigma \rightarrow \Sigma$ (see, e.g., \cite[Chapter~I]{1978_Karoubi_BOOK} for more details on the external Whitney sum). Hence, $\stTbarnabla$ given in \eqref{eq_def_nabla_bar} 
acts on a given smooth vector field $\T Y = (\T Y_{\mR}, \T Y_\Sigma)$ along an arbitrary smooth vector field $\T X = (\T X_{\mR}, \T X_\Sigma)$ as  
$\stTbarnabla_{\T X} \T Y =  \Tbarnabla_{\T X_\mR} \T Y_{\mR} + \Tbarnabla_{\T X_\Sigma} \T Y_\Sigma$. As a result, we have ${\TRiem}[\stTbarnabla] = \TRiem[\Tbarnabla_\mR] \boxplus \TRiem[\Tbarnabla_\Sigma] =  {\rm pr}_1^* (\TRiem[\Tbarnabla_\mR]) \oplus {\rm pr}_2^*(\TRiem[\Tbarnabla_\Sigma])$, and in particular
\begin{equation}\label{eq_barRiem_sum}
    {\TRic}\left[\stTbarnabla\right] 
        = {\rm pr}_1^* \left(\TRic\left[\Tbarnabla_\mR\right]\right) \oplus {\rm pr}_2^*\left(\TRic\left[\Tbarnabla_\Sigma\right]\right),
\end{equation}
which acts as ${\TRic}[\stTbarnabla](\T X, \T Y) = \TRic[\Tbarnabla_\mR](\T X_{\mR}, \T Y_{\mR}) + \TRic[\Tbarnabla_\Sigma](\T X_\Sigma, \T Y_\Sigma)$. Since $\mR$ is 1-dimensional we have $\TRiem[\Tbarnabla_\mR] = \T 0$, and ${\TRiem}[\stTbarnabla]$ is given entirely by ${\TRiem}[\Tbarnabla_\Sigma]$. For this reason we will drop the subscript~$_\Sigma$ for the connection on $\Sigma$, and use the following notations for the curvature tensors $\stTbarRiem \coloneqq {\TRiem}[\stTbarnabla]$, $\stTbarRic \coloneqq {\TRic}[\stTbarnabla]$, ${\sTbarRiem} \coloneqq {\TRiem}[\Tbarnabla_\Sigma]$ and ${\sTbarRic} \coloneqq {\TRic}[\Tbarnabla_\Sigma]$.

Therefore, by $(ii)$, $\sTbarRic$ is uniquely defined by the universal covering space of~$\Sigma$. In this sense $\sTbarRic$, and therefore $\stTbarRic$, depend on the (spatial) topology. A direct consequence of this definition is that for Euclidean topologies for which $\tilde\Sigma = \mE^3$ and ${\sTbarRic} = \T 0$, hence $\stTbarRic = \T 0$,  topo-GR and GR are equivalent. Therefore, the two theories only differ in non-Euclidean topologies.

From the definition~\eqref{eq_def_nabla_bar}, there exists a (reference) $\Sigma$-foliation, an adapted vector basis $\{\T{\bar e}_0, \T{e}_i\}$ and dual 1-form basis $\{\T{\bar e}^0, \T{e}^i\}$, i.e., with $\T{\bar e}^0$ orthogonal to the foliation and $\T{e}_i$ tangent to the foliation, such that the components of the reference Riemann tensor are purely spatial, i.e.,
\begin{align}
    \stbarR^{ \alpha}{}_{ \beta  \mu  \nu} =
        \delta^{\alpha}_{ i} \delta^{ j}_{ \beta} \delta^{ k}_{ \mu} \delta^{\ell}_{ \nu}\, {\sbarR}^{ i}{}_{ j  k  \ell}\,. \label{eq_Riembar_ij}
\end{align}
In this same basis, the reference Ricci tensor has the form
\begin{align}
    \stbarR_{\mu\nu} =
        \begin{pmatrix}
            0 & 0 \\
            0 & {\sbarR}_{ij}
        \end{pmatrix}.\label{eq_Rbar_ij}
\end{align}
Furthermore, the Whitney sum \eqref{eq_barRiem_sum} implies that $\bare_0$ is a collineation for $\stTbarRiem$, i.e.,
\begin{align}
    \Lie{\T{\bar e}_0} \stTbarRiem = \T 0 \quad \ \Rightarrow  \quad \Lie{\T{\bar e}_0} \stTbarRic = \T 0\,. \label{eq_Lie_Riembar}
\end{align}

Relation~\eqref{eq_Rbar_ij} implies that $\T{\bar e_0} \in\kerbar$ and that $\kerbar = \{\T{\bar e}_0\} \oplus \ker\left({\sTbarRic}\right)$. Therefore, from properties~\eqref{eq_property_kernel_Rbar} and~\eqref{eq_Lie_Riembar} we have
\begin{equation}
        \Lie{\T{u}} \stTbarRic= \T 0\,; \quad \forall \, \T{u}\in\kerbar. \label{eq_topo_GR_eq_3}
\end{equation}
Since only the reference Ricci curvature is present in the field equations~\eqref{eq_topo_GR_eq_1} and~\eqref{eq_topo_GR_eq_2}, then only equations~\eqref{eq_Rbar_ij} and~\eqref{eq_topo_GR_eq_3} are relevant for topo-GR. They will be especially important when deriving the 3+1-equations of the theory.

\subsection{Discussions}

Before diving into the 3+1-decomposition of topo-GR, we provide in this section some interpretations and discussions of the topo-GR framework presented so far.

\subsubsection{Motivations for topo-GR and the form of the reference Ricci tensor} 

The initial motivation for topo-GR is to have a theory that admits a Newtonian limit in any topology, see \cite{2024_Vigneron}. We detail in this paper the motivations for this requirement, especially a few conditions are listed for a modified gravity theory and the additional term present in the Einstein equation to admit this limit. In particular, the new term should not depend on the matter content, but it has to depend explicitly on the spatial topology in a form similar to~\eqref{eq_Rbar_ij}. Taking this term to be a (reference) Ricci tensor of the form~\eqref{eq_Rbar_ij} is the simplest solution we found, essentially because it does not explicitly introduce new parameters. On top of admitting a Newtonian limit in any topology, it was shown in \cite{2024_Vigneron_et_al_b} that topo-GR allows for the construction of a simple curved inflation model, something not possible in GR, thus providing a second---here model-dependent---motivation for the~theory.

\subsubsection{Interpretation of the field equation} 

From the field equation~\eqref{eq_topo_GR_eq_1}, in the absence of matter, the spacetime is not Ricci-flat as in GR, but rather we have $\stTRic = \stTbarRic$ which is non-zero for non-Euclidean types of topologies. Therefore, the role of matter is not to directly curve spacetime as in GR, but only to induce a deviation from the ``topological curvature'' $\stTbarRic$. This was rendered in~\cite[Figure~1]{2024_Vigneron_et_al_b}.

\subsubsection{No reference foliation} 

By definition, $\stTbarnabla$ and $\stTbarRiem$ define a reference foliation, or a set of reference foliations, which can be reconstructed from a 1-form~$\T n$ satisfying $n_\alpha \stbarR^{ \alpha}{}_{ \beta  \mu  \nu} = 0$. However, the information on this foliation is lost when only the reference Ricci tensor $\stTbarRic$ is considered, since no preferred 1-form can be constructed from $\stTbarRic$ alone, i.e., the kernel of $\stTbarRic$ only involves vectors and not 1-forms. Consequently, the field equations of topo-GR, which only feature the reference Ricci curvature, and not the reference Riemann curvature,  do  \emph{not} involve a reference foliation. As shown in Section~\ref{sec_Rbar_fol}, the consequence is that in a 3+1-decomposition, $\stTbarRic$ always induces the same spatial reference Ricci tensor regardless of the choice of~foliation.

But while no preferred foliation is defined, the set $\kerbar$ defines  a \emph{reference} subset of the set of vector fields on $\CM$. The dimension of that subset depends on the type of spatial topology as we have $\dim\left[\kerbar\right] = 1 + \dim\left[\ker\left({\sTbarRic}\right)\right]$, from~\eqref{eq_Rbar_ij}. When $\Sigma$ is modeled on one of $\{\mS^3, \ \mH^3, \ \Nils, \ \SLRRs\}$-geometries, we have $\dim\left[\kerbar\right] = 1$ and a unique (up to a factor) reference vector field is defined on $\CM$. While for the $\mE^3$-geometry, we have $\stTbarRic = \T 0$ and no reference vector fields are defined on $\CM$, leading exactly to GR.
Consider the spacetime metric $\T g$ of the theory, one can construct the set of 1-forms dual to $\kerbar$, which is 1-dimensional for $\{\mS^3, \ \mH^3, \ \Nils, \ \SLRRs\}$-geometries. But these 1-forms do not necessarily fulfill the Frobenius condition needed for them to define a foliation. Therefore, even in the case $\kerbar$ is $1$-dimensional, in general, no preferred foliation can be constructed in topo-GR.

\subsubsection{Reference curvature for general manifolds $\CM$}\label{sec_topo_of_M}

The definition of $\sTbarRic$ assumes that $\Sigma$ is a geometric $3$-manifold. This is the case for the present paper, and for most cosmological models. However, if $\Sigma$ is not a geometric $3$-manifold (e.g., a connected sum), then no maximal metric can be defined on it. In this situation, the definition given in \Cref{sec_topo_GR_Rbar_def} for $\sTbarRic$ does not hold. This happens in particular for asymptotically flat spacetimes, for which spatial sections are infinite. In this situation, we define $\stTbarRic$ to be zero, and therefore, we define topo-GR to be equivalent to~GR. In this sense, the Schwarzschild metric is an exact vacuum solution of topo-GR.

For more general asymptotic conditions or when $\Sigma$ is closed but not modeled on a Thurston geometry, we do not yet have a definition for $\stTbarRic$. A possibility, that would include the current definition, would be to define $\stTbarRic$ as being the Ricci tensor of a (locally) maximally symmetric Riemannian metric on $\Sigma$, given a choice of boundary conditions (i.e., choice of closed spatial manifold $\Sigma$ or choice of asymptotic conditions). For closed non-geometric manifolds, this metric would not be locally homogeneous. However, we are not yet sure if this is a well-defined definition, leading to a unique Ricci tensor, as it does for closed geometric manifolds.

\subsubsection{A metric behind $\stTbarRic$} 

From the definition given in \Cref{sec_topo_GR_Rbar_def}, the reference connection and its curvature tensors can be obtained from a metric of the form
\begin{equation}
    {\T{\bar g}} = \T{\bar g}_\mR \boxplus \Tbarh_\Sigma \coloneqq {\rm pr}_1^* (\T g_\mR) \oplus {\rm pr}_2^*(\Tbarh_\Sigma)\,,
\end{equation}
where $\T{\bar g}_\mR$ and $\Tbarh_\Sigma$ are, respectively, any metric on $\mR$ and a maximal metric on $\Sigma$. As a consequence, there is a coordinate system in which
\begin{equation}
    \bar g_{\mu\nu} = 
        \begin{pmatrix}
            \pm 1 & 0 \\
            0 & \bar h_{ij}
        \end{pmatrix},
\end{equation}
where $\bar h_{ij}$ is time-independent. However, the field equations do not depend explicitly on this metric, which is why topo-GR is rather a ``bi-connection'' theory than a bi-metric theory.

\section{3+1-decomposition in topo-GR}\label{sec_topo_GR_3_1}

In this section, we derive for the first time the 3+1 equations of topo-GR. We provide their homogeneous form in Section~\ref{sec_topo_GR_3_1_homo}, which we will use to compute the homogeneous systems of equations of topo-GR in each Thurston geometry in Section~\ref{sec_all_models}. Additionally, we derive general results regarding LSH solutions in \Cref{sec_LSH_General_results}.

\subsection{Induced reference spatial Ricci tensor and reference tilt} \label{sec_Rbar_fol}

The orthogonal complement of $\kerbar$, denoted by $\kerbarortho$, is defined as
\begin{equation}
    \kerbarortho \coloneqq \left\{\T n \in T^*\CM\ | \ \T n(\T u) = 0 \,; \forall \, \T u \in \kerbar\right\}.
\end{equation}

\begin{proposition}
	For any $\Sigma$-foliation of $\CM$ and a normal $1$-form $\T n \not\in \kerbarortho$,  the spatial components of $\stTbarRic$ in a basis adapted to the foliation always induce the maximal Ricci tensor $\sTbarRic$ of a maximal metric of $\Sigma$.
\end{proposition}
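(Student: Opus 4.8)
The plan is to trade the differential-geometric statement for a pointwise linear-algebra fact by exploiting the Whitney-sum definition \eqref{eq_def_nabla_bar}. First I would note that, since $\mR$ is one-dimensional, $\TRic[\Tbarnabla_\mR] = \T 0$, so the splitting \eqref{eq_barRiem_sum} collapses to
\begin{equation}
    \stTbarRic = {\rm pr}_2^*\,\sTbarRic ,
\end{equation}
the pullback of the maximal spatial Ricci tensor along ${\rm pr}_2 : \mR\times\Sigma \to \Sigma$; equivalently $\stTbarRic(\T X,\T Y) = \sTbarRic({\rm pr}_{2*}\T X,{\rm pr}_{2*}\T Y)$ for all $\T X,\T Y$. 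Given a $\Sigma$-foliation with normal $1$-form $\T n$, the leaves have tangent distribution $V = \ker\T n$, so the spatial components in an adapted frame $\{\T f_i\}\subset V$ are $\stTbarRic(\T f_i,\T f_j) = \sTbarRic({\rm pr}_{2*}\T f_i,{\rm pr}_{2*}\T f_j)$, i.e.\ the pullback of $\sTbarRic$ by the linear map ${\rm pr}_{2*}|_V : V \to T\Sigma$. The whole claim therefore reduces to showing that this pullback is congruent to $\sTbarRic$ exactly when $\T n\notin\kerbarortho$.

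I would carry this out through the radical of $\stTbarRic$ as a bilinear form. From the block form \eqref{eq_Rbar_ij} that radical is precisely $\kerbar = \{\bare_0\}\oplus\ker(\sTbarRic)$, and $\stTbarRic$ descends to a nondegenerate form $\hat R$ on $T\CM/\kerbar$ which, since ${\rm pr}_{2*}$ kills $\bare_0$ and is the identity on $\ker(\sTbarRic)$, coincides with the nondegenerate form induced by $\sTbarRic$ on $T\Sigma/\ker(\sTbarRic)$. Now the hypothesis $\T n\notin\kerbarortho$ is, by the definition of $\kerbarortho$, exactly $\T n|_{\kerbar}\neq 0$, which for a hyperplane $V = \ker\T n$ is equivalent to $V + \kerbar = T\CM$, i.e.\ to the surjectivity of $V\hookrightarrow T\CM \to T\CM/\kerbar$. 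A short dimension count then gives $\dim(V\cap\kerbar) = \dim\ker(\sTbarRic)$, and one checks that the radical of the restriction $\stTbarRic|_V$ equals $V\cap\kerbar$ with nondegenerate quotient $\hat R$. Hence $\stTbarRic|_V$ has the same rank, nullity and signature as $\sTbarRic$, so choosing $\{\T f_i\}$ to push forward to a Milnor frame realizing $\sTbarRic$ makes the spatial components coincide with those of the maximal Ricci tensor; its uniqueness (\Cref{sec_max_Ric_max_h}) identifies it as that of a maximal metric of $\Sigma$.

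The hard part will be the geometries with $\ker(\sTbarRic)\neq\{0\}$, namely $\RHH$, $\Sols$ and $\RSS$. There the naive transversality condition $\T n(\bare_0)\neq 0$ is \emph{too strong}: a leaf may contain $\bare_0$ and still meet $\kerbar$ only in admissible $\ker(\sTbarRic)$-directions, and I must verify that the extra degeneracy created along $\bare_0$ is compensated by a matching loss in $\ker(\sTbarRic)$, so that the rank and signature are preserved --- this is precisely what the weaker hypothesis $\T n\notin\kerbarortho$ encodes. Finally, to upgrade ``congruent to $\sTbarRic$'' to ``is the maximal Ricci tensor of a maximal metric'' rather than merely a form of the right algebraic type, I would invoke properties \eqref{eq_property_kernel_Rbar} and \eqref{eq_topo_GR_eq_3}: every vector in $\kerbar$ is a collineation of $\stTbarRic$, so the induced spatial object is invariant along its degenerate directions and is genuinely the homogeneous maximal Ricci tensor of $\Sigma$.
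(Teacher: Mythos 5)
Your linear-algebra reduction is correct as far as it goes: the identity $\stTbarRic = {\rm pr}_2^*\,\sTbarRic$, the computation of the radical of $\stTbarRic|_V$ as $V\cap\kerbar$, and the equivalence of $\T n\notin\kerbarortho$ with $V+\kerbar = T\CM$ are all fine, and they do show that the induced bilinear form has, at every point, the same rank, nullity and signature as $\sTbarRic$. But this is strictly weaker than what the proposition asserts. The claim is that the induced object, as a tensor \emph{field} on a leaf diffeomorphic to $\Sigma$, is the maximal Ricci tensor of a maximal metric --- a specific locally homogeneous field, determined up to diffeomorphism --- not merely a field of symmetric forms of the right pointwise algebraic type. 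On $\RSS$, for instance, there are many rank-two positive-semidefinite symmetric fields that are not the Ricci tensor of any maximal metric. Your patch, ``choose $\{\T f_i\}$ to push forward to a Milnor frame realizing $\sTbarRic$,'' does close this gap wherever ${\rm pr}_{2*}|_V$ is an isomorphism: there ${\rm pr}_2$ restricted to the leaf is a diffeomorphism $\phi$ and the induced tensor is literally $\phi^*\sTbarRic = \sTRic[\phi^*\Tbarh]$, the Ricci tensor of the maximal metric $\phi^*\Tbarh$. But it collapses exactly in the ``hard part'' you flag: at points where $\T n(\bare_0)=0$ the map ${\rm pr}_{2*}|_V$ has a kernel, no frame of $V$ pushes forward to a frame of $T\Sigma$, and rank/signature bookkeeping cannot substitute for the missing identification of tensor fields. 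The concluding appeal to \eqref{eq_property_kernel_Rbar} and \eqref{eq_topo_GR_eq_3} points at the right tool but is a gesture, not an argument.

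The paper closes this gap with a global rather than pointwise step: for any two $\Sigma$-foliations with normals not in $\kerbarortho$, one can choose adapted bases with $\e_0\in\kerbar$, so that the $0\mu$ components of $\stTbarRic$ vanish in both and the spatial components are related purely by the spatial block $\Lambda_i{}^k$ of the basis change; since the leaves are diffeomorphic to $\Sigma$, the two induced tensors are related by a diffeomorphism of $\Sigma$. Exhibiting one foliation (the reference one from \eqref{eq_Rbar_ij}) whose induced tensor is the maximal Ricci tensor then finishes the proof, because ``maximal Ricci tensor of a maximal metric'' is a diffeomorphism-invariant notion. To repair your version you would need an analogous global ingredient, e.g.\ replacing ${\rm pr}_2|_L$ by the projection along a section of $\kerbar$ transverse to the leaf (which exists precisely because $\T n\notin\kerbarortho$) and using that sections of $\kerbar$ are collineations of $\stTbarRic$ to show this modified projection still pulls $\sTbarRic$ back to the induced tensor.
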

\begin{proof}
The proof proceeds in three steps: we show that $(i)$ for a given $\T n$-foliation of $\CM$ the spatial components of $\stTbarRic$ in any adapted basis induce the same tensor on the leaves of the foliation, 
$(ii)$  this induced tensor does not depend on the choice of foliation, as long as it is a $\Sigma$-foliation (i.e., whose leaves are diffeomorphic to $\Sigma$, see \Cref{sec_notation}) and is never ``tangent'' to the kernel of $\stTbarRic$, 
$(iii)$ this unique induced tensor is the maximal Ricci tensor of a maximal metric on $\Sigma$, hence finishing the proof.

For $(i)$, it suffices to consider two bases $\{\e^0 = \T n, \e^i\}$ and $\{\T{\hat e}^0 = \T n, \T{\hat e}^i\}$ adapted to the same $\T n$-foliation that are related to each other by $\e^\mu = \Lambda^\mu{}_\nu \hat\e^\nu$. This directly implies $\stbarR_{i j} = \Lambda_i{}^k \Lambda_j{}^\ell \iR{4}{\hat{\bar R}}_{k\ell}$, showing that the spatial components of $\stTbarRic$ in two different adapted bases are related by a spatial diffeomorphism. Thus, given an $\T n$-foliation, the spatial components of $\stTbarRic$ in an adapted basis induce a unique spatial tensor on that foliation, which we denote by $\iR{\T n\, }{\sTbarRic}$.\footnote{The fact that the spatial components induce a unique spatial tensor in any adapted basis is true for any covariant tensor, but this does not hold for contravariant tensors \cite{2012_GG}.}

For $(ii)$, we consider two different $\Sigma$-foliations defined by two  $1$-forms $\e^0$ and $\hat\e^0$, respectively. Assuming that $\e^0$, $\hat\e^0$ $\notin \kerbarortho$ ensures that at any point $x\in\CM$ there exists a vector in the kernel of $\stTbarRic$ that is not tangent to these foliations (we say that the foliations are not tangent to the kernel). In turn, this ensures that there exist two adapted bases $\{\e^0, \e^i\}$ and $\{\T{\hat e}^0, \T{\hat e}^i\}$, again related by $\e^\mu = \Lambda^\mu{}_\nu \hat\e^\nu$, such that the spatial components of $\stTbarRic$ in these bases satisfy the same transformation behavior as before. Therefore, the components of the two induced tensors $\iR{\e^0\, }{\sTbarRic}$ and $\iR{\hat\e^0\, }{\sTbarRic}$ are related by $\iR{\T{\hat e}^0}{\sbarR}_{ i j} = \Lambda_{i}{}^k \Lambda_{j}{}^\ell\,\iR{\e^0}{\hat\sbarR}_{k\ell}$. Since the leaves of the $\e^0$-foliation and the $\hat\e^0$-foliation are diffeomorphic (as we assumed that they are $\Sigma$-foliations), this implies that $\iR{\e^0\, }{\sTbarRic}$ and $\iR{\hat\e^0\, }{\sTbarRic}$ are diffeomorphic as tensors defined on $\Sigma$.

For $(iii)$, the definition~\eqref{eq_Rbar_ij} of $\stTbarRic$ ensures that there exists a $\Sigma$-foliation not tangent to the kernel and such that the induced spatial reference curvature on that foliation is the maximal curvature arising from a maximal metric on $\Sigma$, which we denoted by $\sTbarRic$, as defined in Table~\ref{tab_Thurston_Rbar_max}. By uniqueness showed in $(ii)$, this concludes the proof.
\end{proof}

\begin{definition}\label[definition]{def:betabar}
Consider a spacetime $(\CM, \T g)$ and a spacelike $\Sigma$-foliation with a unit normal $1$-form $\T n \not\in \kerbarortho$. The reference tilt $\T\tilt$ is the \emph{unique} vector $\T{\bar\beta}$ fulfilling
\begin{enumerate}
   \item $\T n(\T{\bar\beta}) = 0$ $($i.e., $\T\tilt$ is spatial$)$,
   \item $\T n^\sharp + N^{-1} \T{\bar \beta} \in \kerbar$ $($i.e., $\T\tilt$ measures the tilt of $\T n^\sharp$ with respect to $\kerbar)$,
   \item $\T{\bar\beta}^\flat \in \kerbarortho$ $($i.e., ensures uniqueness of $\T\tilt$ when $\dim\left[\kerbar\right] > 1)$,
\end{enumerate}
where $N$ is the lapse of the foliation.
\end{definition}

To summarize this section, given any $\Sigma$-foliation with orthogonal 1-form fulfilling $\T n \not\in \kerbarortho$, the reference Ricci tensor $\stTbarRic$ induces the same reference spatial (maximal) Ricci tensor $\sTbarRic$ on $\Sigma$ regardless of the choice of foliation. And given a Lorentzian metric with respect to which $\T n$ is timelike, then $\stTbarRic$ induces a reference tilt $\T{\bar \beta}$ on that foliation in the direction of non-zero reference spatial Ricci curvature. That tilt depends on the choice of foliation and Lorentzian metric, which is not the case for the induced spatial reference Ricci tensor $\sTbarRic$.

\subsection{3+1-equations of topo-GR}
\label{sec_3+1}

We consider a timelike $\Sigma$-foliation with unit $1$-form $\T n \not\in \kerbarortho$. We denote by $N$ the lapse of the foliation, and $\T h \coloneqq \T g + \T n \otimes \T n$ the induced spatial metric with the Levi-Civita connection $\T\D$ and associated Riemann and Ricci tensors $\sTRiem$ and $\sTRic$, respectively, and the scalar curvature $\sR := \tr_{\T h} \sTRic$.
We introduce the expansion tensor $\T\Theta \coloneqq \Lie{\T n^\sharp} \T h / 2$ of the foliation, its trace $\theta \coloneqq \tr_{\T h} \T \Theta$, and its traceless part, i.e., the shear tensor $\T\sigma \coloneqq \T\Theta - \theta \T h/3$.

Moreover, we introduce the $3+1$-variables of the matter energy-momentum tensor with respect to $\T n^\sharp$ as $\rho$ for the density, $p$ for the pressure, $\T q$ for the momentum and $\T \pi$ for the anisotropic stress as follows (cf., e.g., \cite[Section~5.1.2]{2012_GG})
\begin{equation}\label{eq:fluidvar}
    \rho
        := n^\mu n^\nu T_{\mu\nu}\,, \\
    p
        := \frac{1}{3} h^{\mu\nu} T_{\mu\nu} \,, \\
    q_\alpha
        := - h^\mu{}_\alpha n^\nu T_{\mu\nu}\,, \\
    \pi_{\mu\nu}
        := h^\alpha{}_\mu h^\beta{}_\nu T_{\alpha\beta}
            - p h_{\mu\nu}\,.
\end{equation}
The 3+1-projection of the reference curvature can be written as functions of the reference tilt $\T{\bar\beta}$ and the reference spatial curvature $\sTbarRic$ as follows
\begin{align}
    n^\mu n^\nu \stbarR_{\mu\nu}
        &= \frac{1}{N^2} \bar\beta^i\bar\beta^j {\sbarR}_{ij}\,, \\
    h^\mu{}_i n^\nu \stbarR_{\mu\nu}
        &= - \frac{1}{N} \bar\beta^j {\sbarR}_{ji}\,, \\
    h^\mu{}_i h^\nu{}_j \stbarR_{\mu\nu}
        &= {\sbarR}_{ij}\,.
\end{align}
The effective fluid variables in the spirit of \eqref{eq:fluidvar} with respect to the $\T n$-foliation and induced by the presence of $\stTbarRic$ in equation~\eqref{eq_topo_GR_eq_1} are
\begin{align}
	\kappa\,\bar \rho
	    &= \frac{1}{2}\left(\sbarR + \frac{1}{N^2}\tilt^i \tilt^j\sbarR_{ij}\right), \\
	\kappa\,\bar p
	    &= \frac{1}{6}\left(-\sbarR + \frac{3}{N^2}\tilt^i \tilt^j\sbarR_{ij}\right), \\
	\kappa\,\bar q_i
	    &= \frac{1}{N} \tilt^j\sbarR_{ij}\,, \\
	\kappa\,\bar \pi_{ij}
	    &= \sbarR_{ij} - \frac{1}{3}\sbarR\, h_{ij}\,,
\end{align}
where $\sbarR \coloneqq \tr_{\T h} \sTbarRic$.

Then, choosing a coordinate system adapted to the foliation, the $3+1$-decomposition of the first field equation~\eqref{eq_topo_GR_eq_1} of topo-GR gives 
\begin{align}
	2\kappa \rho + 2\Lambda
        &= \frac{2}{3}\theta^2
            - \left(\sigma_{ij} \sigma^{ij}
            + \frac{1}{N^2}\tilt^i \tilt^j \sbarR_{ij}\right)
            + \sR - \sbarR\,, \label{eq_ADM_Hamiltonian} \\
	\kappa q_i
        &= \frac{2}{3}\D_i \theta
            - \D^j\sigma_{ij}
            - \frac{1}{N}\tilt^j \sbarR_{ij}\,, \label{eq_ADM_Momentum} \\
    \frac{1}{N}\left(\partial_t - \Lie{\T\beta}\right) \theta
        &= -\frac{1}{3}\theta^2
            - \sigma_{ij} \sigma^{ij}
            - \frac{1}{N^2} \tilt^i \tilt^j \sbarR_{ij}
            + \frac{1}{N} \Delta N
            - \frac{\kappa}{2} \left(\rho + 3p\right)
            + \Lambda\,, \label{eq_ADM_theta} \\
	\frac{1}{N}\left(\partial_t - \Lie{\T\beta}\right) \sigma^i{}_j
	    &= -\theta \sigma^i{}_j
	        - h^{ik}\left(
	            \sR_{\langle kj\rangle}
	            - \sbarR_{\langle kj \rangle}
	            - \frac{1}{N}\D_{\langle k} \D_{j \rangle} N
            \right)
            + \kappa\,\pi^i{}_j\,. \label{eq_ADM_sigma}
\end{align}
The space part of the second field equation~\eqref{eq_topo_GR_eq_2} leads to
\begin{equation}\label{eq_ADM_tilt_dot}
\begin{aligned}
    \left(\partial_t - \Lie{\T\beta}\right)\left(\tilt^j \sbarR_{ij}\right)
        &= \left[\left(\partial_t - \Lie{\T\beta}\right)\ln N - \theta N\right] \tilt^j \sbarR_{ij}
            - \frac{1}{2} \D_i\left(\tilt^j \tilt^k \sbarR_{jk}\right) \\
            &\qquad - N^2 \left[\D^j \left(\sbarR_{ij}
            - \frac{1}{2} \sbarR h_{ij}\right)
            + \sbarR_{ij}\, \D^j \ln N\right] .
\end{aligned}
\end{equation}
We also need to consider \eqref{eq_topo_GR_eq_3} which implies $\Lie{\T n^\sharp + N^{-1}\T \tilt}\stTbarRic = 0$ due to \Cref{def:betabar}. Only the spatial part of this equation gives an independent equation, which is
\begin{align}
    \left(\partial_t - \Lie{\T\beta-\T\tilt}\right)\sbarR_{ij} = 0\,. \label{eq_ADM_Rbar_dot}
\end{align}
Note that the projection of~\eqref{eq_topo_GR_eq_2} along $\T n^\sharp$ results in \eqref{eq_ADM_tilt_dot} and \eqref{eq_ADM_Rbar_dot}.

The system of equations~\eqref{eq_ADM_Hamiltonian}-\eqref{eq_ADM_Rbar_dot} along with the definition of $\T\Theta$ correspond to the 3+1-equations of topo-GR. Compared to GR, two additional variables are present: the reference spatial curvature $\sTbarRic$ and the reference tilt $\T{\tilt}$. The former is fully fixed by its definition as being the maximal Ricci tensor of a maximal metric on $\Sigma$ (defined in \Cref{sec_max_Ric_max_h} and in Table~\ref{tab_Thurston_Rbar_max}), and by the staticity (along the kernel of $\stTbarRic$) imposed by~\eqref{eq_ADM_Rbar_dot}. The dynamics of the latter is determined by~\eqref{eq_ADM_tilt_dot}.

By using~\eqref{eq_ADM_Rbar_dot}, equation~\eqref{eq_ADM_tilt_dot} can be replaced by
\begin{equation}
\begin{aligned}
    &\sbarR_{ij}\left[\left(\partial_t - \Lie{\T \beta} - \tilt^k \D_k \right)\tilt^j - \left[\left(\partial_t - \Lie{\T\beta}\right)\ln N - \theta N\right] \tilt^j +  N^2 \D^j\ln N\right] \\
        &\quad= - \left(N^2 h^{jk} - \tilt^j \tilt^k \right) \left(\D_j\sbarR_{ki} - \frac{1}{2} \D_i \sbarR_{jk}\right).
\end{aligned}
\end{equation}
We also obtain the following equation by contracting \eqref{eq_ADM_tilt_dot} with $\T\tilt$ and using the trace of~\eqref{eq_ADM_Rbar_dot}
\begin{equation}\label{eq_ADM_BBRbar_dot}
\begin{aligned}
    \frac{1}{2}\left(\partial_t - \Lie{\T\beta}\right)\left(\tilt^i \tilt^j \sbarR_{ij}\right)
        &= \left[\left(\partial_t - \Lie{\T\beta}\right)\ln N - \theta N\right] \, \tilt^i \tilt^j \sbarR_{ij} \\
        &\quad - N^2 \tilt^i\left[\D^j\left(\sbarR_{ij} - \tfrac{1}{2}\sbarR\, h_{ij}\right) + \sbarR_{ij} \, \D^j\ln N\right]. 
\end{aligned}
\end{equation}
This equation determines the evolution of the term $\sTbarRic (\T \tilt, \T \tilt) = \tilt^i \tilt^j \sbarR_{ij}$ present in the Hamilton constraint~\eqref{eq_ADM_Hamiltonian} and the Raychaudhuri equation~\eqref{eq_ADM_theta}.

\subsection{3+1-equations for homogeneous solutions } \label{sec_topo_GR_3_1_homo}

For a spatially homogeneous solution of the topo-GR field equations, the lapse is constrained to be $N=1$ and any scalar is spatially constant. Furthermore, since $\Sigma$ is closed, any divergence of a locally homogeneous spatial vector field vanishes. We define the expansion rate $H \coloneqq \theta/3$, and include $\Lambda$ in the energy-momentum tensor. From the system~\eqref{eq_ADM_Hamiltonian}-\eqref{eq_ADM_Rbar_dot}, we obtain the following system of equations
\begin{align}
    1
        &= \Omega_\sigma + \Omega_\tilt + \OmegaDR + \Omega_\rho\,,
        \label{eq_homo_Hamiltonian}\\
	\kappa q_i
        &= - \D^j\sigma_{ij} -  \tilt^j \sbarR_{ij}\,, \label{eq_homo_Momentum}\\
  \partial_t H
        &= -(1+q)H^2\,, \label{eq_homo_theta}\\
	\left(\partial_t - \Lie{\T\beta}\right) \sigma^i{}_j
	    &= -3H \sigma^i{}_j
	        - h^{ik}\left(\sR_{\langle kj\rangle} - \sbarR_{\langle kj \rangle} \right)
	        + \kappa\,\pi^i{}_j\,, \label{eq_homo_sigma} \\
    \left(\partial_t - \Lie{\T\beta-\T\tilt}\right)\sbarR_{ij}
        &= 0\,, \label{eq_homo_Rbar_dot} \\
    \left(\partial_t - \Lie{\T\beta}\right)\left(\tilt^j \sbarR_{ij}\right)
        &= - 3H \tilt^j \sbarR_{ij} - \D^j\sbarR_{ij}\,,
    \label{eq_homo_tilt_dot}
\end{align}
where
\begin{equation}
\begin{aligned}
	\Omega_\rho
	    &\coloneqq \frac{\kappa\rho}{3 H^2} \,, & 
    \OmegaDR
        &\coloneqq -\frac{\sR-\sbarR}{6 H^2} \,, &
    q
        &\coloneqq 2\left(\Omega_\sigma + \Omega_\tilt\right)
            + \tfrac{1}{2} \left(3w + 1\right)\Omega_\rho \,, \\
	\Omega_\sigma
	    &\coloneqq \frac{\sigma_{ij}\sigma^{ij}}{6H^2} \,, &
    \Omega_\tilt
        &\coloneqq \frac{\tilt^i\tilt^j \sbarR_{ij}}{6H^2} \,, &
    w &\coloneqq \frac{p}{\rho} \,. \\
\end{aligned}
\end{equation}
Moreover, equations~\eqref{eq_homo_sigma} and \eqref{eq_homo_tilt_dot} lead to
\begin{align}
    \frac{1}{2} \, \partial_t\left(\sigma_{ij}\sigma^{ij}\right)
        &= -3H \sigma_{ij}\sigma^{ij}
            - \sigma^{ij}\left(\sR_{ij}
            - \sbarR_{ij}  \right)
            + \kappa\, \sigma^{ij} \pi_{ij}\,,  \label{eq_homo_sigma2_dot}\\
    \frac{1}{2} \, \partial_t\left(\tilt^i\tilt^j\sbarR_{ij}\right)
        &= -3H \tilt^i\tilt^j\sbarR_{ij}
            - \tilt^i \D^j\sbarR_{ij}\,. \label{eq_homo_BBRbar_dot}
\end{align}
When $\div_{\T h} \sTbarRic = 0$, \eqref{eq_homo_BBRbar_dot} implies $\Omega_{\tilt} \propto 1/\sfac^6$, with $\dot \sfac/\sfac \coloneqq H$, which yields that $\Omega_{\tilt}$ has a contribution similar to the shear parameter of Bianchi type I models in GR.

\subsection{General results for LSH solutions}\label{sec_LSH_General_results}

We present in this section some results for LSH solutions of topo-GR that are independent of choosing a precise model, i.e., a precise Thurston geometry on which the spatial manifold is modeled.

\subsubsection{Isotropization of LSH solutions}

We can derive a theorem similar to Wald's theorem \cite{1983_Wald} for isotropization of LSH solutions in GR in the presence of a positive cosmological constant.
\begin{theorem}\label[theorem]{thm_isotropization}
	 Consider an LSH solution of topo-GR with
	\begin{enumerate}
	 	\item a positive cosmological constant
	 	\item matter satisfying the dominant and strong energy conditions,
		\item no reference tilt,
		\item $\sR-\sbarR \leq 0$.
	\end{enumerate}
	If $H>0$ initially at time $t_0$, then $H>0$ for all time $t \geq t_0$, and $\T\sigma \overset{t\rightarrow\infty}{\longrightarrow} \T 0$.
	Additionally, for non-tilted perfect fluids, we have $\sTRic - \sTbarRic \overset{t\rightarrow\infty}{\longrightarrow} \T 0$.
\end{theorem}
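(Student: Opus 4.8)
The plan is to adapt Wald's argument \cite{1983_Wald} to the homogeneous $3+1$ system of topo-GR, keeping $\Lambda$ explicit rather than folding it into the matter. Specialising \eqref{eq_ADM_Hamiltonian} and \eqref{eq_ADM_theta} to the homogeneous, non-tilted case ($N=1$, $\T\tilt=\T 0$, spatially constant scalars) gives
\begin{align}
    6H^2 &= 2\kappa\rho + 2\Lambda + \sigma_{ij}\sigma^{ij} - (\sR - \sbarR), \\
    \partial_t H &= -H^2 - \tfrac{1}{3}\sigma_{ij}\sigma^{ij} - \tfrac{\kappa}{6}(\rho + 3p) + \tfrac{\Lambda}{3},
\end{align}
with $\rho,p$ the matter variables. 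In the first line every right-hand term is non-negative—$\sigma_{ij}\sigma^{ij}\geq 0$, $\rho\geq 0$ by the dominant energy condition, and $-(\sR-\sbarR)\geq 0$ by hypothesis $(iv)$—so $H^2\geq\Lambda/3>0$. As $H$ is continuous with $|H|\geq\sqrt{\Lambda/3}$ it cannot vanish; hence $H(t_0)>0$ forces $H>0$ for all $t\geq t_0$, giving the first claim.

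Next, the strong energy condition gives $\rho+3p\geq 0$, so the second equation yields $\partial_t H\leq -H^2+\Lambda/3\leq 0$ (the last inequality by the bound just obtained). Writing $v:=H-\sqrt{\Lambda/3}\geq 0$ and using $H+\sqrt{\Lambda/3}\geq 2\sqrt{\Lambda/3}$, this becomes $\partial_t v\leq -2\sqrt{\Lambda/3}\,v$, whence Grönwall's lemma gives $v(t)\leq v(t_0)\,e^{-2\sqrt{\Lambda/3}\,(t-t_0)}$ and $H\to\sqrt{\Lambda/3}$ exponentially. Substituting back into the Hamiltonian constraint, $\sigma_{ij}\sigma^{ij}\leq 6(H^2-\Lambda/3)=6\,v\,(H+\sqrt{\Lambda/3})\to 0$, so $\T\sigma\to\T 0$ (in fact exponentially), completing the main statement.

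For the final assertion I would treat $\sTRic-\sTbarRic$ in trace and tracefree parts. The trace follows from the constraint: $0\leq -(\sR-\sbarR)=6H^2-2\kappa\rho-\sigma_{ij}\sigma^{ij}-2\Lambda\leq 6(H^2-\Lambda/3)\to 0$, so $\sR-\sbarR\to 0$. For the tracefree part I would use the perfect-fluid shear equation \eqref{eq_homo_sigma} (with $\T\pi=\T 0$), solved algebraically as $h^{ik}(\sR_{\langle kj\rangle}-\sbarR_{\langle kj\rangle})=-3H\sigma^i{}_j-(\partial_t-\Lie{\T\beta})\sigma^i{}_j$. The first term vanishes since $H$ is bounded and $\T\sigma\to\T 0$; once the derivative term is shown to vanish, the tracefree part tends to zero, which with $\sR-\sbarR\to 0$ yields $\sTRic-\sTbarRic\to\T 0$.

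The main obstacle is exactly the vanishing of $(\partial_t-\Lie{\T\beta})\T\sigma$: the convergence $\T\sigma\to\T 0$ does not by itself bound its rate of change. The remedy I would pursue uses the exponential decay of $\T\sigma$ established above together with a passage to expansion-normalised variables, in which the homogeneous system is asymptotically autonomous with $H\to\sqrt{\Lambda/3}$; a LaSalle invariance-principle (or Barbalat-type) argument then forces both the normalised shear and its time derivative to zero, so that the forcing term $h^{ik}(\sR_{\langle kj\rangle}-\sbarR_{\langle kj\rangle})$ in the stable ($-3H<0$) shear equation must itself vanish asymptotically.
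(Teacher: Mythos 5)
Your argument is correct and is essentially the paper's proof: the paper simply observes that with $\T\tilt=\T 0$ the Hamiltonian constraint and Raychaudhuri equation coincide with those of GR up to the replacement $\sR\to\sR-\sbarR$, and then invokes Wald's theorem, whose key input $H^2\geq\Lambda/3$ is exactly what your hypothesis $(iv)$ supplies; your Gr\"onwall step is just Wald's argument written out. The one place you diverge is the final claim, and there you are in fact \emph{more} careful than the paper: the paper asserts that $\sTRic-\sTbarRic\to\T 0$ ``follows directly'' from \eqref{eq_homo_sigma} and $\T\sigma\to\T 0$, which, as you correctly point out, tacitly requires $(\partial_t-\Lie{\T\beta})\T\sigma\to\T 0$ as well (the trace part you already control from the constraint, since $0\leq-(\sR-\sbarR)\leq 6(H^2-\Lambda/3)\to 0$). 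Your proposed Barbalat/LaSalle closure is a reasonable way to supply the missing step; alternatively, note that in the orthonormal (Milnor) frame of \Cref{sec_all_models} the tracefree part of $\sTRic-\sTbarRic$ is, model by model, a quadratic polynomial in the structure constants $n_i$ that vanishes exactly where the (exponentially decaying, sign-definite) trace does, and the evolution equations $\dot n_i=-n_i(H+\dots)$ with $H\to\sqrt{\Lambda/3}>0$ keep the $n_i$ bounded, which closes the argument without an invariance-principle appeal. Either way, this is a refinement of, not a departure from, the paper's route.
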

\begin{proof}
Assuming $\T\tilt = \T 0$, the Hamilton constraint~\eqref{eq_homo_Hamiltonian} and the Raychaudhuri equation~\eqref{eq_homo_theta} differ  from those of GR only by the presence of $\sbarR$. The proof is therefore very similar to that of Wald \cite{1983_Wald}. In \cite{1983_Wald}, the necessary condition for obtaining $\T\sigma \overset{t\rightarrow\infty}{\longrightarrow} \T 0$ was $H^2 \geq \cc/3$. From the Hamiltonian constraint of GR, this is ensured when $\sR \leq 0$. In topo-GR, $H^2 \geq \cc/3$ is ensured by the Hamilton constraint~\eqref{eq_homo_Hamiltonian} when $\sR - \sbarR \leq 0$. For non-tilted perfect fluids, we have $\T \pi = \T 0$, and $\sTRic - \sTbarRic \overset{t\rightarrow\infty}{\longrightarrow} \T 0$ follows directly from equation~\eqref{eq_homo_sigma} and $\T\sigma \overset{t\rightarrow\infty}{\longrightarrow} \T 0$.
\end{proof}
\begin{remark}
In GR, the fourth condition in \Cref{thm_isotropization} is replaced by $\sR \leq 0$. This is why for LSH solutions modeled on $\mS^3$ (i.e., Bianchi IX solutions) or $\mR \times \mS^2$ (i.e., Kantowski--Sachs solution), a positive cosmological constant does not ensure isotropization of the solution asymptotically. In topo-GR, isotropization is ensured if $\sR-\sbarR \leq 0$. As will be shown in Section~\ref{sec_all_models}, this will be fulfilled by $\BIX$ and Kantowski--Sachs metrics. However, surprisingly, for $\BII$ solutions, the condition is not ensured, unless the LRS condition is assumed~(\Cref{sec_Nil}).
\end{remark}

\subsubsection{Shear-free LSH solutions}\label{sec_shear_free}

In topo-GR, shear-free ($\T\sigma = \T 0$), non-tilted perfect fluids ($\T q = \T 0 = \T\pi$) solutions exist for all types of topologies. Indeed, for such solutions \eqref{eq_homo_Momentum} and \eqref{eq_homo_sigma} imply $\sTRic = \sTbarRic$ and $\quad \T\tilt = 0$, and the field equations become
\begin{equation}\label{eq_shear_free_eq}
    \partial_t H = -(1+q)H^2\,, \\ \Omega_\rho = 1 \,, \\ q = \frac{1}{2} (3w + 1) \,.
\end{equation}
The shear-free condition implies that the time dependence of the spatial metric $\T h$ in an \emph{Eulerian basis}, i.e., $\T\partial_t = \T n^\sharp$, is solely given by a scale factor. In turn, this implies that $\sTRic$ is time independent in the Eulerian basis, automatically fulfilling \eqref{eq_homo_Rbar_dot}. The spacetime metric has the form
\begin{equation}
    \T g = -\T\dd t\otimes \T\dd t + \sfac^2(t) \, h_{ij} (\T x) \T\dd x^i \otimes \T \dd x^j \, .
\end{equation}
The condition $\sTRic = \sTbarRic$ forces $\T h$ to be a maximal metric, unless $\Sigma$ is modeled on the $\SLRRs$-geometry. In that case, $\T h$ can have no continuous isotropies, and therefore is not necessarily maximal. This is detailed in \Cref{app_SL2R}. The field equations \eqref{eq_shear_free_eq} for the scale factor are equal to those of a flat-FLRW model in GR.

The existence of shear-free, non-tilted perfect fluid solutions for all models in topo-GR contrasts with GR for which this type of solutions only exists for Euclidean, spherical, or hyperbolic topologies. In these cases, the solutions are isotropic, as for topo-GR, but the Hamilton constraint in GR still features the curvature parameter. Shear-free solutions in other topologies in GR require the presence of a fluid anisotropic stress \cite{1993_Mimoso_et_al} due to the anisotropic Ricci curvature sourcing the shear. Generally,   in these GR models, the anisotropic stress is just \emph{ad-hoc}, chosen to be equal to the anisotropic part of the Ricci tensor, but without a physical  model of matter giving rise to that stress (see, e.g., \cite{2011_Koivisto_et_al} for a proposition of such a physical model). In particular, when  perturbations of the shear-free LSH metric are considered, the anisotropic stress remains a pure background quantity~\cite{2012_Pereira_et_al, 2015_Pereira_et_al, 2016_Pereira_et_al, 2017_Franco_et_al}. Additionally, to get the shear-free solution in GR, the standard approach is to assume that the ad-hoc matter is solely characterized by the anisotropic stress (equal to the anisotropic Ricci curvature), only modifying the evolution equation for the shear, but not the other $3+1$-equations. In topo-GR, the effective density associated with $\stTbarRic$ modifies the Hamilton constraint. As a result, the evolution equations for expansion of shear-free solutions in topo-GR do not feature the scalar curvature (as can be seen in \eqref{eq_shear_free_eq} above), while they do so in GR. The direct consequence is that the evolution of the scale factor of shear-free solutions in topo-GR is equivalent to that of flat-FLRW solutions of GR for any BKS type, while in GR it depends on the model, and can feature a bounce or a~recollapse.

As a subcase of these solutions, we can consider $\rho = 0 = p$, leading to (shear-free) static-vacuum solutions, for which the metric is given by
\begin{equation}
	\T g = -\T\dd t\otimes \T\dd t + h_{ij} (\T x) \T\dd x^i \otimes \T \dd x^j \, ,
\end{equation}
where, again, $\T h$ is maximal, except for the $\SLRRs$-geometry (Appendix~\ref{app_SL2R}). The existence of SH static-vacuum solutions for any topology contrasts strongly with GR, for which this kind of solutions is only possible for Euclidean topologies with the Minkowski~metric. 

One interesting consequence of the existence of these static solutions is for inflation. In \cite{2024_Vigneron_et_al_b} it was shown that it allows for a well defined Bunch--Davies vacuum for spherical and hyperbolic topologies. This is because in the large-Hubble-radius limit, the perturbation equations become wave equations on a static background (with the Mukhanov--Sasaki equation in particular), hence allowing for canonical quantization. In GR, this is not possible exactly because LSH static-vacuum solutions do not exist for spherical and hyperbolic topologies (see, e.g., \cite{2019_Handley} and references therein). Only after additional (fine-tuned) assumptions on the curvature scale, canonical quantization becomes possible in the large-Hubble-radius limit. In this sense, the existence of static-vacuum solutions in topo-GR makes the construction of inflation and the choice of initial conditions simpler and more natural than in GR. As discussed in \cite{2024_Vigneron_et_al_b}, this result can be viewed as an additional (model-dependent) motivation for the topo-GR theory.

In this context, the fact that in topo-GR static-vacuum solutions exist for all types of topologies suggests that canonical quantization of the inflaton should be feasible in the same way demonstrated in \cite{2024_Vigneron_et_al_b}. If this is true, this would again contrast with GR, for which anisotropic topologies (i.e., different from Euclidean, spherical or hyperbolic) also suffer from the same limitations as spherical and hyperbolic topologies when it comes to quantizing the inflaton \cite{2012_Pereira_et_al,2015_Pereira_et_al}.

\section{Systems of equations in all models}\label{sec_all_models}

While in Section~\ref{sec_topo_GR_3_1_homo} we provided the general form of the 3+1-equations for LSH solutions, this system can be either simplified or put in a different form (e.g., with the orthonormal approach) when a specific type of topology is considered. Thus, the aim of this section is to compute the system of equations of LSH solutions for each Thurston families of topologies. The method we will use is detailed in \Cref{sec_Program}. The systems of equations will be expressed using the orthonormal approach (cf.~\cite{1969_Ellis_et_al}) detailed in \Cref{sec_orthonormal_approach}. The systems for each geometry are provided in \Cref{sec_E3} to \Cref{sec_RS2}. Finally, in \Cref{sec_discussion_exact}, we revisit for each model the general results obtained in \Cref{sec_LSH_General_results}.

\subsection{Relation between the physical metric and the reference curvature} \label{sec_Program}

For all models, a prerequisite to be able to compute the field equations is to relate the spatial reference Ricci curvature $\sTbarRic$ with the spatial metric $\T h$. More precisely, given a basis in which $\T h$ is known (e.g., an orthonormal basis), we need to determine the components of $\sTbarRic$, using, in particular, its definition as maximal Ricci tensor. The goal of this section is to present the procedure we will follow to tackle this problem. In this section, the maximal and minimal groups $\Gmax$ and $\Gmin$ are defined as abstract groups, provided in \Cref{tab_Gmax_Gmin}, not belonging to the diffeomorphism group of~$\tilde\Sigma$.

By definition, $\sTbarRic$ is constructed from a maximal Riemannian metric, and therefore there exists a transitive subgroup $\Gbarmax \subseteq \Sym(\tilde\Sigma, \sTbarRic)$ such that $\Gbarmax \cong \Gmax$, where $\Gmax$ is one of the maximal groups. Additionally,  there exists a transitive subgroup  $\Gbarmin \subseteq \Gbarmax$ such that $\Gbarmin \cong \Gmin$, where $\Gmin$ is one of the minimal groups compatible with $\Gmax$.

\begin{remark}
\label[remark]{rem_Gbarmax_and_metric}
    There might exist other subgroups $\Gbarmax' \subseteq \Sym(\tilde\Sigma, \sTbarRic)$ and $\Gbarmin' \subseteq \Gbarmax'$ such that $\Gbarmax' \cong \Gmax$ and $\Gbarmin' \cong \Gbarmin$, but with $\Gbarmax' \not= \Gbarmax$ and $\Gbarmin' \not= \Gbarmin$. The groups $\Gbarmax'$ and $\Gbarmax$ induce two different families of maximal metrics $\{\Tbarh'\}$ and $\{\Tbarh\}$ on $\tilde\Sigma$, both yielding the same Ricci tensor $\sTbarRic$. For example, in the case of the $\SLRRs$-geometry, we have $\Sym(\tilde\Sigma,\sTbarRic)_0 \cong \SLRR\times\SLRR$ (where the subscript $_0$ refers to the connected components to the identity of the group), and there are two families of maximal metrics giving the same Ricci tensor, left and right $\BVIII$-invariant metrics, respectively. The important point of this remark is that for any subgroups $\Gbarmax$ and $\Gbarmin$ with the above properties, we can always define a maximal metric $\Tbarh$ such that $\sTbarRic = \sTRic[\Tbarh]$ and for which $\Isom(\tilde\Sigma,\Tbarh) = \Gbarmax$ and $\Gbarmin \subseteq \Isom(\tilde\Sigma,\Tbarh)$.
\end{remark}

The physical metric $\T h$ is defined to be a minimal metric. Therefore, there exists a transitive subgroup $\Ghatmin \subseteq \Isom(\tilde\Sigma, \T h)$ such that $\Ghatmin \cong \Gmin$.
Thus, by definition, both $\sTbarRic$ and $\T h$ are invariant by a transitive subgroup of diffeomorphisms (of $\tilde\Sigma$) isomorphic to a minimal group, denoted by $\Gbarmin$ and $\Ghatmin$, , respectively, and for a given choice of Thurston geometry we have $\Gbarmin \cong \Ghatmin$. However, since $\Gbarmin$ and $\Ghatmin$ are isomorphic as subsets of $\Diff(\tilde\Sigma)$, in general, they are equal only up to a diffeomorphism. A direct consequence, taking the example of Bianchi groups, is that, in general, a left invariant basis for $\Ghatmin$ is not necessarily a left invariant basis for $\Gbarmin$, and vice versa. Therefore, in a basis where the components of $\T h$ are spatially constants, the components of $\sTbarRic$ are not necessarily spatially constants, even though the two tensors are locally homogeneous.  Therefore, without additional constraints on the relation between the symmetries of $\sTbarRic$ and $\T h$, a full diffeomorphism freedom remains between the components of these two tensors in a given basis. As presented just below, those constraints come from the field equations.

We assume now that the transitive subgroup $\Ghatmin \subset \Diff(\tilde\Sigma)$ making the spatial metric locally homogeneous also makes the energy-momentum tensor locally homogeneous, i.e., $\Ghatmin \subseteq \Sym(\tilde\Sigma, \rho, p, \T q, \T\pi)$.  This is the standard assumptions for LSH models in cosmology. Then, from the properties $\Ghatmin \subseteq \Sym(\tilde\Sigma, \T \sigma)$ and $\GhatBKS \subseteq \Sym(\tilde\Sigma, \Lie{\T n^\sharp} \T \sigma)$ (see \Cref{sec_Isom_Bianchi}), the equations \eqref{eq_homo_Hamiltonian} and \eqref{eq_homo_BBRbar_dot} directly imply $\Ghatmin \subseteq \Sym(\tilde\Sigma,\sTbarRic)$, leading to
\begin{equation}\label{eq_GBKS_GBKS}
    \Ghatmin = \Gbarmin \,.
\end{equation}
Therefore, from the remark~\ref{rem_Gbarmax_and_metric}, we can consider a maximal metric $\Tbarh$ such that $\sTbarRic = \sTRic[\Tbarh]$  which is invariant by the group $\Ghatmin$. We shall show in Sections~\ref{sec_E3} to \ref{sec_RS2} that, except for the $\Nils$-geometry, this property is sufficient to fully determine the components of $\sTbarRic$ as functions of those of $\T h$. Note that \emph{a priori} $\Isom(\tilde\Sigma, \T h) \subseteq \Isom(\tilde\Sigma, \Tbarh)$ does not hold and therefore $\Isom(\tilde\Sigma, \T h) \subseteq \Sym(\tilde\Sigma, \sTbarRic)$ does not hold either. Hence, as for the shear tensor (see \Cref{sec_Isom_Bianchi}), in models where the spatial metric has more continuous symmetries than that described by $\Ghatmin$, those symmetries are not necessarily symmetries of the reference Ricci tensor. We discuss this in \Cref{sec_discussion_Nil}.

Now, as mentioned in the beginning of this section, the goal is to determine the components of $\sTbarRic$ with respect to those of $\T h$. For Bianchi metrics, this amounts to answering the following question.
\begin{question}\label[question]{q_milnor}
What is the expression of~$\sTbarRic$
in an orthonormal Milnor basis for $\T h$?
\end{question}
\noindent
To answer \Cref{q_milnor}, a three-step program will be followed:
\begin{enumerate}
	\item  We consider a left-invariant basis $\{\bar\e_i\}$, in which the components of $\sTbarRic$, denoted by $\sbarR_{\bar i\bar j}$, are known, and  a unit Milnor basis in which $\sbarR_{\bar i\bar j}$ do not feature free parameters. As explained in \Cref{sec_max_Ric_max_h}, these components can easily be obtained from the third column of Table~\ref{tab_Thurston_Rbar_max} with a rescaling of the basis vectors.
	\item We determine the general transformation matrix $\Lambda$, with components $\Lambda_i{}^{\bar i}$, between the basis $\{\bar\e_i\}$ and a second Milnor basis $\{\e_i\}$ defined to be orthonormal with respect to $\T h$, i.e., 
	\begin{equation}\label{eq:etoebartrafo}
	    \e_i = \Lambda_i{}^{\bar i} \Tbar e_{\bar i} \,.
	\end{equation}
	Due to \eqref{eq_GBKS_GBKS}, $\Lambda$ is space-independent. We will decompose this matrix into an automorphism $A$ of the Lie algebra of $\{\bar\e_i\}$, and a diagonal transformation $H$. (Note that this decomposition will not be unique in general.)
	\item Using $\Lambda$, we compute the components of $\sTbarRic$ in the orthonormal basis $\{\e_i\}$. We denote them by ${\sbarR}_{ij}$. We thus have
    \begin{equation}\label{eq_Goal}
        \sbarR_{ij} = \Lambda_i{}^{\bar i} \Lambda_j{}^{\bar j} \sbarR_{\bar i\bar j}\,.
    \end{equation}
\end{enumerate}
Once this program is fulfilled, the field equations in the orthonormal basis can be computed, using the orthonormal approach detailed in the next section.

For the $\RSS$-geometry, no $\GRSS$-invariant basis can be defined. For this geometry we will directly use a canonical coordinate system to determine the components of $\sTbarRic$ and~$\T h$, and to express the field equations.

\subsection{Orthonormal approach for spacetime Bianchi metrics}
\label{sec_orthonormal_approach}

In Sections~\ref{sec_E3} to \ref{sec_RS2}, we derive and discuss the field equations for SH solutions of topo-GR for each Thurston geometry. To express the systems of equations for the type A Bianchi models, i.e., for $\mE^3$, $\mS^3$, $\Sols$, $\Nils$, and $\SLRRs$ geometries, we will consider the orthonormal approach which we describe in this section.

In the orthonormal approach introduced in \cite{1969_Ellis_et_al}, the components of tensor fields are expressed with respect to an orthonormal spacetime basis~$\{\e_0, \e_i\}$ having the following properties:
\begin{enumerate}
	\item the basis is adapted to the foliation of homogeneity, i.e., $\T{e}_0 = \T n^\sharp$ and $\T n(\e_i) = 0$,
	\item the spatial vector basis $\{\e _i\}$ is an orthonormal Milnor basis for the spatial metric $\T h$ with the algebra~\eqref{eq_Bianchi_n_i_a},
	\item the spatial vector basis $\{\e _i\}$ is time dependent, i.e., $[\e_0, \e_i] \not=0$. The time dependence is decomposed in expansion, with the expansion tensor $\T\Theta$, and rotation, with the \emph{Fermi rotation coefficients} (i.e., rotation of the triads with respect to a Fermi-propagated frame) defined as $\Omega^k \coloneqq \epsilon^{kij} \T h\left(\e_i, [\e_0, \e_j]\right) / 2$.\footnote{While $\T\Theta$ is an intrinsic property of the foliation of homogeneity, $\T \Omega$ is a property of the orthonormal basis, and therefore is not necessarily defined globally on $\Sigma$.} This leads to the relation
\begin{align}
	[\e_0, \e_i] = \left(-\Theta^k{}_i + \epsilon^k{}_{i\ell} \Omega^\ell\right)\T{ e}_k \,.
\end{align}
\end{enumerate}
Note that in this basis the spatial indices can be raised and lowered arbitrarily as the spatial metric components are the identity.

Further, we introduce the usual Wainwright--Hsu variables $\sigma_+$ and $\sigma_-$ defined from the diagonal components of the shear as (cf.~\cite{1989_Wainwright_et_al})
\begin{equation}
	\sigma_+ \coloneqq \frac{1}{2}\left(\sigma_{22} + \sigma_{33}\right), \\
	\sigma_- \coloneqq \frac{1}{2\sqrt{3}}\left(\sigma_{22} - \sigma_{33}\right).
\end{equation}

The Jacobi identities for the basis $\{\e_0,\e_i\}$ imply a direct relation between the off-diagonal components of the shear and the Fermi rotation coefficients
\begin{equation}
	\begin{aligned}
	(n_1+n_2) \sigma_{12} = (n_2-n_1)\Omega_3 \,, \\
	(n_2+n_3) \sigma_{23} = (n_3-n_2)\Omega_1 \,, \\
	(n_3+n_1) \sigma_{31} = (n_1-n_3)\Omega_2 \,,
	\end{aligned}
\end{equation}
and provides the evolution equations for
\begin{equation}
\begin{aligned}
	\dot n_1
		&= - n_1 \left(H + 4\sigma_+ \right) , \\
	\dot n_2
		&= - n_2 \left(H - 2\sigma_+ - 2\sqrt{3}\, \sigma_-\right) , \\
	\dot n_3
		&= - n_3\left(H - 2\sigma_+ + 2\sqrt{3}\, \sigma_- \right) ,
\end{aligned}
\end{equation}
where the over dot will denote the time derivative from now on.
Therefore, the set of variables coming from the metric and the orthonormal basis are in general
\begin{equation}\label{eq_set_orthonormal_metric}
	\left\{
	    H\,,\  n_1\,, \ n_2\,, \ n_3\,, \ \sigma_+\,, \ \sigma_-\,, \
	    \Omega_1\,, \ \Omega_2\,, \ \Omega_3
    \right\}.
\end{equation}
To this set one should add the fluid variables
\begin{equation}\label{eq_set_orthonormal_fluid}
    \left\{
	    \rho\,, \ p\,, \ q_1\,,\  q_2\,, \ q_3\,, \
	    \pi_+ \coloneqq \tfrac{1}{2}\left(\pi_{22} + \pi_{33}\right), \
	    \pi_- \coloneqq \tfrac{1}{2\sqrt{3}}\left(\pi_{22} - \pi_{33}\right), \
	    \pi_{12}\,, \ \pi_{23}\,, \ \pi_{31}
	\right\},
\end{equation}
and, in topo-GR, one should add the reference Ricci curvature and reference tilt variables
\begin{equation}\label{eq_set_orthonormal_Rbar}
    \left\{
	    \sbarR_{11}\,, \ \sbarR_{22}\,, \ \sbarR_{33}\,, \ \sbarR_{12}\,, \ \sbarR_{23}\,, \
	    \sbarR_{31}\,, \ \tilt_1\,,\  \tilt_2\,, \ \tilt_3
	\right\} .
\end{equation}
Using the expression of the Einstein equation with respect to these variables (cf., e.g., \cite{1969_Ellis_et_al,1997_Wainwright_et_al_BOOK}), the momentum constraint~\eqref{eq_homo_Momentum} and the evolution equation~\eqref{eq_homo_sigma} for the shear become, respectively,
\begin{align}
    \epsilon_{ijk} n^{j\ell}\sigma_\ell{}^k
        &= \kappa q_i + \tilt^j\sbarR_{ji}\,, \label{eq_homo_Momentum_ortho}\\
    \dot \sigma_{ij}
        &= -3H\sigma_{ij}
            + 2 \epsilon_{k\ell \langle i}\Omega^k \sigma_{j\rangle}{}^\ell
            - \left(\sR_{\langle ij \rangle}
            - \sbarR_{\langle ij \rangle}\right)
            + \kappa \pi_{ij}\,.\label{eq_homo_sigma_ortho}
\end{align}
Additionally, equations~\eqref{eq_homo_Rbar_dot} and~\eqref{eq_homo_tilt_dot} present in topo-GR become, respectively,
\begin{align}
    \partial_t \left(\tilt^j \sbarR_{ji}\right)
        &= -4 H \tilt^j \sbarR_{ji}
            - \tilt^j\sbarR_{jk}\sigma^k{}_i
            + \epsilon_{ij}{}^k \Omega^j \tilt^\ell \sbarR_{\ell k}
            + \epsilon_{ijk} n^{j\ell}\sbarR_\ell{}^k\,, \label{eq_homo_tilt_dot_ortho} \\
    \dot \sbarR_{ij}
        &= -2 H \sbarR_{ij}
            - 2 \, \sbarR_{k(i} \sigma^k{}_{j)}
            + 2\,  \epsilon_{k\ell (i}\left(\Omega^k \delta^{\ell m}
            - \tilt^k n^{\ell m} \right)\sbarR_{j)m}\,. \label{eq_homo_Rbar_dot_ortho}
\end{align}

The fluid variables~\eqref{eq_set_orthonormal_fluid} depend on the equation of state, which, in general, is not constrained by the choice of Bianchi metric. However, the reference curvature variables~\eqref{eq_set_orthonormal_Rbar} highly depend on the choice of Bianchi metric through their dependence on the Thurston geometry on which $\Sigma$ is modeled. In particular, we will show in the next sections that the off-diagonal components of $\sTbarRic$ are always zero for Bianchi type A (apart from $\Nils$-geometries, i.e., Bianchi II models) and that the diagonal components are uniquely expressed as functions of $n_1$, $n_2$, and $n_3$ (see Table~\ref{tab_R_Rbar}). Therefore, for these models the only new dynamical variable coming from the reference Ricci curvature ${\stTbarRic}$ will be $\T \tilt$. We will show that for the $\mE^3$ (for Bianchi I), $\mH^3$, $\RHH$, $\RSS$ geometries, the general solution imposes $\T q = \T 0 = \T\tilt$. For the other models, we will rather assume that relation, and we let the study of models with $\T q \neq \T 0 \neq \T\tilt$ for a future work.

\begin{remark}
    The orthonormal approach is more often considered with variables normalized by the Hubble rate $H$, especially for a dynamical analysis of the equations. However, since we will discuss $H=0$ solutions (see \Cref{sec_discussion_exact}), we will  consider the non-normalized variables.
\end{remark}
In situations where the shear is transverse, i.e., $\div_{\T h} \T\sigma = \T 0$, the shear is diagonal and $\Omega^i = 0$~\cite{1969_Ellis_et_al}, unless there are degeneracies in values of $n_i$. In that situation, off-diagonal components of $\sigma_{ij}$ can still be non-zero, but they can be chosen to vanish using a spatial rotation freedom of the orthonormal Milnor basis $\{\e_i\}$. This also holds for setting $\Omega^i$ to zero \cite{1969_Ellis_et_al}. Therefore, when the shear is transverse, the set of dynamical variables coming from the metric and the orthonormal basis reduces to
\begin{equation}\label{eq_set_orthonormal_no_tilt}
	\left\{ H\,, \ n_1\,, \ n_2\,, \ n_3\,, \ \sigma_+\,, \ \sigma_- \right\}.
\end{equation}
In topo-GR, this situation holds when $\T q = \T 0 = \T\tilt$, since $\div_{\T h} \T\sigma = \T 0$ is imposed by the momentum constraint~\eqref{eq_homo_Momentum}.

Let us emphasize an important point: because $\sTbarRic$ depends on the Thurston geometry on which $\Sigma$ is modeled, and therefore depends on the Bianchi model, then, contrary to GR, the systems of equations as functions of the structure constants $n_1$, $n_2$, and $n_3$ will not be the same among different Bianchi models in topo-GR.

\begin{table}[t]
\centering\small
\caption{\small Spatial reference Ricci curvature given, for Bianchi metrics, in the orthonormal Milnor basis of the physical metric $\T h$, and, for KS metrics, in canonical coordinates. \label{tab_R_Rbar}}
\renewcommand{\arraystretch}{1.5}
\newcommand{\vspacetable}[1]{\multicolumn{2}{c}{} \vspace{#1}\\}

\begin{tabular}{lcc}

\toprule
\centering \makecell[l]{{\bf Maximal }\\{\bf geometry}}
&  $\sTbarRic$
& ${\rm tr}_{\T h}\left(\sTRic - \sTbarRic\right)$
\arraybackslash\\
\midrule

$\mE^3$
& $\T 0 = \sTRic$
& $\begin{dcases} \BI: \ \ 0 \\ \BVIIn: \ \ \leq 0\end{dcases}$ \\

$\mS^3$
&   \addstackgap{$\tfrac{1}{2}\left(\begin{smallmatrix}
                {n_2 n_3} & 0 & 0\\
                0 & {n_3 n_1} & 0\\
                0 & 0 & {n_1 n_2}
            \end{smallmatrix}\right)$}
& $\leq 0$ \\

$\mH^3$
& \makecell[cc]{\addstackgap{$\left(\begin{smallmatrix}
                -2 a^2 & 0 & 0\\
                0 & -2 a^2  & 0\\
                0 & 0 & -2 a^2
            \end{smallmatrix}\right) = \sTRic$}}
& $0$ \\

$\RHH$
& \addstackgap{$\left(\begin{smallmatrix}
            -4a^2  & 0 & 0 \\
            0 & -2a^2 & 2a^2 \\
            0 & 2a^2 & -2a^2
            \end{smallmatrix}\right) = \sTRic$}
& $0$ \\

\vspacetable{-10pt}
$\Nils$
& \makecell[l]{for $\div_{\T h} \, \sTbarRic = \T 0$ \\
        \ \ \ \addstackgap{$\tfrac{n_1^2}{2}\left(\begin{smallmatrix}
                r_2 r_3 & 0 & 0\\
                0 & -r_2 & 0\\
                0 & 0 & -r_3
            \end{smallmatrix}\right),$} \\
            with $r_2,r_3 > 0$
        }
& \makecell[c]{
$
\begin{aligned}
    \leq 0 \,; \quad r_2 = r_3
    \\
    {\rm indefinite}; \, r_2 \neq r_3
\end{aligned}
$
}
\\
\vspacetable{-10pt}

$\Sols$
&  \addstackgap{$\left(\begin{smallmatrix}
                2n_2 n_3 & 0 & 0\\
                0 & 0 & 0\\
                0 & 0 & 0
            \end{smallmatrix}\right)$}
& $\leq 0$ \\

$\SLRRs$
& \makecell[cl]{\addstackgap{
    $2\left(\begin{smallmatrix}
                {n_2 n_3} & 0 & 0\\
                0 & {n_3 n_1} & 0\\
                0 & 0 & { n_1 n_2}
    \end{smallmatrix}\right)$}}
& $\leq 0$\\

$\RSS$
& ${\rm diag}(1,\sin^2x,0) = \sTRic$
& $0$
\\
\bottomrule

\end{tabular}
\end{table}

\subsection{\texorpdfstring{$\mE^3$-geometry}{}}
\label{sec_E3}

For $\mE^3$-geometric manifolds, we have $\sTbarRic = 0$ by definition. Therefore, GR and topo-GR are equivalent for this type of topologies. Two different minimal metrics can be considered on this geometry: $\BI$ and $\BVIIn$ metrics. For the sake of generality, we recall the systems of equations for both metrics.

The $\BI$ metric does \emph{not} allow for a tilt. The orthonormal Milnor basis is characterized by $n_1 = n_2 = n_3 = a = 0$, and the field equations are
\begin{subequations}\label{eq_syst_E3_tot_I}
\begin{align}
	1
	    &= \Omega_\rho + \Omega_\sigma\,, \label{eq_syst_E3_1_I}\\
	\dot H
	    &= -(1+q)H^2\,, \label{eq_syst_E3_2_I} \\
	\dot\sigma_+
		&= -3H\sigma_+\,, \label{eq_syst_E3_3_I}\\
	\dot \sigma_-
		&= -3H\sigma_-\,,\label{eq_syst_E3_4_I}
\end{align}
\end{subequations}
where the cosmological parameters have the form
\begin{equation}\label{eq_syst_E3_Omega_I}
	H^2\Omega_\sigma = \sigma_+^2 + \sigma_-^2\,, \\ 
	H^2\OmegaDR = 0\,.
\end{equation}
The shear parameter always evolves as $H^2\Omega_\sigma \propto 1/\sfac^6$.\\

For the $\BVIIn$ metric tilt is generally allowed, and the curvature is non-zero. The orthonormal Milnor basis is characterized by $n_1 = 0 = a$ and $n_2, n_3 >0$. For non-tilted perfect fluids, the field equations are
\begin{subequations}\label{eq_syst_E3_tot_VII0}
\begin{align}
	1
	    &= \Omega_\rho + \OmegaDR + \Omega_\sigma\,, \label{eq_syst_E3_1_VII0}\\
	\dot H
	    &= -(1+q)H^2\,, \label{eq_syst_E3_2_VII0} \\
	\dot n_2
		&= - n_2 \left(H - 2\sigma_+ - 2\sqrt{3}\, \sigma_-\right), \label{eq_syst_E3_3_VII0} \\
	\dot n_3
		&= - n_3\left(H - 2\sigma_+ + 2\sqrt{3}\, \sigma_- \right), \label{eq_syst_E3_4_VII0} \\
	\dot\sigma_+
		&= -3H\sigma_+ - \frac{1}{6} \left(n_2 - n_3\right)^2, \label{eq_syst_E3_5_VII0}\\
	\dot \sigma_-
		&= -3H\sigma_- + \frac{1}{2\sqrt{3}}\left(n_3^2 - n_2^2\right),\label{eq_syst_E3_6_VII0}
\end{align}
\end{subequations}
where the cosmological parameters have the form
\begin{equation}\label{eq_syst_E3_Omega_VII0}
	H^2\Omega_\sigma = \sigma_+^2 + \sigma_-^2\,, \\ 
	H^2\OmegaDR = -\frac{1}{2}\left(n_2-n_3\right)^2 \leq 0\,.
\end{equation}
\begin{remark}
In the literature, Bianchi VII$_0$ models are often said to be ``more general'' than Bianchi I models (e.g., \cite{1985_Barrow_et_al, 2006_Jaffe_et_al_BVIIhWMAP, 2006_Jaffe_et_al_BVIIhruledout, 2007_Pontzen_BianchiCMB, 2013_PlanckData_topology, 2016_SaadehPontzen_isotropic}). This interpretation of generality is in the sense that Bianchi VII$_0$ models feature more parameters than Bianchi I models (c.f., e.g., \cite{2006_Hervik_et_al}). However, this does not mean that the former includes the latter. Indeed, for a (spatial) $\BVIIn$ metric to be a (spatial) $\BI$ metric, one needs $n_2 = n_3$. However, the consequence for the spacetime metric is that $\sigma_- = 0$ (as can be seen from \eqref{eq_syst_E3_tot_VII0} above), enforcing the LRS. The LRS for $\BI$ metrics is not a generic feature, but a subset of the set of solutions. Therefore, when considering a minimal metric on the $\mE^3$-geometry for constructing a cosmological model, two complementary and inequivalent choices are~possible.
\end{remark}

\subsection{\texorpdfstring{$\mS^3$-geometry}{}}
\label{sec_S3}

\subsubsection{Reference curvature in the orthonormal basis}

As shown in Section~\ref{sec_max_Ric_max_h}, for $\mS^3$-geometric manifolds, there exists a unit Milnor basis $\{\bare_i\}$ with algebra
\begin{equation}\label{eq_algebra_bar_S3}
	(\bar n^{ij})  = {\rm diag}(1,1,1) \,, \\ \bar a_i = 0 \,,
\end{equation}
such that the maximal Ricci tensor is
\begin{equation}\label{eq_Rbar_ij_bar_S3}
	\sTbarRic =
	    \frac{1}{2} \left(\bare^1 \otimes \bare^1 +  \bare^2 \otimes \bare^2 + \bare^3 \otimes \bare^3\right).
\end{equation}
The (minimal) physical metric $\T h$ is a $\BIX$ metric and its {orthonormal} Milnor basis $\{\e_i\}$ is characterized by 
\begin{equation}\label{eq_algebra_S3}
	(n^{ij})  = {\rm diag}(n_1,n_2,n_3) \,, \\ a_i=0 \,,
\end{equation}
where $n_i > 0$ for $i \in \{1,2,3\}$.

The (constant) transformation matrix $\Lambda$, with components defined in \eqref{eq:etoebartrafo}, is given by
\begin{align}\label{eq_Lambda_S3}
    \Lambda
    =
    \begin{pmatrix}
       \sqrt{n_2 n_3} & 0 & 0 \\
        0 &  \sqrt{n_3 n_1} & 0 \\
        0 & 0 & \sqrt{n_1 n_2}
    \end{pmatrix} A \,,
\end{align}
where $A \in \rmO(3)$ is an automorphism of the Lie algebra~\eqref{eq_algebra_bar_S3}. This implies that the components of the reference Ricci tensor in the orthonormal basis $\{\e_i\}$ are
\begin{equation}\label{eq_Rbarij_S3}
    \sTbarRic =
        \frac{1}{2}\left(n_2 n_3 \, \e^1 \otimes \e^1 + n_3 n_1 \, \e^2 \otimes \e^2 + n_1 n_2 \, \e^3 \otimes \e^3\right).
\end{equation}
The reason why $\sTbarRic$ is solely expressed as a function of $n_i$ in the basis $\{\e_i\}$ is because its components \eqref{eq_Rbar_ij_bar_S3} in the basis $\{\bare_i\}$ are invariant under the automorphisms of the Lie algebra \eqref{eq_algebra_bar_S3}, corresponding to $3$-dimensional rotations.

Using relation \eqref{eq_Ricci_Bianchi} to express $\sTRic$ in the orthonormal basis and equation~\eqref{eq_Rbarij_S3}, we~get
\begin{align}\label{eq_Rbar_ij_S3_diff}
\begin{aligned}
    \sTRic - \sTbarRic
    	&= \frac{1}{2}\left(n_1^2 - n_2^2 - n_3^2 + n_2 n_3\right) \e^1 \otimes \e^1 \\
    	&\quad+	\frac{1}{2}\left(n_2^2 - n_3^2 - n_1^2 + n_3 n_1\right) \e^2 \otimes \e^2 \\
    	&\quad+	\frac{1}{2}\left(n_3^2 - n_1^2 - n_2^2 + n_1 n_2\right) \e^3 \otimes \e^3 \\
	&= \T 0\,; \quad {\rm iff} \ n_1 = n_2 = n_3 \,.
\end{aligned}
\end{align}
Consequently, we have
\begin{equation}\label{eq_DeltaRs_S3}
	\begin{split}
	\sR - \sbarR
		&= -\frac{1}{2} \left(n_1^2 + n_2^2 + n_3^2 - n_1 n_2 - n_1 n_3 - n_2 n_3\right) \\
		&\leq 0\,; \quad \, \forall \, n_i > 0\,, \\
		&=0\,; \quad {\rm iff} \ n_1 = n_2 = n_3 \,.
	\end{split}
\end{equation}
Therefore, in general $\sTRic \not= \sTbarRic$, and the equality holds \emph{if and only if} the Bianchi metric~$\T h$ is isotropic, hence maximal.

\subsubsection{Field equations }

For Bianchi IX models $\div_{\T h} \T\sigma \not= \T 0$, in general. This implies, through the momentum constraint~\eqref{eq_homo_Momentum}, that a general LSH solution includes a fluid tilt $\T q$, as well as a reference tilt $\T\tilt$ for topo-GR. As mentioned before, we assume $\T q = \T 0 = \T\tilt$ in the present work. Then, the momentum constraint~\eqref{eq_homo_Momentum} imposes $\div_{\T h} \T\sigma = \T 0$.

Using the orthonormal approach variables~\eqref{eq_set_orthonormal_no_tilt} and the expression of $\sTbarRic$ in the orthonormal basis~\eqref{eq_Rbarij_S3}, the field equations of LSH solutions of topo-GR for non-tilted perfect fluid and non-tilted reference Ricci curvature, on a $\mS^3$-geometric manifold read
\begin{subequations}\label{eq_syst_S3_2}
\begin{align}\label{eq_syst_S3_1}
	1
	&= \Omega_\sigma + \OmegaDR + \Omega_\rho\,,\\
  	\dot{H}
		&= -(1+q)H^2\,, \\
	\dot n_1
		&= - n_1 \left(H + 4\sigma_+ \right), \\
	\dot n_2
		&= - n_2 \left(H - 2\sigma_+ - 2\sqrt{3}\, \sigma_-\right), \\
	\dot n_3
		&= - n_3\left(H - 2\sigma_+ + 2\sqrt{3}\, \sigma_- \right), \\
	\dot\sigma_+
		&= -3H\sigma_+ + \frac{1}{12} \left(4 n_1^2 - 2n_2^2 - 2n_3^2 + 2n_2 n_3 - n_1 n_2 - n_1n_3\right),\\
	\dot \sigma_-
		&= -3H\sigma_- - \frac{1}{4\sqrt{3}}(n_3-n_2)(n_1 - 2 n_2 - 2 n_3)\,,
\end{align}
\end{subequations}
where the cosmological parameters have the form
\begin{align}\label{eq_syst_S3_Omega}
    \begin{split}
	H^2\Omega_\sigma
		&= \sigma_+^2 + \sigma_-^2\,, \\
	H^2\OmegaDR
		&= \frac{1}{12} \left(n_1^2 + n_2^2 + n_3^2 - n_1 n_2 - n_1 n_3 - n_2 n_3\right) \geq 0\,.
	\end{split}
\end{align}
As for Bianchi IX models in GR, if the metric is shear-free ($\sigma_- = \sigma_+ = 0$) and isotropic ($n_1 = n_2 = n_3$) at a given time, in topo-GR it remains so for all times too.

\subsection{\texorpdfstring{$\mH^3$-geometry}{}}
\label{sec_H3}

Due to Mostow rigidity theorem \cite{1968_Mostow}, on a closed manifold modeled on the $\mH^3$-geometry, all locally homogeneous metrics are homothetic to a maximal metric, i.e., they differ by a constant conformal factor, up to diffeomorphisms.\footnote{The maximal metric is both a $\BV$ and a $\BVIIh$ invariant metric as $\BV\subset\GH$ and $\BVIIn \subset\GH$. For $\BVIIh$, the orthonormal basis, characterized in general by $n_1 = 0$, $n_2>0$, $a>0$, $n_3 = a^2/(h\, n_2) > 0$, is restricted by $n_2 = a/\sqrt{h}$ due to compactness.} Additionally, if there exists a transitive subgroup $\hat G_3 \subset \Isom(\tilde\Sigma,\T h)$ such that $\hat G_3 \subset \Isom(\tilde\Sigma,\Tbarh)$ ($\hat G_3$ is either the $\BV$ or the $\BVIIh$ group), then the two metrics are exactly homothetic (i.e., the diffeomorphism freedom disappears). This is the situation between the reference metric defining $\sTbarRic$ and the physical metric $\T h$. Therefore, $\Tbarh = \alpha \T h$ with $\alpha > 0$. Consequently, we have $\sTbarRic = \sTRic$. Moreover, as shown in Appendix~\ref{app_H3}, any locally $\BV$ or $\BVIIh$-invariant vector field is zero and any locally $\BV$ or $\BVIIh$-invariant symmetric $(0,2)$-tensor is homothetic to the metric. Therefore, we always have $\T\tilt = \T 0$, $\T\sigma = \T 0$, $\T q = \T 0$, and $\T \pi = \T 0$, implying $\Omega_\sigma = 0 = \OmegaDR$, and the general field equations are
\begin{equation}\label{eq_syst_H3_tot}
    1 = \Omega_\sigma + \Omega_\rho\,, \\
    \dot{H} = -(1+q)H^2\,.
\end{equation}
These are the general field equations for an LSH solution on a closed hyperbolic topology in topo-GR. As for GR, no tilt is allowed and they correspond to the hyperbolic homogeneous and isotropic solution of the theory, derived in \cite{2023_Vigneron_et_al_b}. The only difference to GR is that the scalar curvature is not anymore present in the Friedmann equation.

\subsection{\texorpdfstring{$\RHH$-geometry}{}}
\label{sec_RH2}

\subsubsection{Reference curvature in the orthonormal basis}

As shown in Section~\ref{sec_max_Ric_max_h}, for $\RHH$-geometric manifolds, there exists a unit Milnor basis $\{\bare_i\}$ with algebra
 \begin{equation}\label{eq_algebra_bar_RH2}
 	(\bar n^{ij})  = {\rm diag}(0,1,-1)\,, \\ \bar a_i = (1,0,0) \,,
\end{equation}
such that the maximal Ricci tensor is
\begin{equation}
	\sTbarRic = -4\, \bare^1 \otimes \bare^1 - 2 \left(\bare^2-\bare^3\right)\otimes \left(\bare^2-\bare^3\right).
\end{equation}

The (minimal) physical metric $\T h$ is a $\BIII$ metric and its orthonormal Milnor basis $\{\e_i\}$ is characterized by
\begin{equation}\label{eq_algebra_RH2}
	(n^{ij})  = {\rm diag}(0,a,-a)\,, \\ (a_i)=(a,0,0)\,,
\end{equation}
where $a > 0$. As detailed in Section~\ref{sec_Isom_Bianchi}, compactness imposes $n_2 = a$ for $\BIII$ metrics.

The (constant) transformation matrix $\Lambda$, with components defined in \eqref{eq:etoebartrafo}, is given by
\begin{align}\label{eq_Lambda_RH2}
    \Lambda =
        \begin{pmatrix}
            a & c_1 & c_2 \\
            0 & c_4 & c_3 \\
            0 & c_3 & c_4
        \end{pmatrix},
\end{align}
where $c_i \in \mR$ for $i=1,2,3,4$ with $(c_4)^2 \neq (c_3)^2$. This implies that the components of the Ricci tensor~$\sTbarRic$ in the orthonormal basis $\{\e_i\}$ are
\begin{equation}\label{eq_Rbarij_RH2_pre}
    (\sbarR_{ij}) 
    =
    \begin{pmatrix}
        -4a^2 - 2r^2 & -2 r s & 2 r s \\
        -2 r s & -2s^2 & 2s^2 \\
        2 r s  & 2s^2 & -2s^2
    \end{pmatrix},
\end{equation}
where $r := c_1-c_2$ and $s := c_4 - c_3 \not=0$.

As shown in Appendix~\ref{app_RH2}, a locally homogeneous symmetric $(0,2)$-tensor on $\RHH$-geometric manifolds is restricted by the form \eqref{eq_TTensor_RH2_Compact} in a basis with algebra~\eqref{eq_algebra_RH2}. This restriction imposes $r = 0$ and $s^2 = a^2$, leading to
\begin{equation}\label{eq_Rbarij_RH2}
    \sTbarRic =
        -a^2\left[4\, \bare^1 \otimes \bare^1 + 2 \left(\bare^2-\bare^3\right)\otimes \left(\bare^2-\bare^3\right)\right].
\end{equation}
Consequently, we have $\sTbarRic = \sTRic$.

\subsubsection{Field equations}
\label{sec_RH2_syst}

As shown in \Cref{app_RH2}, any locally homogeneous vector field is in the kernel of the Ricci tensor. Therefore, we always have $\sTbarRic\left(\T\tilt,\cdot\right) = \T 0$. Additionally, any locally homogeneous symmetric $(0,2)$-tensor field is transverse, implying $\div_{\T h} \, \T\sigma = \T 0$. Along with the momentum constraint~\eqref{eq_homo_Momentum}, this imposes $\T q = \T 0$. Therefore, as for GR, LSH solutions on $\RHH$-geometric manifolds do not allow for tilted fluids. No reference tilt is allowed either. Furthermore, the shear is restricted to the form~\eqref{eq_TTensor_RH2_Compact} in the orthonormal basis $\{\e_i\}$ and therefore depends only on one variable. The consequence is that the physical metric can be written in the simple form (resembling that of Kantowski--Sachs by replacing $\sin^2 x$ by $\sinh^2 x$, representing the hyperbolic fibers)
\begin{equation}\label{eq_metric_Rbar_RH2}
	\T h = A^2(t)\left(\T\dd x\otimes \T\dd x + \sinh^2 x \, \T\dd y \otimes \T\dd y\right) + B^2(t)\,\T\dd z \otimes \T\dd z\,,
\end{equation}
implying that the (reference) Ricci curvature have the form
\begin{equation}
	\sTbarRic = \sTRic = -\left(\T\dd x\otimes \T\dd x + \sinh^2 x \, \T\dd y \otimes \T\dd y\right).
\end{equation}

Then, the field equations of a general LSH solution of topo-GR with perfect fluid on a geometric manifold modeled on $\RHH$ are
\begin{subequations}\label{eq_syst_RH2_tot}
\begin{align}
	1
	&= \Omega_\sigma + \Omega_\rho \,, \label{eq_syst_RH2_1}\\
  	\dot{H}
		&= -(1+q)H^2 \,, \label{eq_syst_RH2_2}\\
	\dot{\sigma}^i{}_j
		&= -3H \sigma^i{}_j \,, \label{eq_syst_RH2_3}
\end{align}
\end{subequations}
with
\begin{equation}
	H
	    = 4 \frac{\dot A}{A} + 2 \frac{\dot B}{B}\,, \quad
	\left(\sigma^i{}_j\right)
	    = \frac{2}{3}\left(\frac{\dot A}{A}- \frac{\dot B}{B}\right){\rm diag}\left(1,1,-2\right), \quad
    H^2\Omega_\sigma
        = \frac{4}{9}\left(\frac{\dot A}{A}- \frac{\dot B}{B}\right)^2.
\end{equation}
In particular, we have $\OmegaDR = 0$. These equations are similar to those in GR, with the difference that the spatial curvature is not present anymore. As a consequence, the shear parameter always evolves as $H^2\Omega_\sigma \propto 1/\sfac^6$, similar to $\BI$ metrics.

\subsection{\texorpdfstring{$\Nils$-geometry}{}}
\label{sec_Nil}

\subsubsection{Reference curvature in the orthonormal basis}
\label{sec_Nil_Rbar_proof}

As shown in Section~\ref{sec_max_Ric_max_h}, for $\Nils$-geometric manifolds, there exists a unit Milnor basis $\{\bare_i\}$ with algebra
\begin{equation}\label{eq_algebra_bar_Nil}
	(\bar n^{ij})  = {\rm diag}(1,0,0)\,, \\ \bar a_i = 0\,,
\end{equation}
such that the maximal Ricci tensor is
\begin{equation}\label{eq_Rbar_ij_bar_Nil}
	\sTbarRic = \frac{1}{2}\left(\bare^1 \otimes \bare^1 - \bare^2 \otimes \bare^2 - \bare^3 \otimes \bare^3\right).
\end{equation}
This tensor is a Lorentzian metric, and we can compute its Ricci tensor, leading to $\sTRic[\sTRic[\Tbarh]] \propto \Tbarh$. This implies that $\Isom(\tilde\Sigma,\Tbarh) = \Aff(\tilde\Sigma,\Tbarh) = \ColRic(\tilde\Sigma,\Tbarh) \cong \GNil$.\\

The (minimal) physical metric $\T h$ is a $\BII$ metric and its orthonormal Milnor basis $\{\e_i\}$ is characterized by
\begin{equation}\label{eq_algebra_Nil}
	(n^{ij})  = {\rm diag}(n_1,0,0)\,, \\ a_i = 0\,,
\end{equation}
with $n_1>0$. The basis $\{\e_i\}$ is unique up to a rotation around $\e_1$.

The transformation matrix $\Lambda$ between $\{\bare_i\}$ and $\{\e_i\}$ can be put in the form
\begin{equation}\label{eq_Lambda_Nil}
    \Lambda =
    	 \begin{pmatrix}
            n_1  & 0 & 0 \\
            0 & n_1 & 0 \\
            0 & 0 & n_1
        \end{pmatrix}
        \begin{pmatrix}
            a_2 a_3 - c_3 d_3  & 0 & 0 \\
            d_1 & a_2 & c_3 \\
            d_2 & d_3 & a_3
        \end{pmatrix},
\end{equation}
where $a_2$, $a_3$, $c_3$, $d_3$, $d_1$, $d_2$ $\in\mR$, with $a_2 a_3 - c_3 d_3 \not= 0$. The second matrix in~\eqref{eq_Lambda_Nil} is an automorphism of the Lie algebra~\eqref{eq_algebra_bar_Nil}. Unfortunately, compared to the other Bianchi type A models, the components of the reference Ricci tensor~\eqref{eq_Rbar_ij_bar_Nil} are not invariant under the automorphisms of the unit Milnor basis. The consequence is that the components of $\sTbarRic$ in the orthonormal basis $\{\e_i\}$ depend on 6 free-parameters ($a_2$, $a_3$, $c_3$, $d_3$, $d_1$ and $d_2$), and not solely on the commutation coefficient $n_1$ of the orthonormal basis. This is a very odd result, as it is different from all the other models. We discuss this in more details in Section~\ref{sec_discussion_Nil} below.

To reduce the degrees of freedom, we can use at this stage the field equation~\eqref{eq_ADM_tilt_dot} along with $\T\tilt = \T 0$ (which we impose), resulting in $\div_{\T h} \sTbarRic = \T 0$. This imposes $d_1 = 0 = d_2$. Then using the rotation freedom around $\e_1$ in the definition of the orthonormal basis, we can further set $\bar R_{23} = 0 = \bar R_{32}$ by taking $d_3 = -a_3 c_3/a_2$, which leads to
\begin{equation}\label{eq_Rbar_ij_Nil}
	\sTbarRic = \frac{n_1^2}{2}\left(r_2\, r_3\, \bare^1 \otimes \bare^1 - r_2 \, \bare^2 \otimes \bare^2 - r_3 \, \bare^3 \otimes \bare^3\right),
\end{equation}
where $r_2 := a_2^2 + c_3^2 > 0$ and $r_3 := \left(a_3/a_2\right)^2(a_2^2+c_3^2)>0$ are free parameters. Then we get
\begin{equation}\label{eq_Rbarij_Nil_diff}
    \begin{split}
    \sTRic - \sTbarRic
    	&= \frac{n_1^2}{2}\left[\left(1-r_2\, r_3\right) \bare^1 \otimes \bare^1 - \left(1-r_2\right) \bare^2 \otimes \bare^2 - \left(1-r_3\right) \bare^3 \otimes \bare^3\right],
	\end{split}
\end{equation}
and
\begin{equation}\label{eq_DeltaRs_Nil}
	\begin{split}
	{\rm tr}_{\T h}\left(\sTRic - \sTbarRic\right)
		&= -\frac{n_1^2}{2} \left(r_2-1\right)\left(r_3-1\right).
	\end{split}
\end{equation}
\begin{remark}\label[remark]{rem_II}
Since $\T h$ is maximal, we have $\Isom(\tilde\Sigma,\T h) \cong \GNil$. However, for $r_2 \not= r_3$, we only have $\Isom(\tilde\Sigma,\T h) \cong \Sym(\tilde\Sigma, \sTbarRic)$, and $\Isom(\tilde\Sigma,\T h) = \Sym(\tilde\Sigma, \sTbarRic)$ holds if and only if $r_2 = r_3$ (see \Cref{prop_property_Nil} in \Cref{app_Nil}). Additionally, $\sTRic = \sTbarRic$ if and only if $r_2=r_3=1$. Therefore, the $\Nils$-geometry is the only geometry in which, when $\T h$ is maximal, $\Isom(\tilde\Sigma,\T h) \subseteq \Sym(\tilde\Sigma, \sTbarRic)$ and $\sTRic = \sTbarRic$ are not guaranteed. We discuss this in Section~\ref{sec_discussion_Nil}.
\end{remark}

\subsubsection{Field equations: general case}

Using the momentum constraint~\eqref{eq_ADM_Momentum} with the assumption $\T q = \T 0 = \T\tilt$, we get $\Omega^2 = 0 = \Omega^3$. However, $\Omega^1$ is not zero in general. And since we already used the freedom of rotation around $\e_1$ in the orthonormal basis to set $\bar R_{23} = 0$, there is no more freedom left to set $\Omega^1 = 0$ nor $\sigma_{23}=0$. The field equation~\eqref{eq_homo_Rbar_dot}, written in the orthonormal basis, leads to 
\begin{equation}\label{eq_sigma23Omega1}
    \sigma_{23} =  \frac{r_2-r_3}{r_2+r_3} \Omega^1 \,.
\end{equation}
Therefore, the set of dynamical variables are
\begin{equation}
    \left\{ H, \ n_1, \ \sigma_+, \ \sigma_-, \ \sigma_{23}, \ r_2, \ r_3, \ \rho, \ p \right\} .
\end{equation}
The field equations are
\begin{subequations}\label{eq_syst_Nil_tot}
\begin{align}
	1
		&= \Omega_\sigma + \OmegaDR + \Omega_\rho \,,\label{eq_syst_Nil_1}\\
  	\dot H
		&= -(1+q)H^2\,, \label{eq_syst_Nil_2}\\
	\dot n_1
		&= - n_1 \left(H + 4\sigma_+ \right), \label{eq_syst_Nil_3}\\
	\dot\sigma_+
		&= -3H\sigma_+ - \frac{n_1^2}{12} \left(2 \, r_2 r_3 + r_2 + r_3 - 4\right), \label{eq_syst_Nil_4}\\
	(r_2 - r_3) \dot \sigma_-
	    &= -3H (r_2 - r_3) \sigma_-
	        - \frac{n_1^2}{4\sqrt{3}}(r_2 - r_3)^2
	        + \frac{2}{\sqrt{3}}(r_2 + r_3) \sigma_{23}^2\,, \label{eq_syst_Nil_5}\\
    (r_2 - r_3) \dot \sigma_{23} 
        &= -3H (r_2 - r_3) \sigma_{23}
            + 2\sqrt{3}\, (r_2 + r_3) \sigma_{23} \, \sigma_-\,,\label{eq_syst_Nil_8}\\
    \dot r_2
        &= 2\,r_2\left(3\sigma_+
            - \sqrt{3}\sigma_-\right), \label{eq_syst_Nil_r2}\\
    \dot r_3
        &= 2\,r_3\left(3\sigma_+
            + \sqrt{3}\sigma_-\right), \label{eq_syst_Nil_r3}
\end{align}
\end{subequations}
where by definition $r_2$, $r_3>0$, and where the cosmological parameters have the form
\begin{equation}\label{eq_syst_Nil_Omega}
	H^2\Omega_\sigma
		= \sigma_+^2 + \sigma_-^2 + \frac{\sigma_{23}^2}{9}\,, \\
	H^2\OmegaDR
		= \frac{n_1^2}{12} (r_2-1)(r_3-1)\,.
\end{equation}

Equations~\eqref{eq_syst_Nil_r2} and~\eqref{eq_syst_Nil_r3} arise from the evolution equation~\eqref{eq_homo_Rbar_dot_ortho} for the reference curvature. The $\Nils$-geometry is the only model in which that equation is not redundant with the evolution equations for the structure constant $n_i$, coming from the Jacobi identities. The reason for this is that the expression~\eqref{eq_Rbar_ij_Nil} of $\sTbarRic$ in the orthonormal basis features additional parameters in addition to $n_1$.

\subsubsection{Field equations: LRS case}\label{sec_Nil_LRS}

A Bianchi II spacetime is LRS when $\Isom(\tilde\Sigma,\T h) \subseteq \Sym(\tilde\Sigma,\T\sigma)$, i.e., the fourth Killing vector of the metric is a shear collineation.  As shown in \Cref{app_Nil} with \Cref{prop_property_Nil}, this happens if and only if there exists an orthonormal basis with Lie algebra~\eqref{eq_algebra_Nil} for which $\sigma_- = \sigma_{12} = \sigma_{13} = \sigma_{23} = 0$.\footnote{For this reason, LRS Bianchi II spacetimes are necessarily non-tilted (see, e.g., \cite{1998_Kodama, 2002_Kodama}).} In that case, the field equation~\eqref{eq_syst_Nil_5} and the relation \eqref{eq_sigma23Omega1} imply $r_2 = r_3$.

As shown in \Cref{rem_II}, if we assume $\Isom(\tilde\Sigma,\T h) \subseteq \Sym(\tilde\Sigma,\sTbarRic)$, we get $r_2 = r_3$. Then, from \eqref{eq_sigma23Omega1} it follows that $\sigma_{23} = 0$, and the field equations~\eqref{eq_syst_Nil_r2} and~\eqref{eq_syst_Nil_r3} impose $\sigma_-=0$. Therefore, LRS is equivalent as assuming $\Isom(\tilde\Sigma,\T h) \subseteq \Sym(\tilde\Sigma,\sTbarRic)$.

For an LRS Bianchi II spacetime, the system of equations is
\begin{subequations}\label{eq_syst_Nil_LRS_tot}
\begin{align}
	1
		&= \Omega_\sigma + \OmegaDR + \Omega_\rho\,,\label{eq_syst_Nil_LRS_1}\\
  	\dot H
		&= -(1+q)H^2\,, \label{eq_syst_Nil_LRS_2}\\
	\dot n_1
		&= - n_1 \left(H + 4\sigma_+ \right), \label{eq_syst_Nil_LRS_3}\\
	\dot\sigma_+
		&= -3H\sigma_+ - \frac{n_1^2}{6} (r_2+2)(r_2-1)\,, \label{eq_syst_Nil_LRS_4}\\
    \dot r_2
        &= 6\,r_2\,\sigma_+\,, \label{eq_syst_Nil_LRS_5}
\end{align}
\end{subequations}
where, by definition, $r_2>0$, and  the cosmological parameters have the form
\begin{equation}\label{eq_syst_Nil_Omega_LRS}
	H^2\Omega_\sigma
		= \sigma_+^2\,, \\
	H^2\OmegaDR
		= \frac{n_1^2}{12} (r_2-1)^2 \geq 0\,.
\end{equation}

\begin{remark}
    As shown in \cite{1998_Kodama,2002_Kodama}, some $\Nils$-topologies can impose the LRS condition. This is the case for the topological spaces named $T^3(n)/\mathrm{Z}_{k=3,4,5}$ in these papers.
\end{remark}

\subsection{\texorpdfstring{$\Sols$-geometry}{}}
\label{sec_Sol}

\subsubsection{Reference curvature in the orthonormal basis}

As shown in Section~\ref{sec_max_Ric_max_h}, for $\Sols$-geometric manifolds, there exists a unit Milnor basis $\{\bare_i\}$ with algebra
\begin{equation}\label{eq_algebra_bar_Sol}
	(\bar n^{ij})  = {\rm diag}(0,1,-1)\,, \\ \bar a_i = 0\,,
\end{equation}
such that the maximal Ricci tensor is
\begin{equation}\label{eq_Rbar_ij_bar_Sol}
	\sTbarRic = -2\, \bare^1 \otimes \bare^1\,.
\end{equation}

The (minimal) physical metric $\T h$ is a $\BVIn$ metric and its orthonormal Milnor basis $\{\e_i\}$ is characterized by
\begin{equation}\label{eq_algebra_Sol}
	(n^{ij})  = {\rm diag}(0,n_2,n_3)\,, \\ a_i = 0\,; \quad {\rm where} \ n_2>0 \ {\rm and} \ n_3 <0\,.
\end{equation}
When $n_2\not=-n_3$, we have $\Isom(\tilde\Sigma,\Tbarh) = Sol \rtimes D_2$ and the metric is maximal if and only if $n_2 = -n_3$. This means that a (minimal) $\BVIn$ metric and a maximal $\Sols$ metric have the same number of continuous symmetries, and their isometry group differ only by discrete symmetries. However, $\sTbarRic$ has more continuous symmetries than $\sTRic$ for $n_2\not=-n_3$, as ${\rm dim}( \Sym(\tilde\Sigma,\sTRic) ) = 3$ for $n_2\not=n_3$ while ${\rm dim}( \Sym(\tilde\Sigma,\sTbarRic) ) = \infty$ (see \Cref{app_Sol}).

The (constant) transformation matrix $\Lambda$ between $\{\bare_i\}$ and $\{\e_i\}$, with components defined in \eqref{eq:etoebartrafo}, can be put in the form
\begin{equation}\label{eq_Lambda_Sol}
    \Lambda =
    	 \begin{pmatrix}
            \sqrt{-n_2 n_3}  & 0 & 0 \\
            0 & \sqrt{-n_3} & 0 \\
            0 & 0 & \sqrt{n_2}
        \end{pmatrix}
        \begin{pmatrix}
            s\,  & c_1 & c_2 \\
            0 & c_4 & c_3 \\
            0 & s\, c_3 & s\, c_4
        \end{pmatrix},
\end{equation}
where $c_i \in \mR$ for $i = 1,2,3,4$, and $s = \pm 1$ with $(c_4)^2 \neq (c_3)^2$. The second matrix in~\eqref{eq_Lambda_Sol} is an automorphism of the Lie algebra~\eqref{eq_algebra_bar_Sol}. Because the components of $\sTbarRic$ in the $\{\bare_i\}$ basis in \eqref{eq_Rbar_ij_bar_Sol} are invariant by this automorphism, 
in the orthonormal basis $\{\e_i\}$ we get
\begin{equation}\label{eq_Rbarij_Sol}
    \sTbarRic = 2 \, n_2 n_3 \, \e^1 \otimes \e^1\,.
\end{equation}
Consequently, we have
\begin{equation}\label{eq_Rbar_ij_Sol_diff}
    \begin{split}
    \sTRic - \sTbarRic
    	&= -\frac{1}{2}\left(n_2 + n_3\right)^2 \e^1 \otimes \e^1
		+ \frac{1}{2}\left(n_2^2 - n_3^2\right) \e^2 \otimes \e^2
    		+ \frac{1}{2}\left(n_2^2 - n_3^2\right) \e^3 \otimes \e^3 \\
	&= \T 0\,; \quad {\rm iff} \ n_2 = -n_3\,,
	\end{split}
\end{equation}
leading to
\begin{equation}\label{eq_DeltaRs_Sol}
	\begin{split}
	{\rm tr}_{\T h}\left(\sTRic - \sTbarRic\right)
		&= -\frac{1}{2} \left(n_2 + n_3\right)^2 \\
		&\leq 0\,; \quad \forall \ n_2 > 0 \, {\rm and} \ n_3<0\\
		&=0\,; \quad {\rm iff} \ n_2 = -n_3\,.
	\end{split}
\end{equation}
Therefore, in general $\sTRic \not= \sTbarRic$, and the equality holds if and only if~$\T h$ is maximal.

\subsubsection{Field equations }

Similar to the Bianchi IX case, tilt is generally allowed in Bianchi VI$_0$ models. Therefore, the assumption $\T q = \T 0 = \T\tilt$ we take in this paper is a restriction in this model.

Using the orthonormal approach variables~\eqref{eq_set_orthonormal_no_tilt} and the expression of $\sTbarRic$ in the orthonormal basis~\eqref{eq_Rbarij_Sol}, 
the field equations of LSH solutions of topo-GR for non-tilted perfect fluid and non-tilted reference Ricci curvature, on a $\Sols$-geometric manifold read
\begin{subequations}\label{eq_syst_Sol_tot}
\begin{align}
	1
		&= \Omega_\sigma + \OmegaDR + \Omega_\rho\,,\label{eq_syst_Sol_1}\\
  	\dot H
		&= -(1+q)H^2\,, \label{eq_syst_Sol_2}\\
	\dot n_2
		&= - n_2 \left(H - 2\sigma_+ - 2\sqrt{3}\, \sigma_-\right), \label{eq_syst_Sol_3}\\
	\dot n_3
		&= - n_3\left(H - 2\sigma_+ + 2\sqrt{3}\, \sigma_- \right), \label{eq_syst_Sol_4}\\
	\dot\sigma_+
		&= -3H\sigma_+ - \frac{1}{6}\left(n_2 + n_3\right)^2, \label{eq_syst_Sol_5}\\
	\dot \sigma_-
		&= -3H\sigma_- - \frac{1}{2\sqrt{3}} \left(n_2^2 - n_3^3\right).\label{eq_syst_Sol_6}
\end{align}
\end{subequations}
The cosmological parameters take the form
\begin{equation}\label{eq_syst_Sol_Omega}
	H^2\Omega_\sigma
		= \sigma_+^2 + \sigma_-^2 \,, \\
	H^2\OmegaDR
		= \frac{1}{12} \left(n_2 + n_3\right)^2 \geq 0 \,.
\end{equation}

\subsection{\texorpdfstring{$\SLRRs$-geometry}{}}
\label{sec_SL2R}

\subsubsection{Reference curvature in the orthonormal basis}

As shown in Section~\ref{sec_max_Ric_max_h}, for $\SLRRs$-geometric manifolds, there exists a unit Milnor basis $\{\bare_i\}$ with algebra
\begin{equation}\label{eq_algebra_bar_SL2R}
	(\bar n^{ij})  = {\rm diag}(-1,1,1) \,, \\ {\bar a_i} = 0\,,
\end{equation}
such that the maximal Ricci tensor is
\begin{equation}\label{eq_Rbar_ij_bar_SL2R}
	\sTbarRic = 2 \left(\bare^1 \otimes \bare^1 - \bare^2 \otimes \bare^2 -  \bare^3 \otimes \bare^3\right).
\end{equation}

The (minimal) physical metric $\T h$ is a $\BVIII$ metric (Bianchi III models on $\SLRRs$ are subcases of Bianchi VIII models, see \Cref{sec_Isom_Bianchi}) and its orthonormal Milnor basis $\{\e_i\}$ is characterized~by
\begin{equation}\label{eq_algebra_SL2R}
	(n^{ij})  = {\rm diag}(n_1,n_2,n_3)\,, \\ a_i = 0\,; \\  n_2, n_3>0 \ {\rm and} \ n_1 <0\,.
\end{equation}
The metric is maximal if and only if $n_2=n_3$.

The (constant) transformation matrix $\Lambda$ between $\{\bare_i\}$ and $\{\e_i\}$ can be put in the form
\begin{align}\label{eq_Lambda_SL2R}
    \Lambda =
    	 \begin{pmatrix}
            \sqrt{n_2 n_3}  & 0 & 0 \\
            0 & \sqrt{-n_3 n_1} & 0 \\
            0 & 0 & \sqrt{-n_1 n_2}
        \end{pmatrix}
        A\,,
\end{align}
where $A$ is an automorphism of the Lie algebra~\eqref{eq_Lambda_SL2R}. The components~\eqref{eq_Rbar_ij_bar_SL2R} of the reference Ricci tensor are invariant under this automorphism. Therefore, in the orthonormal basis it reads
\begin{equation}\label{eq_Rbarij_SL2R}
    \sTbarRic = 2 \left( n_2 n_3 \, \e^1 \otimes \e^1 +  n_3 n_1 \, \e^2 \otimes \e^2 + n_1 n_2 \, \e^3 \otimes \e^3\right).
\end{equation}
Consequently, we have
\begin{equation}\label{eq_Rbar_ij_SL2R_diff}
\begin{aligned}
    \sTRic - \sTbarRic
    	&= \frac{\left(n_1+n_2+n_3\right)}{2} \left[
		    \left(n_1 - n_2 - n_3\right) \e^1 \otimes \e^1 + \left(n_2 - n_3 - n_1\right)\e^2 \otimes \e^2 \right. \\
		    &\qquad\qquad\qquad\qquad\qquad \left. + \left(n_3 - n_1 - n_2\right)\e^3 \otimes \e^3\right] \\
	    &= \T 0 \,; \quad {\rm iff} \ n_1 + n_2 + n_3 = 0 \,,
	\end{aligned}
\end{equation}
leading to
\begin{equation}\label{eq_DeltaRs_SL2R}
	\begin{aligned}
	{\rm tr}_{\T h}\left(\sTRic - \sTbarRic\right)
		&= -\frac{1}{2} \left(n_1 + n_2 + n_3\right)^2 \\
		&\leq 0\,,\\
		&=0\,; \quad {\rm iff} \ n_1 + n_2 + n_3 = 0\,.
	\end{aligned}
\end{equation}
Therefore, in general $\sTRic \not= \sTbarRic$. However, even if $\T h$ is maximal the equality does not necessarily hold, and reversely $\T h$ does not need to be maximal for the equality  to hold. This is the only geometry for which this happens.

\subsubsection{Field equations }

Similar to the Bianchi IX case, tilt is generally allowed in Bianchi VIII models. Therefore, the assumption $\T q = \T 0 = \T\tilt$ we take in this paper is a restriction in this model.

Using the orthonormal approach variables~\eqref{eq_set_orthonormal_no_tilt} and the expression of $\sTbarRic$ in the orthonormal basis~\eqref{eq_Rbarij_SL2R}, the field equations of LSH solutions of topo-GR for non-tilted perfect fluid and non-tilted reference Ricci curvature on a $\SLRR$-geometric manifold read
\begin{subequations}\label{eq_syst_SL2R_tot}
\begin{align}
	1
		&= \Omega_\sigma + \OmegaDR + \Omega_\rho\,,\label{eq_syst_SL2R_1}\\
  	\dot H
		&= -(1+q)H^2\,, \label{eq_syst_SL2R_2}\\
	\dot n_1
		&= - n_1 \left(H + 4\sigma_+ \right), \label{eq_syst_SL2R_3}\\
	\dot n_2
		&= - n_2 \left(H - 2\sigma_+ - 2\sqrt{3}\, \sigma_-\right), \label{eq_syst_SL2R_4} \\
	\dot n_3
		&= - n_3\left(H - 2\sigma_+ + 2\sqrt{3}\, \sigma_- \right), \label{eq_syst_SL2R_5} \\
	&\dot\sigma_+
		= -3H\sigma_+ + \frac{1}{6}\left(n_1+n_2+n_3\right)\left(2n_1-n_2-n_3\right), \label{eq_syst_SL2R_6} \\
	&\dot \sigma_-
		= -3H\sigma_- + \frac{1}{2\sqrt{3}}\left(n_1+n_2+n_3\right)\left(n_3-n_2\right). \label{eq_syst_SLRR_7}
\end{align}
\end{subequations}
The cosmological parameters take the form
\begin{equation}\label{eq_syst_SLRR_Omega}
	H^2\Omega_\sigma
		= \sigma_+^2 + \sigma_-^2\,, \\
	H^2\OmegaDR
		= \frac{1}{12} \left(n_1 + n_2 + n_3\right)^2 \geq 0\,.
\end{equation}

\subsection{$\RSS$-geometry}
\label{sec_RS2}

\subsubsection{Reference curvature in canonical coordinates}

As shown in Section~\ref{sec_max_Ric_max_h}, for $\RSS$-geometric manifolds, there exists a coordinate system $\{\bar x, \bar y, \bar z\}$ such that the maximal Ricci tensor is
\begin{equation} \label{eq_Rbar_KS}
	\sTRic = \T\dd \bar x \otimes\T\dd \bar x + \sin^2 \bar x \,\T\dd \bar y \otimes\T\dd \bar y\,.
\end{equation}

The (minimal) physical metric $\T h$ is a Kantowski--Sachs metric and there exists a coordinate system $\{x,y,z\}$ such that
\begin{equation} \label{eq_h_KS}
	\T h = A_1\left(\T\dd x \otimes\T\dd x + \sin^2 x \T\dd y \otimes\T\dd y\right) + A_2 \, \T\dd z \otimes\T\dd z\,,
\end{equation}
where $A_1, A_2 > 0$.

In general, $\{\bar x, \bar y, \bar z\}$ and $\{x,y,z\}$ can be different coordinate systems. However, because an isometry of the physical metric is necessarily a symmetry of $\sTbarRic$ as implied by the condition~\eqref{eq_GBKS_GBKS} (the minimal group for $\RSS$-geometries corresponds to the maximal group and is therefore the isometry group of $\T h$), then the two coordinate systems must coincide (see \Cref{app_RS2}). Consequently, we have $\sTRic = \sTbarRic$.

A similar proof for the $\RHH$-geometry could have been conducted.

\subsubsection{Field equations}

Similar to what was presented in \Cref{sec_RH2_syst} for the $\RHH$-geometry, using the properties derived in \Cref{app_RS2} and the momentum constraint~\eqref{eq_homo_Momentum}, we obtain $\sTbarRic\left(\T\tilt,\cdot\right) = \T 0$, $\div_{\T h} \, \T\sigma = \T 0$, and $\T q = \T 0$. Therefore, as for GR, LSH solutions on $\RSS$-geometric manifolds do not allow for tilted fluids. No reference tilt is allowed either. As a result, the field equations are equivalent to the ones for $\RHH$-geometric manifolds, corresponding to~\eqref{eq_syst_RH2_tot}. Therefore, LSH solutions on $\RHH$ and $\RSS$-geometric manifolds are dynamically equivalent and only differ by their spatial geometry.

\subsection{General results revisited}
\label{sec_discussion_exact}

In \Cref{sec_LSH_General_results}, we derived general results for LSH solutions without specifying the precise model. We discussed, in particular, shear-free and static vacuum solutions, and conditions for late-time isotropization. The computation of the field equations for (non-tilted) perfect fluids in \Cref{sec_E3} to \Cref{sec_RS2} for each model allows us to refine these results. Notably, for some models, these equations allowed us to determine the generic behavior of the shear parameter $\Omega_\sigma$, the sign of the modified curvature parameter $\OmegaDR$ and whether or not tilt is allowed. We summarize this in Table~\ref{tab_dynamics}. Additionally, these field equations allowed us to obtain a sufficient condition for an LSH solution of topo-GR to be static and vacuum:
\begin{proposition}
    Consider an LSH solution of topo-GR with a non-tilted perfect fluid fulfilling the weak energy condition. For all geometries apart from the non-LRS Bianchi II models in the $\Nils$-geometry, if $H=0$ at a given time~$t_0$, then $H=0$, $\T\sigma = \T 0$, $\sTRic = \sTbarRic$, and $\rho =0$ hold for all time~$t$, and the solution is a static vacuum solution.
\end{proposition}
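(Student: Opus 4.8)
The plan is to extract everything at the single instant $t_0$ from the Hamilton constraint, and then to propagate the conclusion by an ODE uniqueness argument. First I would put the homogeneous Hamilton constraint~\eqref{eq_homo_Hamiltonian} in non-normalized form. Since the fluid is non-tilted ($\T q = \T 0 = \T\pi$) and, in all the models considered here, $\T\tilt = \T 0$, multiplying~\eqref{eq_homo_Hamiltonian} by $6H^2$ and inserting the definitions of the density parameters gives
\[
    6H^2 = \sigma_{ij}\sigma^{ij} - \left(\sR - \sbarR\right) + 2\kappa\rho .
\]
At $t=t_0$ the left-hand side vanishes. On the right-hand side, $\sigma_{ij}\sigma^{ij} \ge 0$ (a sum of squares in an orthonormal Milnor basis), $\rho \ge 0$ by the weak energy condition, and $-(\sR - \sbarR) = -\,\mathrm{tr}_{\T h}(\sTRic - \sTbarRic) \ge 0$ for every geometry except the non-LRS Bianchi~II case, exactly as recorded in the last column of Table~\ref{tab_R_Rbar}. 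A sum of three non-negative quantities that vanishes must have each summand zero, so $\T\sigma(t_0) = \T 0$, $\rho(t_0)=0$, and $\sR(t_0)=\sbarR(t_0)$. The exclusion of non-LRS Bianchi~II enters precisely here: it is the only model for which $\mathrm{tr}_{\T h}(\sTRic-\sTbarRic)$ is indefinite, so the sign of the curvature term cannot be controlled.

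Next I would upgrade the scalar equality $\sR(t_0)=\sbarR(t_0)$ to the tensorial one $\sTRic(t_0)=\sTbarRic(t_0)$. This is where the per-geometry computations of Sections~\ref{sec_E3}--\ref{sec_RS2} are used: for each admissible geometry, $\mathrm{tr}_{\T h}(\sTRic-\sTbarRic)=0$ holds \emph{if and only if} $\sTRic-\sTbarRic = \T 0$, equivalently the minimal metric is maximal. For the geometries with $\sTRic=\sTbarRic$ identically this is automatic, and for $\mS^3$, $\Sols$, $\SLRRs$, and LRS~Bianchi~II it is the content of the displayed ``iff'' conditions. Combined with $\T\sigma(t_0)=\T 0$ and $\rho(t_0)=0$ (hence $p(t_0)=0$ for a barotropic equation of state with $p(0)=0$, in particular $p=w\rho$), this shows that at $t_0$ the full homogeneous state coincides with the shear-free, maximal-metric, vacuum, $H=0$ configuration.

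Finally I would propagate this configuration to all times. I would work throughout with the \emph{non-normalized} variables $\{H,n_1,n_2,n_3,\sigma_+,\sigma_-,\rho\}$, since the $\Omega$-variables are singular at $H=0$. Their evolution --- the Raychaudhuri equation~\eqref{eq_homo_theta}, rewritten as $\dot H = -H^2 - \tfrac13\sigma_{ij}\sigma^{ij} - \tfrac{\kappa}{6}(\rho+3p)$, the Jacobi equations for the $n_i$, the orthonormal shear equation~\eqref{eq_homo_sigma_ortho}, and the fluid continuity equation $\dot\rho = -3H(\rho+p)$ (matter is separately conserved in topo-GR, as follows by taking the divergence of~\eqref{eq_topo_GR_eq_1} and using~\eqref{eq_topo_GR_eq_2}) --- forms an autonomous system with polynomial, hence locally Lipschitz, right-hand sides. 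A direct check shows that the $t_0$-configuration is an equilibrium: in the Raychaudhuri equation every term vanishes; in~\eqref{eq_homo_sigma_ortho} both the $-3H\sigma_{ij}$ term and the source $\sR_{\langle ij\rangle}-\sbarR_{\langle ij\rangle}$ vanish; the equations for $\dot n_i$ vanish because $H=\sigma_\pm=0$ while the maximal relation among the $n_i$ is met; and $\dot\rho=0$ since $H=0$. By the Picard--Lindel\"of uniqueness theorem, the solution through this datum is the constant one, both forward and backward in time. Hence $H\equiv0$, $\T\sigma\equiv\T 0$, $\sTRic\equiv\sTbarRic$, $\rho\equiv0$ for all $t$; since $H=0=\T\sigma$ the spatial metric is time-independent and $\rho=p=0$, so the solution is static vacuum.

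The step I expect to be the main obstacle is this last propagation: one must verify that the $t_0$-data is a genuine equilibrium of the \emph{full} per-geometry system and not merely of the scalar constraints, which requires checking that the reference-curvature evolution~\eqref{eq_homo_Rbar_dot_ortho} (or, equivalently, the $\dot n_i$) keeps the state on the maximal locus, and one must keep the uniqueness argument clean by avoiding the $1/H^2$ normalization. The sign input $\mathrm{tr}_{\T h}(\sTRic-\sTbarRic)\le0$ from Table~\ref{tab_R_Rbar} is simultaneously what makes the argument work and what forces the non-LRS Bianchi~II exclusion.
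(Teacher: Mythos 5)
Your argument is correct and follows exactly the route the paper intends (the paper leaves this proposition without an explicit proof, resting on the Hamilton constraint~\eqref{eq_homo_Hamiltonian} written as a sum of non-negative terms via the sign column of \Cref{tab_R_Rbar}, the per-geometry ``iff'' identities upgrading $\sR=\sbarR$ to $\sTRic=\sTbarRic$, and the observation that the resulting configuration is a fixed point of each system in \Cref{sec_E3}--\Cref{sec_RS2}); your explicit Picard--Lindel\"of propagation in the non-normalized variables is a careful filling-in of that last step rather than a different method. The only caveat worth keeping is the one you already flag: $\rho(t_0)=0$ yields $p(t_0)=0$ only for an equation of state with $p(0)=0$ (e.g.\ $p=w\rho$), which is implicitly assumed by the paper's use of $w$ in the deceleration parameter.
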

This result shows that bounces and recollapses are \emph{not} possible in topo-GR for non-tilted perfect fluids with weak energy condition for all geometries except for the non-LRS Bianchi II model in the $\Nils$-geometry.

The isotropization \Cref{thm_isotropization} requires $\sR-\sbarR \leq 0$ to hold. As shown in \Cref{sec_E3} to \Cref{sec_RS2}, and summarized in \Cref{tab_R_Rbar}, this is the case for all geometries apart from the non-LRS $\Nils$-geometry, leading the following proposition.
\begin{proposition}\label[theorem]{thm_isotropization_result}
	 Consider an LSH solution of topo-GR with a positive cosmological constant, matter satisfying the dominant and strong energy conditions, and $H>0$ at an initial time~$t_0$. Then for all geometries apart from the non-LRS Bianchi II models in the $\Nils$-geometry, we have $\T\sigma \overset{t\rightarrow\infty}{\longrightarrow} \T 0$. Moreover, for non-tilted perfect fluids, we have $\sTRic - \sTbarRic \overset{t\rightarrow\infty}{\longrightarrow} \T 0$.
\end{proposition}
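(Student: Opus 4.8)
The plan is to read this proposition as a corollary of the isotropization theorem \Cref{thm_isotropization}: I would check that the four hypotheses of that theorem hold for each Thurston geometry except the non-LRS Bianchi II case, and then invoke it verbatim to obtain $\T\sigma \to \T 0$, together with $\sTRic - \sTbarRic \to \T 0$ in the perfect-fluid case. Three of the four hypotheses are immediate. The positive cosmological constant and the dominant and strong energy conditions are assumed in the statement; the absence of reference tilt ($\T\tilt = \T 0$) is the standing assumption of \Cref{sec_all_models} under which all the field equations were derived (and is in any case forced for the $\mE^3$, $\mH^3$, $\RHH$, and $\RSS$ models). The whole argument therefore reduces to verifying the remaining hypothesis $\sR - \sbarR \leq 0$ geometry by geometry.

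For that verification I would simply collect the values of $\mathrm{tr}_{\T h}(\sTRic - \sTbarRic) = \sR - \sbarR$ computed in \Cref{sec_E3} through \Cref{sec_RS2} and tabulated in \Cref{tab_R_Rbar}. For $\mE^3$, $\mH^3$, $\RHH$, and $\RSS$ one has $\sTRic = \sTbarRic$, so the difference vanishes. For $\mS^3$, by \eqref{eq_DeltaRs_S3}, the trace equals $-\tfrac12(n_1^2+n_2^2+n_3^2-n_1n_2-n_1n_3-n_2n_3)$, nonpositive because the quadratic form in parentheses is a sum of squared differences of the $n_i$. For $\Sols$ and $\SLRRs$, \eqref{eq_DeltaRs_Sol} and \eqref{eq_DeltaRs_SL2R} give $-\tfrac12(n_2+n_3)^2$ and $-\tfrac12(n_1+n_2+n_3)^2$, manifestly $\leq 0$. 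For the LRS Bianchi II model the trace is $-\tfrac{n_1^2}{2}(r_2-1)^2 \leq 0$ by \eqref{eq_syst_Nil_Omega_LRS}. In every one of these cases \Cref{thm_isotropization} applies directly and delivers the two claimed limits.

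The only genuine obstacle — and the precise reason for excluding the non-LRS Bianchi II model — is the sign of the curvature difference in the $\Nils$-geometry when $r_2 \neq r_3$. There \eqref{eq_DeltaRs_Nil} gives $\sR - \sbarR = -\tfrac{n_1^2}{2}(r_2-1)(r_3-1)$, which is \emph{indefinite}: whenever $r_2 < 1 < r_3$ (or vice versa) it is strictly positive. Hence the fourth hypothesis of \Cref{thm_isotropization} cannot be guaranteed and the Wald-type argument breaks down, as already flagged in the remark following that theorem. I would therefore isolate the non-LRS $\Nils$ case as the sole exception and observe that the LRS restriction $r_2 = r_3$ is exactly what collapses the indefinite product into the perfect square $-\tfrac{n_1^2}{2}(r_2-1)^2 \leq 0$, restoring both the hypothesis and the conclusion. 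No additional estimate is required: once $\sR - \sbarR \leq 0$ is in hand, isotropization follows immediately from \Cref{thm_isotropization}.
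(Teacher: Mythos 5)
Your proposal is correct and follows essentially the same route as the paper: the paper states just before the proposition that it follows from \Cref{thm_isotropization} once $\sR - \sbarR \leq 0$ is verified geometry by geometry via the computations of \Cref{sec_E3} through \Cref{sec_RS2} and \Cref{tab_R_Rbar}, which is precisely your argument. Your explicit sign checks (the sum-of-squared-differences identity for $\mS^3$, the perfect squares for $\Sols$, $\SLRRs$, and LRS Bianchi II, and the indefiniteness of $(r_2-1)(r_3-1)$ in the non-LRS $\Nils$ case) simply fill in details the paper leaves implicit.
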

Therefore, unless we consider a non-LRS Bianchi II model, regardless of the initial conditions, a sufficiently long phase of inflation guarantees, within topo-GR, that the Universe becomes shear-free at late-times. This holds in particular for the two geometries that allow for positive scalar curvature for the class of LSH solutions, i.e., $\mS^3$ and $\RSS$-geometries, hence contrasting with GR. However, this isotropization is not ensured for non-LRS Bianchi II models. We discuss this result in \Cref{sec_discussion_Nil}.

Note that for geometries where $\sR\leq0$ for which Wald's theorem holds in GR, due to the unknown sign of $\sbarR$, it was \emph{a priori} not guaranteed that $\sR-\sbarR \leq 0$ would still hold for these geometries. In this sense, it is quite remarkable that Wald's theorem is still valid in topo-GR for almost all cases.

\begin{table}[t]
\centering\small
\newcommand{\ToBeDet}{?}
\caption{
	\small $\Omega$-parameters and tilt of LSH solutions of topo-GR. A ``\,\ToBeDet\,''  refers to a more complex dynamical behavior of the quantity, which we intend to study in future works.\label{tab_dynamics}
}
\renewcommand{\arraystretch}{2}
\begin{tabular}{llccc}

\toprule 
\makecell[l]{{\bf Maximal }\\{\bf geometry}}
& \makecell[l]{{\bf Minimal }\\{\bf metric}}
& \centering $H^2 \Omega_\sigma$
& \centering $H^2\OmegaDR$
& \centering \bf Allows tilt? \arraybackslash
\\ \midrule
    
$\mE^3$
& $\begin{dcases} \BI \\ \BVIIn \end{dcases}$
& $\begin{dcases} \propto 1/\sfac^6 \\ \text{\ToBeDet} \end{dcases}$
& $\begin{dcases} 0 \\ \geq 0 \end{dcases}$
& $\begin{dcases} {\rm No} \\ {\rm Yes} \end{dcases}$
\\

$\mS^3$
& $\BIX$
& \ToBeDet
& $\geq 0$
& Yes
\\

$\mH^3$
& $\BV$ (or $\BVIIh$)
& $0$
& $0$
& No
\\

$\mR\times\mH^2$
& $\BIII$
& $\propto 1/\sfac^6$
& $0$
& No
\\

$\Sols$
& $\BVIn$
& \ToBeDet
& $\geq 0$
& Yes
\\

$\Nils$
& $\BII$
& \ToBeDet
& $\begin{dcases} \ \ \ \ \ \; \, {\rm LRS:} \geq 0\\  {\rm non\text{-}LRS:~indefinite} \end{dcases}$
& $\begin{dcases} \ \ \ \ \ \: \, {\rm LRS:~No} \\ {\rm non\text{-}LRS:~Yes} \end{dcases}$
\\

$\SLRRs$
& $\BVIII$ (or $\BIII$)
& \ToBeDet
& $\geq 0$
& Yes
\\
    
$\RSS$
& KS
& $\propto 1/\sfac^6$
& $0$
& No
\\
\bottomrule

\end{tabular}
\end{table}

\section{To what extent is the result for the $\Nils$-geometry peculiar?}
\label{sec_discussion_Nil}

Because ${\rm tr}_{\T h}\left(\sTRic - \sTbarRic\right) \leq 0$ fails for a general LSH solution on the $\Nils$-geometry, the isotropization theorem does not apply for that geometry, unless the LRS condition is assumed. This is a very peculiar and unexpected result. Indeed, the $\Nils$-geometry is, algebraically speaking, one of the simplest geometries, and, in GR, LSH solutions on this geometry (i.e., Bianchi II solutions) are, dynamically speaking, the second simplest LSH solutions after the Bianchi I solutions. In particular, Wald's theorem holds for Bianchi II solutions in GR. In this sense, we did not expect any fundamental difference for the Bianchi II solutions between the two theories. 
Moreover, four purely geometric properties make LSH solutions of topo-GR on the $\Nils$-geometry even more~peculiar:
\begin{enumerate}
    \item for all geometries apart from the non-LRS $\Nils$-geometry, when the metric $\T h$ has more local isometries than that imposed by the minimal group, these isometries are always symmetries of $\sTbarRic$, i.e., $\Isom(\tilde\Sigma,\T h) \subseteq \Sym(\tilde\Sigma,\sTbarRic)$ always holds. For the $\Nils$-geometry, this is not the case, as the fourth Killing vector field of the metric is not necessarily a collineation vector field for $\sTbarRic$, unless $r_2 = r_3$ in \eqref{eq_Rbar_ij_Nil}.
    
    \item for all geometries except for the $\Nils$-geometry, when $\sTRic$ is maximal, then $\sTRic = \sTbarRic$. For the $\{\mE^3,\ \mH^3,\ \RHH,\ \RSS\}$-geometries this is trivially obtained. For the $\{\mS^3, \ \Sols, \ \SLRRs\}$-geometries, maximality of $\sTRic$ is achieved when $n_1=n_2=n_3$, $n_2 = -n_3$ and $n_1 + n_2 + n_3 = 0$, respectively, and relations \eqref{eq_Rbar_ij_S3_diff}, \eqref{eq_Rbar_ij_Sol_diff} and \eqref{eq_Rbar_ij_SL2R_diff} imply $\sTRic = \sTbarRic$. But, the $\Nils$-geometry is the only geometry where this property does not hold, as $\sTRic$ is always maximal but $\sTRic \not= \sTbarRic$ by relation \eqref{eq_Rbarij_Nil_diff}, in general.
    
    \item for all geometries other than the $\Nils$-geometry, given a basis in which $\T h$ is known, the components of $\sTbarRic$ are fully expressed as functions of the components of $\T h$ and the commutation coefficients of the basis. Taking the example of the orthonormal Milnor basis for Bianchi metrics, then $\sTbarRic$ is uniquely expressed for each model as a function of $n_i$ and $a$. However, for the $\Nils$-geometry, two additional parameters are present, i.e., $r_2$ and $r_3$.
    
    \item  the $\Nils$-geometry is the only geometry where the absence of reference tilt $\T\tilt$ is needed to get $\div_{\T h} \sTbarRic = \T 0$ via \eqref{eq_homo_tilt_dot}. Additionally, it is the only geometry for which the evolution equation~\eqref{eq_homo_Rbar_dot} for $\sTbarRic$ provides equations independent of the other 3+1-equations.
\end{enumerate}

We were unable to identify a definite explanation for the mentioned peculiarity of the $\Nils$-geometry. We hoped to find a ``hidden'' geometric or topological condition that would impose either $\Isom(\tilde\Sigma,\T h) \subseteq \Sym(\tilde\Sigma,\sTbarRic)$ or directly $\sTRic = \sTbarRic$. When the topology is $T^3(n)/\mathbb{Z}_{k=3,4,5}$ (see \cite{1998_Kodama,2002_Kodama} for the notation and the proof) LRS is imposed, and therefore $\Isom(\tilde\Sigma,\T h) \subseteq \Sym(\tilde\Sigma,\sTbarRic)$ (as shown in \Cref{sec_Nil_LRS}). But for the other closed manifolds modeled on $\Nils$, no such constraint arises, and the peculiarities listed above remain.

We believe that these peculiarities, and especially the first one in the list above, suggest that the definition of $\stTbarRic$ should be slightly changed such that either $\Isom(\tilde\Sigma,\T h) \subseteq \Sym(\tilde\Sigma,\sTbarRic)$ always holds, or perhaps the stronger condition ``$\sTRic$ is maximal $\Rightarrow$ $\sTRic = \sTbarRic$'' always holds. That is, we would eliminate by definition (part of, or all of) the diffeomorphism freedom that might remain between $\sTRic$ and $\sTbarRic$ when both are maximal. Ideally, this redefinition should imply $\Isom(\tilde\Sigma, \T h) \subseteq \Sym(\tilde\Sigma, \sTbarRic)$ prior to considering the field equation. We recall that in Section~\ref{sec_Program}, we had to use the field equations to show that the Killing vectors related to the $\Ghatmin \subseteq \Isom(\tilde\Sigma,\T h)$ of the spatial metric are collineations of $\sTbarRic$, leading to \eqref{eq_GBKS_GBKS}.

Imposing $\Isom(\tilde\Sigma,\T h) \subseteq \Sym(\tilde\Sigma,\sTbarRic)$ is, however, foliation dependent, as it features explicitly the spatial metric $\T h$. We would rather want this property to hold regardless of the chosen foliation. Therefore, the ``redefinition'' of $\stTbarRic$ would be made such that, as a consequence, the following property would hold.

\begin{property}\label[property]{property_redefinition}
For \emph{all} $\Sigma$-foliations, spacelike with respect to $\T g$, $\Isom(\tilde\Sigma, \T h) \subseteq \Sym(\tilde\Sigma, \sTbarRic)$ holds, where $\T h$ and $\sTbarRic$ are, respectively, the spatial metric induced by $\T g$ on the foliation and the spatial reference Ricci tensor induced by $\stTbarRic$ on the foliation.
\end{property}
Adding \Cref{property_redefinition} to the definition of topo-GR would be very artificial. Ideally, we would want that the way $\stTbarRic$ is constructed would encompass both the above property, and the current definition~\eqref{eq_def_nabla_bar} in terms of external Whitney sum. Even more ideally, we would like this definition not to depend on the spacetime manifold being topologically $\mR\times\Sigma$ as assumed in the paper (see the discussion in \Cref{sec_topo_of_M}). However, we have not yet found a natural and compact way of doing that.

Finally, let us clarify an important point here: the result for $\Nils$ and the discussion we just made do \emph{not} mean that the theory is currently ill-defined. Indeed, the system of equations~\eqref{eq_syst_Nil_tot} for $\Nils$ is closed, and more generally, the 3+1-system derived in \Cref{sec_3+1} is closed. Rather, the results for non-LRS Bianchi II models on the $\Nils$-geometry contradict the understanding and interpretation we have had so far of the theory, namely that the solutions of the theory are qualitatively the same regardless of which topology is chosen. With this respect, if one wants the topo-GR theory to fully comply with this line of interpretation, one might want   \Cref{property_redefinition} to result from the definition of the theory.

\section{Conclusion }
\label{sec_Conclusion}

In this work, we have determined the systems of equations for locally spatially homogeneous (LSH) solutions, admitting a Bianchi--Kantowski--Sachs (BKS) metric, of a topology dependent modified gravity theory, \emph{topo-GR}. We focused on non-tilted LSH solutions for closed $3$-manifolds, hence using the correspondence between the Thurston classification and BKS metrics. Except for the $\Nils$-geometry, all models feature the same number of free-parameters as in general relativity (GR). Our four main results are:
\begin{enumerate}
	\item All BKS models admit shear-free exact solutions for perfect fluids with a dynamics equivalent to that of a flat FLRW metric in GR. This contrasts with GR, for which for maximal geometries other than $\{\mE^3, \ \mS^3, \ \mH^3\}$, shear-free solutions are only possible if the matter has a non-vanishing anisotropic stress (cf., e.g., \cite{1993_Mimoso}).
	
	\item All BKS models admit a static vacuum solution. The spacetime metric features no lapse and the induced spatial metric is static and corresponds to the maximal metric (apart from the $\SLRRs$-geometry where it can be minimal) related to the spatial topology. In contrast, static vacuum LSH solutions in GR are only possible for the $\mE^3$-geometry, with the Minkowski metric.
	
	\item For all BKS models, apart from the non-locally rotationally symmetric (non-LRS) Bianchi II model, recollapse is \emph{never} possible when the weak energy condition is fulfilled. In contrast, Bianchi IX solutions ($\mS^3$-geometry) and Kantowski-Sachs solutions ($\RSS$-geometry) can recollapse in GR (cf., e.g., \cite{1988_Barrow}).
	
	\item For all BKS models, apart from the non-LRS Bianchi II model, a Wald-like theorem~\cite{1983_Wald} holds: a positive cosmological constant ensures isotropization at late times. This differs from GR for which LSH solutions on the $\mS^3$ and $\RSS$ geometries do not isotropize without fine-tuning on the value of the initial spatial curvature.
\end{enumerate}
With respect to the recollapse and the isotropization problems, these results show that topo-GR behaves better than GR without introducing additional parameters, with the only exception of the non-LRS Bianchi II model. We discussed in \Cref{sec_discussion_Nil} why this latter result is peculiar.

Although the field equations (and the Lagrangian) of topo-GR explicitly depend on the topology of the Universe, the dependence of their solutions on topology, however, is actually greatly reduced compared to GR. A typical example, already discussed in \cite{2023_Vigneron_et_al_b, 2024_Vigneron_et_al_b}, is the expansion law of homogeneous-isotropic solutions: in GR it differs depending on whether the topology is Euclidean, spherical, or hyperbolic, whereas it is identical in topo-GR. The results of this paper further support that interpretation for all topologies admitting LSH solutions. In this sense, the physics described by topo-GR can be said to be \emph{universal} with respect to topology.\\

In direct continuation of this work, there are two main perspectives: dynamical analysis and tilted models. Indeed, while we provided some exact solutions for each model, we did not conduct an in-depth analysis of the systems of equations. In this regard, studying the dynamics towards the initial singularity and the presence or absence of chaos will be one of the main follow-ups. Further, while for the $\{\mE^3 \, ({\rm with \ Bianchi \ I}), \ \mH^3, \ \RSS, \ \RHH\}$-geometries generic solutions do not feature tilt, for the other geometries, we imposed a vanishing tilt. Generalizing this work to tilted models for the $\{ \mS^3, \Nils, \Sols, \SLRRs\}$-geometries would be therefore another main follow-up.

Finally, as discussed in Section~\ref{sec_shear_free}, the existence of static vacuum solutions in any topology could be beneficial for quantizing inflation and defining a Bunch--Davies vacuum in any topology, as was shown in \cite{2024_Vigneron_et_al_b} for $\mS^3$ and $\mH^3$ types of topologies. Alongside this potential benefit for inflation, shear-free solutions in topo-GR also suggest the possibility to extend the $\Lambda$CDM model to any Thurston geometry without adding any parameter (other than the spatial curvature already present in the $\Lambda$CDM model) and without changing the background dynamics. In this sense, topo-GR would offer a way to define a universal cosmological model, formally equivalent for any spatial topology.

\section*{Acknowledgements}

The authors are grateful to Abdelghani Zeghib and \' Aron Szab\' o for very insightful discussions. QV was supported in part by the European Research Council (ERC) under the European Union's Horizon 2020 research and innovation program (grant agreement ERC advanced grant 740021-ARTHUS, PI: T.B.).
QV is additionally supported by the Polish National Science Centre under Grant No. SONATINA 2022/44/C/ST9/00078. 
HB was funded by the Austrian Science Fund (FWF) [Grant DOI: 10.55776/J4803].
For open access purposes, the authors have applied a CC BY public copyright license to any author-accepted manuscript version arising from this submission.

\appendix

\section{Some properties of locally homogeneous quantities}

\begin{table}[t]
    \centering\small
    \caption{\small List of groups \label{tab_def_Var}}
    \begin{tabular}{ll}

        \toprule
        \bf \makecell[l]{\bf Group} & \bf Description\arraybackslash \\
        \midrule
        $\rmD_{2n}$     & Dihedral group of rank $n$ \\
        $\rmO(n)$       & Orthogonal group of dimension $n$\\
        $\rmO_+(n,m)$   & \makecell[l]{Positive orientation preserving component of the indefinite \\ \ \ \ orthogonal group of dimension $n$ (positive) and $m$ (negative)}\\
        $\IO(n)$        & Euclidean group of dimension $n$\\
        $\Nil$          & Heisenberg group\\
        $\Sol$          & Sol group \\
        $\SLRR$         & Universal cover of the ${\rm SL}(2,\mR)$ group\\
        ${\rm G}_{\blacklozenge}$  & Maximal group associated with the $\blacklozenge$-geometry\\
        ${\rm B}_{\blacklozenge}$ & Bianchi $\blacklozenge$ group \\ 
        \bottomrule
    \end{tabular}
\end{table}

\subsection{$\mH^3$-geometry}\label{app_H3}

\begin{proposition}\label[proposition]{prop_property_H3}
    Consider a closed manifold $\Sigma$ modeled on the $\mH^3$-geometry. Then,
    \begin{enumerate}
        \item there are no non-trivial vector fields that are locally invariant by the $\BV$ or $\BVIIn$ groups,
        \item all symmetric (0,2)-tensor fields locally invariant by the $\BV$ or $\BVIIn$ groups are homothetic.
    \end{enumerate}
\end{proposition}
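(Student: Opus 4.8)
The plan is to lift everything to the universal cover $\tilde\Sigma = \mH^3$, on which the acting minimal group $G_3$ (either $\BV$ or $\BVIIh$) is simply transitive, and to work in a left-invariant (Milnor) frame $\{\e_i\}$. A tensor field that is locally invariant by $G_3$ is then left-invariant, i.e.\ it has \emph{constant} components in $\{\e_i\}$; and since $\Sigma$ is modeled on $\mH^3$, the metric is the hyperbolic one, so its Ricci tensor $\sTRic$ is a negative multiple of $\T h$ (in particular negative definite). The crucial structural point is that such a tensor, being moreover globally defined on the \emph{closed} manifold $\Sigma$, may be integrated over $\Sigma$ with no boundary terms, while all of its pointwise scalar invariants are spatially constant by homogeneity.

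For $(i)$, I would set $\alpha \coloneqq \T v^\flat$, the left-invariant $1$-form dual to $\T v$, and apply the Weitzenb\"ock--Bochner identity on the closed manifold $\Sigma$,
\begin{equation}
  \int_\Sigma\!\left(|\dd\alpha|^2 + |\delta\alpha|^2\right)\dd V = \int_\Sigma\!\left(|\D\alpha|^2 + \sTRic(\T v,\T v)\right)\dd V,
\end{equation}
where $\delta = -\div_{\T h}$ on $1$-forms. Because every integrand is a homogeneous scalar, hence constant, this collapses to an \emph{algebraic} identity between the constant values. Splitting $\D\alpha = S + A$ into symmetric and antisymmetric parts (so $S = \tfrac12 \Lie{\T v}\T h$, $2A = \dd\alpha$, $\delta\alpha = -\div_{\T h}\T v$) and evaluating $S$, $A$, and $\div_{\T h}\T v$ from the Maurer--Cartan relations and the Milnor-frame connection coefficients, the identity becomes a non-negative quadratic form in the components of $\T v$ equated to $\sTRic(\T v,\T v)<0$; this forces $\T v = \T 0$. (For $\BV$ one gets $|A|^2 + (\div_{\T h}\T v)^2 = |S|^2 - 2|\T v|^2$, whose two sides vanish only if every component of $\T v$ is zero; the non-unimodular component can alternatively be eliminated first via $\int_\Sigma \div_{\T h}\T v\,\dd V = 0$.) The analogous computation for $\BVIIh$ yields the same conclusion.

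For $(ii)$, I would diagonalize the left-invariant symmetric tensor $\T T$ with respect to $\T h$. As $\T T$ and $\T h$ both have constant components in $\{\e_i\}$, the eigenvalues are constant and the eigendistributions are left-invariant; since $\T T$ and $\T h$ descend to $\Sigma$, so do these eigendistributions. If $\T T$ were not homothetic to $\T h$, at least two eigenvalues would differ, and in dimension three this always produces a $1$-dimensional, $\T h$-orthogonal, left-invariant eigendistribution (a simple eigenline, or the orthogonal complement of a $2$-dimensional eigenplane). Passing if necessary to a finite cover $\Sigma'\to\Sigma$ (still a closed hyperbolic manifold) to orient this line field, it is spanned by a nowhere-vanishing left-invariant vector field, contradicting $(i)$ applied to $\Sigma'$. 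Hence all eigenvalues coincide and $\T T = c\,\T h$ with $c$ constant.

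The main obstacle is precisely the sign in the Bochner identity of step $(i)$: the argument closes only because the hyperbolic metric has \emph{negative} Ricci curvature, so the curvature term cannot be balanced by the manifestly non-negative gradient and divergence terms unless $\T v$ vanishes---the very feature that fails for the flat and positively curved geometries. Everything else (the explicit connection coefficients for $\BV$ and $\BVIIh$, and the orientation cover in $(ii)$) is routine once this mechanism is in place.
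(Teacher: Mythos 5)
Your proposal is correct, but it reaches the result by a genuinely different mechanism than the paper. For $(i)$ the paper never invokes curvature: it uses the closed-manifold identity $\int_\Sigma\div_{\T h}\T w\,\dd V=0$ \emph{twice}, first on $\T w=\T v$ (whose divergence in the Milnor frame is $2av^1$, forcing $v^1=0$) and then on the left-invariant field $\T w=\T\nabla_{\T v}\T v$ (whose first component is $\pm a\bigl((v^2)^2+(v^3)^2\bigr)$, forcing $v^2=v^3=0$). Your Bochner--Weitzenb\"ock route uses the divergence theorem once and then lets the negative-definite Ricci tensor of the hyperbolic metric do the work; I checked that the collapsed algebraic identity reduces to $2a^2(v^1)^2=-2a^2|\T v|^2$ for $\BV$, and to $0=-2a^2|\T v|^2$ for $\BVIIh$ after first killing $v^1$, so the sign argument does close as you claim --- though you should make explicit that the right-hand side of your displayed $\BV$ identity is non-positive (it equals $-\tfrac32\bigl((v^2)^2+(v^3)^2\bigr)$, which is not evident from the symbolic form $|S|^2-2|\T v|^2$ alone). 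For $(ii)$ both proofs reduce to $(i)$, but through different left-invariant objects: the paper uses the covectors $\nabla^iA_{ij}$ and $\nabla^j(A_{jk}A^{ki})$, which must vanish by $(i)$, and then pure algebra; you use the eigenline field of $\T A$ relative to $\T h$ plus an orientation double cover, which is legitimate since a finite cover of a closed hyperbolic manifold is again a closed manifold modeled on $\mH^3$, so $(i)$ applies upstairs. Your version isolates the conceptual reason the proposition is special to $\mH^3$ --- Bochner's negative-Ricci obstruction, extended from Killing fields to arbitrary locally invariant fields --- while the paper's version is more elementary and makes the dependence on the structure constant $a$ explicit; both need the same inputs (Milnor-frame connection coefficients, compactness of $\Sigma$, and Mostow rigidity to pin down the $\BVIIh$ metric).
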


\begin{proof}

We first consider the $\BV$ group. Let $\T h$, $\T v$, and $\T A$ be, respectively, a Riemannian metric, a vector field and a symmetric (0,2)-tensor on $\Sigma$ all (locally) left-invariant by the $\BV$ group. The metric $\T h$ is any arbitrary left-invariant metric introduced for the proof. Then, by compactness, $\div_{\T h} \T v = 0$. Additionally, regardless of the metric, there always exists an orthonormal Milnor basis $\{\e_i\}$ such that $n^{ij} = 0$ and $(a_i) = (a,0,0)$. Computing $\div_{\T h} \T v$ in that basis yields $v^1 = 0$. Then computing $\T\nabla_{\T v} \T v$ in that basis yields $(a\,(v^2)^2 + a\,(v^3)^2,0,0)$. Since the first component needs to be zero, we get $v^2 = v^3 = 0$. Therefore, any left-invariant vector field is zero. Consequently, both $\nabla^i A_{ij}$ and $\nabla^j\left(A_{jk} A^{ki}\right)$ are zero. Computing these quantities in the Milnor basis directly yields $\T A \propto \T h$. Therefore, all left-invariant symmetric (0,2)-tensors are homothetic to a left-invariant metric.

We now consider the $\BVIIn$ group. An orthonormal Milnor basis is characterized by $(n^{ij}) = {\rm diag}(0,n_2, a^2/(h\, n_2))$ and $(a_i) = (a,0,0)$ with $n_2>0$. Because any locally homogeneous metric on a closed hyperbolic manifold is isotropic \cite{1968_Mostow}, then $\sTRic \propto \T h$, which leads to $n_2 = a/\sqrt{h}$. Then, conducting the same calculation as for the Bianchi V group leads to $\T v = 0$ and $\T A \propto \T h$.
\end{proof}

\begin{table}[t!]
\centering\small
\caption{\small Generators of left-translations $\{\T\xi_i\}$, left-invariant vector basis $\{\e_i\}$ and its dual basis $\{\e^i\}$ for the Bianchi models considered in this appendix and elsewhere. These quantities are used in local coordinates, resulting from the Maurer--Cartan equation~\eqref{eq_Maurer_Cartan}. $\{\T\xi_i\}$ and $\{\e_i\}$ are chosen such that their algebra are those given in \Cref{tab_Bianchi_groups} with unit structure constants.\label{tab_Bianchi_vectors}}
\renewcommand{\arraystretch}{1.}
\newcommand{\vspacetable}[1]{\multicolumn{2}{c}{} \vspace{#1}\\}
\setlength{\tabcolsep}{5pt}
\begin{tabular}{llll}

\toprule
\makecell[l]{\bf Bianchi \\ \bf groups}
& $\T\xi_i$
& $\e_i$
& $\e^i$
\\
\midrule


$\BII$
& $\begin{aligned}
    &\Tparx \\
    &-\tfrac{z}{2}\Tparx + \Tpary \\
    &\tfrac{y}{2}\Tparx + \Tparz
\end{aligned}$
& $\begin{aligned}
    &-\Tparx \\
    &\tfrac{z}{2}\Tparx + \Tpary \\
    &-\tfrac{y}{2}\Tparx + \Tparz
\end{aligned}$
& $\begin{aligned}
    &-\T\dd x  + \tfrac{z}{2} \T\dd y - \tfrac{y}{2} \T\dd z\\
    &\T\dd y \\
    &\T\dd z
\end{aligned}$
\\
\vspacetable{-12pt}
\hline
\vspacetable{-12pt}



$\BVIn$
& $\begin{aligned}
	&\T\partial_x + z\,\T\partial_y + y\,\T\partial_z \\
	&\sinh x\, \T\partial_y + \cosh x\, \T\partial_z \\
	&-\cosh x\,\T\partial_y - \sinh x\,\T\partial_z
\end{aligned}$
& $\begin{aligned}
	&\T\partial_x \\
	&\sinh x\, \T\partial_y + \cosh x\, \T\partial_z \\
	&-\cosh x\,\T\partial_y - \sinh x\,\T\partial_z
\end{aligned}$
& $\begin{aligned}
	&\T\dd x \\
	&-\sinh x\, \T\dd y + \cosh x\, \T\dd z \\
	&-\cosh x\,\T\dd y + \sinh x\,\T\dd z
\end{aligned}$
\\
\vspacetable{-12pt}
\hline
\vspacetable{-12pt}


& $\begin{aligned}
	&\frac{\cosh z}{\cosh y}\T\partial_x + \sinh z\,\T\partial_y \\
	    &\quad- \tanh y \cosh z\,\T\partial_z
\end{aligned}$
& $\begin{aligned}
	&-\T\partial_x
\end{aligned}$
& $\begin{aligned}
	& -\T\dd x + \sinh y\,\T\dd z
\end{aligned}$ \\
\vspacetable{-12pt}
\cline{2-4}
\vspacetable{-10pt}
$\BVIII$
&$\begin{aligned}
	&\frac{\sinh z}{\cosh y}\T\partial_x + \cosh z\,\T\partial_y \\
	    &\quad- \tanh y \sinh z\,\T\partial_z
\end{aligned}$
&$\begin{aligned}
	&\cos x \tanh y\, \T\partial_x + \sin x\,\T\partial_y \\
	    &\quad + \frac{\cos x}{\cosh y} \T\partial_z
\end{aligned}$
&$\begin{aligned}
	& \sin x\, \T\dd y + \cos x\cosh y\,\T\dd z
\end{aligned}$ \\
\vspacetable{-12pt}
\cline{2-4}
\vspacetable{-12pt}
&$\begin{aligned}
	&-\T\partial_z
\end{aligned}$
&$\begin{aligned}
	&-\sin x \tanh y\, \T\partial_x + \cos x\,\T\partial_y \\
	    &\quad - \frac{\sin x}{\cosh y} \T\partial_z
\end{aligned}$
&$\begin{aligned}
	& \cos x\, \T\dd y - \sin x\cosh y\, \T\dd z
\end{aligned}$
\\


\bottomrule
\end{tabular}
\end{table}

\subsection{$\RHH$-geometry}
\label{app_RH2}

\begin{proposition}\label[proposition]{prop_property_RH2}
    Let $\Sigma$ be a closed manifold modeled on the $\RHH$-geometry. Let $\T v$ and $\T A$ be, respectively, a vector field and a symmetric $(0,2)$-tensor on $\Sigma$, both locally invariant by a $\BIII$ group. Then, there exists (locally) a Milnor basis $\{\e_i\}$ with $(n^{ij}) = {\rm diag}(0,a,-a)$ and $(a_i) = (a,0,0)$, such that
    \begin{align}
        \T v = \left(0,b, b \right), \label{eq_form_v_BIII}
    \end{align}
    where $b$ is a constant, and
    \begin{align}\label{eq_TTensor_RH2_Compact}
        \T A = b_1 \, \T \delta +
        \begin{pmatrix}
            -2 b_2 & 0 & 0\\
            0 & b_2 & 3 b_2\\
            0 & 3 b_2 & b_2
        \end{pmatrix},
    \end{align}
    where $\T \delta$ is the Kronecker delta tensor, and $b_1$ and $b_2$ are constants.
\end{proposition}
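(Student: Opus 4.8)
The plan is to mirror the proof of \Cref{prop_property_H3} for the $\mH^3$-geometry, exploiting that on a closed manifold the divergence of any locally homogeneous vector field must vanish. First I would introduce an auxiliary locally homogeneous (maximal) metric $\T h$ on $\Sigma$—which exists because $\Sigma$ is modeled on $\RHH$—together with an orthonormal Milnor basis $\{\e_i\}$ realizing the stated algebra $(n^{ij}) = \mathrm{diag}(0,a,-a)$, $(a_i) = (a,0,0)$; here compactness of $\BIII$ forces $n_2 = a$ (cf.\ \Cref{sec_RH2}). From the structure constants one computes the Levi--Civita connection coefficients of $\T h$ in this frame once and for all: the only nonvanishing ones couple $\e_1$ with the $\e_2$--$\e_3$ plane, and the trace identity $C^i{}_{ik} = 2a_k$ gives $\div_{\T h}\T w = 2a\,w^1$ for any left-invariant $\T w = w^i\e_i$.

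For the vector field, I would argue as follows. Since $\Sigma$ is closed, $\int_\Sigma \div_{\T h}\T v\, \dd V = 0$, and $\div_{\T h}\T v$ is a homogeneous (hence constant) scalar, so $\div_{\T h}\T v = 2a\,v^1 = 0$, giving $v^1 = 0$. Next, $\D_{\T v}\T v$ is again locally homogeneous, so the same argument applied to it forces $(\D_{\T v}\T v)^1 = 0$. Computing this component from the connection above yields a perfect square of a \emph{single} linear form in $(v^2,v^3)$ (proportional to $(v^2+v^3)^2$ in the paper's convention), whose vanishing imposes one linear relation $v^2 = v^3$. This is the essential difference from the $\mH^3$ case, where the analogous expression is a sum of two squares and forces $\T v = \T 0$; here it only cuts the space down to the central (flat) $\RHH$-direction, leaving the one-parameter family $\T v = (0,b,b)$ of \eqref{eq_form_v_BIII}.

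For the symmetric $(0,2)$-tensor $\T A$, the idea is to manufacture enough locally homogeneous vector fields out of $\T A$ and apply the classification just obtained (every such field is of the form $(0,c,c)$). I would use two: the divergence $W^i \coloneqq \D_j A^{ji}$, and the contraction $(\T A\cdot \T v)^i = A^{ij}v_j$ of $\T A$ with the nonzero homogeneous field $\T v$ from the first part (the central $\mR$-translation, which always descends to a closed $\RHH$-manifold). Imposing $W \in \{(0,c,c)\}$ gives $W^1 = 0$ and $W^2 = W^3$, while $\T A\cdot \T v \in \{(0,c,c)\}$ gives $(\T A\cdot\T v)^1 = 0$ and $(\T A\cdot\T v)^2 = (\T A\cdot\T v)^3$. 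Written out in the frame these are four independent \emph{linear} equations in the six components of $\T A$; solving them forces $A_{12}=A_{13}=0$, $A_{22}=A_{33}$, and the trace-type relation $A_{11}=A_{22}+A_{23}$, which is exactly the two-parameter form \eqref{eq_TTensor_RH2_Compact} after setting $A_{22}=A_{33}=b_1+b_2$ and $A_{23}=3b_2$. Equivalently, one may replace the second family of conditions by the higher divergence $\D_j(A^{jk}A_k{}^i)$, paralleling \Cref{prop_property_H3}.

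I expect the main obstacle to be organizational rather than conceptual: keeping the connection coefficients and the various contractions straight in the Milnor frame, and checking that the four conditions extracted for $\T A$ are genuinely independent so that the count $6-4=2$ is sharp. A secondary point requiring care is the appeal to compactness: one must justify that a locally homogeneous metric (hence the divergence theorem) is available on $\Sigma$, and that a nonzero homogeneous vector field exists for the contraction argument—both of which follow from $\Sigma$ being a closed $\RHH$-manifold, whose central $\mR$-factor supplies the field $\T v$.
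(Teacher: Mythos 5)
Your proposal is essentially the paper's own proof: the paper disposes of this proposition by instructing the reader to ``follow the method of \Cref{prop_property_H3}'', i.e.\ introduce an auxiliary left-invariant metric and its orthonormal Milnor basis (with compactness forcing $n_2=a$), use closedness of $\Sigma$ to kill the divergence of every locally invariant vector field --- applied successively to $\T v$, to $\D_{\T v}\T v$, and to the vector fields manufactured from $\T A$ --- and read off the resulting linear constraints; your count of $6-4=2$ surviving components is correct. Two remarks. First, your vector step is not internally consistent: if $(\D_{\T v}\T v)^1\propto (v^2+v^3)^2$, then its vanishing gives $v^2=-v^3$, not $v^2=v^3$, and correspondingly your relation $A_{11}=A_{22}+A_{23}$ reproduces \eqref{eq_TTensor_RH2_Compact} only if the off-diagonal entry is $-3b_2$ rather than $+3b_2$. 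Which overall sign is correct is a matter of the convention for $C^k_{ij}$ (the paper itself is ambiguous here: Table~\ref{tab_Thurston_Rbar_max} and \eqref{eq_Rbarij_RH2} disagree on the overall sign of the $\RHH$ Ricci tensor), so this does not affect the structure of the argument, but you should fix the relative signs so that the surviving direction is simultaneously the center of the $\BIII$ algebra and the kernel of $\sTRic[\T h]$ --- the paper's proof explicitly notes that $\T v$ must lie in that kernel. Second, contracting $\T A$ with ``the central $\mR$-translation'' presupposes that this vector field descends to the closed manifold $\Sigma$; since $\Gamma\subset\GRHH$ may reverse the $\mR$-factor, it need not, and the classification $(0,b,b)$ established in the first part only applies to fields actually defined on $\Sigma$. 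Your fallback $\D_j\bigl(A^{jk}A_k{}^{i}\bigr)$ --- which is the paper's actual choice --- is built from $\T A$ alone, hence always descends, and closes that gap.
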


\begin{proof}
Let $\T h$, $\T v$ and $\T A$ be, respectively, a Riemannian metric, a vector field, and a symmetric (0,2)-tensor on $\Sigma$, all (locally) left-invariant by $\BIII$. The metric $\T h$ is any arbitrary left-invariant metric introduced for the proof. Then, there exists an orthonormal Milnor basis $\{\e_i\}$ such that $(n^{ij}) = {\rm diag}(0,a,-a)$ and $(a_i) = (a,0,0)$, where compactness imposes $n_2 = a$ (see \Cref{sec_Isom_Bianchi}). Then, following a similar method to the proof of \Cref{prop_property_H3}, we obtain \eqref{eq_form_v_BIII} and \eqref{eq_TTensor_RH2_Compact}. Note that $\T v$ is an eigenvector of the Ricci tensor of $\T h$ with eigenvalue zero.
\end{proof}

\subsection{$\Nils$-geometry}
\label{app_Nil}

For any $\BII$ Riemannian metric, there exists a unit Milnor basis with algebra given in \Cref{tab_Bianchi_groups} and such that
\begin{equation}\label{eq_app_Nil_metric}
	\T{h} = A_1 \e^1 \otimes \e^1 + \e^2 \otimes \e^2 + \e^3 \otimes \e^3\,,
\end{equation}
where $\mR \ni A_1 > 0$. Any such metric is maximal, i.e., $\Isom(\tilde\Sigma,\T h) \cong \GNil$ with ${\dim(\GNil) = 4}$, hence, admitting a fourth Killing vector
\begin{equation}
    \T\xi_4 \coloneqq z\,\Tpary - y\, \Tparz\,.
\end{equation}
\begin{proposition}\label[proposition]{prop_property_Nil}
    Let $\T h$ and $\T A$ be, respectively, a Riemannian metric and a symmetric (0,2)-tensor, both left-invariant by the $\BII$ group. If $\Isom(\tilde\Sigma,\T h) \subseteq \Isom(\tilde\Sigma,\T A)$, then
    \begin{equation*}
        \T A = A_{11} \e^1 \otimes \e^1 + A_{22} \left(\e^2 \otimes \e^2 + \e^3 \otimes \e^3\right),
    \end{equation*}
    where $A_{11}, A_{22} \in \mR$, in the unit Milnor basis for which the metric has the form \eqref{eq_app_Nil_metric}.
\end{proposition}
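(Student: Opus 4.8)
The plan is to exploit the fact that $\T A$ is already $\BII$-left-invariant, so the only genuinely new information carried by the hypothesis $\Isom(\tilde\Sigma,\T h) \subseteq \Isom(\tilde\Sigma,\T A)$ is invariance under the \emph{fourth} Killing vector of the maximal metric, namely $\T\xi_4 = z\,\Tpary - y\,\Tparz$. Since $\T A$ is left-invariant it has constant components in the unit Milnor coframe $\{\e^i\}$, i.e. $\T A = A_{ij}\,\e^i\otimes\e^j$ with $A_{ij}=A_{ji}\in\mR$, and the left-translation generators $\T\xi_1,\T\xi_2,\T\xi_3$ annihilate $\T A$ automatically. Thus the entire content of the statement reduces to imposing $\Lie{\T\xi_4}\T A = \T 0$ and reading off the resulting linear constraints on the six constants $A_{ij}$.

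First I would compute the action of $\T\xi_4$ on the coframe. Using the explicit expressions of $\{\e^i\}$ for $\BII$ recorded in \Cref{tab_Bianchi_vectors} together with the rules $\Lie{\T\xi_4}\T\dd f = \T\dd(\T\xi_4 f)$ and the Leibniz rule, a short calculation gives
\begin{equation*}
    \Lie{\T\xi_4}\e^1 = 0, \qquad \Lie{\T\xi_4}\e^2 = \e^3, \qquad \Lie{\T\xi_4}\e^3 = -\e^2.
\end{equation*}
In other words $\T\xi_4$ fixes $\e^1$ and acts as the generator of an infinitesimal rotation in the $(\e^2,\e^3)$-plane, which is exactly the geometric meaning of the extra isotropy of the maximal $\Nils$ metric.

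Next I would substitute this into $\Lie{\T\xi_4}\T A = \T 0$. By the Leibniz rule the left-hand side is a linear combination of the symmetric products $\e^i\otimes\e^j + \e^j\otimes\e^i$, and collecting coefficients yields, term by term: the coefficient of $\e^2\otimes\e^2$ (equivalently of $\e^3\otimes\e^3$) forces $A_{23}=0$; the coefficient of $\e^2\otimes\e^3 + \e^3\otimes\e^2$ forces $A_{22}=A_{33}$; and the coefficients of $\e^1\otimes\e^2 + \e^2\otimes\e^1$ and of $\e^1\otimes\e^3 + \e^3\otimes\e^1$ force $A_{13}=0$ and $A_{12}=0$, respectively. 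These constraints leave precisely
\begin{equation*}
    \T A = A_{11}\,\e^1\otimes\e^1 + A_{22}\left(\e^2\otimes\e^2 + \e^3\otimes\e^3\right),
\end{equation*}
which is the claimed form.

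The only real subtlety — and hence the step I would treat most carefully — is the computation of the $\Lie{\T\xi_4}\e^i$, precisely because $\T\xi_4$ is \emph{not} left-invariant and so does not commute with the frame $\{\e_i\}$; one must genuinely use the coordinate expressions of the coframe rather than the structure constants alone. Once the rotation action in the $(\e^2,\e^3)$-plane is established, the rest is elementary linear algebra. I would also emphasize that the distinguished direction $\e^1$ fixed by the extra isotropy is exactly what produces the degeneracy $A_{22}=A_{33}$; applied to $\T A = \T\sigma$ this characterizes the LRS condition, while applied to $\T A = \sTbarRic$ of the form~\eqref{eq_Rbar_ij_Nil} it is what forces $r_2 = r_3$, as invoked in \Cref{rem_II} and \Cref{sec_Nil}.
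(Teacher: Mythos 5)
Your proposal is correct and follows essentially the same route as the paper, whose proof is precisely the direct computation of $\Lie{\T\xi_4}\T A=0$ for constant components $A_{ij}$ using the coordinate expressions of $\{\e^i\}$ from \Cref{tab_Bianchi_vectors}. Your intermediate identities $\Lie{\T\xi_4}\e^1=0$, $\Lie{\T\xi_4}\e^2=\e^3$, $\Lie{\T\xi_4}\e^3=-\e^2$ and the resulting constraints $A_{12}=A_{13}=A_{23}=0$, $A_{22}=A_{33}$ all check out.
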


\begin{proof}
    A direct computation of $\Lie{\T\xi_4} \T A = 0$ for a symmetric tensor $\T A = A_{ij} \e^i\otimes\e^j$ with $A_{ij}$ being constants, using the coordinate expressions for the dual basis $\{\e^i\}$ in \Cref{tab_Bianchi_vectors}, leads to the result.
\end{proof}
\begin{remark}
    While all $\BII$ Riemannian metrics can be put into the single form \eqref{eq_app_Nil_metric}, there are three different ways of defining a $\BII$ Lorentzian metric (see, e.g., \cite{1992_Rahmani}). Interestingly, a Lorentzian metric $\T h$ for which the center is a null vector field, i.e., $[\T \xi_3, \T \xi_i] = 0$ for $i = 2,3$ and $\T h(\T\xi_3,\T\xi_3)=0$, is Ricci flat. Riemannian $\BII$ metrics cannot be flat.
\end{remark}
%

\subsection{$\Sols$-geometry}
\label{app_Sol}

For any $\BVIn$ metric $\T h$, there always exists a unit Milnor basis $\{\e_i\}$ such that 
\begin{equation}\label{eq_ap_metric_VI}
	\T h = A_1 \,\e^1 \otimes \e^1 + A_2 \,\e^2 \otimes \e^2 + A_3 \,\e^3 \otimes \e^3\,,
\end{equation}
where $A_i > 0$ for $i=1,2,3$.\footnote{The automorphism $\{\e_1,\e_2,\e_3\} \rightarrow \{\e_1,\alpha \e_2,\alpha\e_3\}$, which keeps the algebra of the basis, allows us to further choose the value of either $A_2$ or $A_3$.} The maximal metrics are those with $A_2 = A_3$.

\begin{proposition}\label[proposition]{prop_property_Sol}
Let $\T h$ be a $\BVIn$ metric, then
\begin{align*}
    \aff(\mR^3, \T h)
        &= {\rm span}\left\{\T\xi_1, \T\xi_2, \T\xi_3\right\}\,; \quad \forall A_i > 0\,,\\
    \colRic(\mR^3, \T h)
        &=
        \begin{dcases}
            {\rm span}\left\{\T\xi_1, \T\xi_2, \T\xi_3\right\}\,; & A_2 \not=A_3 \,,\\
            \left\{\T v \in \mathfrak{X}(\mR^3)|\ \T{v} = v^1 \Tparx + v^2(x,y,z) \Tpary + v^3(x,y,z) \Tparz\right\}\,; &  A_2 =A_3\,,
        \end{dcases}
\end{align*}
where $\left\{\T\xi_1, \T\xi_2, \T\xi_3\right\}$ and $A_i$ are defined in \Cref{tab_Bianchi_vectors} and \eqref{eq_ap_metric_VI}, respectively, and where $v^1$ is a constant, and $v^2(x,y,z)$ and $v^3(x,y,z)$ are arbitrary functions. 
\end{proposition}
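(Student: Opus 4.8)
The plan is to work in the unit Milnor basis $\{\e_i\}$ of \Cref{tab_Bianchi_vectors}, whose algebra is $[\e_1,\e_2]=-\e_3$, $[\e_1,\e_3]=-\e_2$, $[\e_2,\e_3]=0$ and in which $\T h = A_1\,\e^1\otimes\e^1 + A_2\,\e^2\otimes\e^2 + A_3\,\e^3\otimes\e^3$ with constant $A_i>0$; thus $(\mR^3,\T h)$ is the solvable group $\Sol$ with a left-invariant metric. First I would compute the Levi-Civita connection by the Koszul formula: since the $\T h(\e_i,\e_j)$ are constant, $\D_{\e_i}\e_j$ has constant coefficients in the $A_i$ and the structure constants, and from these (or from \eqref{eq_Ricci_Bianchi}) one finds that $\sTRic$ is \emph{diagonal} in this frame, with $R_{11}=-\tfrac{(A_2+A_3)^2}{2A_2A_3}$ and $R_{22},R_{33}$ positive multiples of $A_2^2-A_3^2$ and $A_3^2-A_2^2$ respectively. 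Hence $\sTRic$ is rank one, $\sTRic=R_{11}\,\e^1\otimes\e^1$, exactly in the maximal case $A_2=A_3$, and is nondegenerate when $A_2\neq A_3$; moreover the scalar curvature $\sR=R_{11}/A_1$ is a nonzero constant for all $A_i$. I record two facts used below: the generators $\T\xi_1,\T\xi_2,\T\xi_3$ are the right-invariant fields, hence Killing for \emph{every} left-invariant tensor (in particular for $\T h$ and for $\sTRic$), so ${\rm span}\{\T\xi_1,\T\xi_2,\T\xi_3\}\subseteq\isom(\mR^3,\T h)$; and by the inclusions $\isom\subseteq\aff\subseteq\colRic$ these three fields lie in all three algebras.

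For the affine algebra I would prove $\aff=\isom$ uniformly in $A_i$ via the classical parallel-tensor argument. If $\Lie{\T\xi}\T\D=0$ and $S\coloneqq\Lie{\T\xi}\T h$, then the affine equation reads $\D_\alpha S_{\nu\beta}+\D_\beta S_{\nu\alpha}-\D_\nu S_{\alpha\beta}=0$; symmetrizing the pair $(\nu,\alpha)$ and using that $S$ is symmetric yields $\D_\beta S_{\nu\alpha}=0$, i.e. $S$ is \emph{parallel}. Next I would show the only parallel symmetric $(0,2)$-tensors of $(\mR^3,\T h)$ are multiples of $\T h$: imposing $\D_{\e_i}S=\T 0$ on constant frame components $S_{ij}$ is a purely algebraic linear system in the connection coefficients, and a short computation forces all off-diagonal $S_{ij}=0$ and $S_{11}/A_1=S_{22}/A_2=S_{33}/A_3$, i.e. $S=c\,\T h$ with $c$ constant (this holds for $A_2=A_3$ as well). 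Thus every affine $\T\xi$ is homothetic, $\Lie{\T\xi}\T h=c\,\T h$; since $\Lie{\T\xi}\sR=\T\xi(\sR)=0$ by homogeneity while a homothety rescales the scalar curvature by $\Lie{\T\xi}\sR=-c\,\sR$ with $\sR\neq0$, we get $c=0$ and $\T\xi$ is Killing. Hence $\aff(\mR^3,\T h)={\rm span}\{\T\xi_1,\T\xi_2,\T\xi_3\}$ for all $A_i>0$.

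For the Ricci collineations I would split on the degeneracy of $\sTRic$. If $A_2\neq A_3$, then $\sTRic$ is a nondegenerate left-invariant (pseudo-Riemannian) metric and $\Lie{\T\xi}\sTRic=0$ means $\T\xi$ is one of its Killing fields; these form the sum of the left translations ${\rm span}\{\T\xi_1,\T\xi_2,\T\xi_3\}$ and the isotropy at the identity, the latter consisting of the $\mathfrak{sol}$-automorphisms fixing $\sTRic$ there. Since the connected automorphism group only rescales the two eigendirections $\e_2\mp\e_3$ of $\mathrm{ad}_{\e_1}$ inside the abelian ideal ${\rm span}\{\e_2,\e_3\}$ and shifts $\e_1$ into that ideal (cf. the footnote in \Cref{app_Sol}), and since $\sTRic$ restricted to the ideal is nondegenerate (as $R_{22}R_{33}\neq0$), preserving it forces the rescalings to be $\pm1$ and the shift to vanish; the isotropy is therefore discrete and $\colRic(\mR^3,\T h)={\rm span}\{\T\xi_1,\T\xi_2,\T\xi_3\}$. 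If instead $A_2=A_3$, then $\sTRic=R_{11}\,\e^1\otimes\e^1$ with $\e^1=\T\dd x$ and $R_{11}\neq0$, so $\Lie{\T\xi}\sTRic=R_{11}\big[(\Lie{\T\xi}\e^1)\otimes\e^1+\e^1\otimes(\Lie{\T\xi}\e^1)\big]$ vanishes iff $\Lie{\T\xi}\e^1=0$ (a symmetric product of a $1$-form with the nonzero $\e^1$ vanishes only if the other factor does); writing $\T\xi=v^1\Tparx+v^2\Tpary+v^3\Tparz$ and using $\Lie{\T\xi}\T\dd x=\T\dd(v^1)$, this is equivalent to $v^1$ being constant with $v^2,v^3$ arbitrary, which is exactly the claimed set.

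The step I expect to be the main obstacle is the Ricci-collineation analysis for $A_2\neq A_3$: because a Ricci collineation need not be affine, the parallel-tensor argument is unavailable and one must instead control the \emph{full} isometry algebra of the auxiliary left-invariant metric $\sTRic$, the crux being that its continuous isotropy is trivial. This is where the explicit description of $\mathrm{Aut}(\mathfrak{sol})$ and the nondegeneracy $R_{22}R_{33}\neq0$ are essential, and it is the only place distinguishing $\colRic$ from $\aff$. A more computational alternative---integrating the first-order system $\Lie{\T\xi}\sTRic=\T 0$ directly in the frame as in \Cref{prop_property_H3}---would also work but is heavier, since $\e_2$ and $\e_3$ carry $x$-dependent coefficients.
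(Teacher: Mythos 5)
Your treatment of $\aff(\mR^3,\T h)$, and of $\colRic(\mR^3,\T h)$ in the degenerate case $A_2=A_3$, is correct, and the affine part takes a genuinely different route from the paper's: you use the classical fact that $\Lie{\T\xi}\T h$ is parallel for an affine collineation, rule out parallel symmetric $2$-tensors other than multiples of $\T h$, and kill the homothety constant with the non-vanishing constant scalar curvature, uniformly in the $A_i$. The paper instead obtains $\aff$ for $A_2\neq A_3$ by squeezing it between $\isom$ and $\colRic$ (see \eqref{eq:affRiceq}) and handles $A_2=A_3$ by a direct frame computation. One small point to make explicit in your version: you only prove ${\rm span}\{\T\xi_1,\T\xi_2,\T\xi_3\}\subseteq\isom(\mR^3,\T h)$, so you should state that the isometry algebra of a left-invariant Riemannian metric on $\Sol$ is exactly three-dimensional (a standard fact about the $\Sols$-geometry that the paper also relies on).

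The genuine gap is exactly where you flag the ``main obstacle.'' For $A_2\neq A_3$ you assert that the Killing algebra of the auxiliary left-invariant indefinite metric $\sTbarRic=\sTRic[\T h]$ splits as left translations plus an isotropy consisting of Lie-algebra automorphisms of $\mathfrak{sol}$ preserving $\sTRic$ at the identity. The inclusion $\supseteq$ of that splitting is clear, but the inclusion you actually need ($\subseteq$) is unproven, and the underlying principle is false in general for left-invariant pseudo-Riemannian metrics: the isotropy of a Lie group with left-invariant metric need not act by group automorphisms. For instance, the constant-curvature left-invariant metric on the Bianchi V group has isotropy $\rmO(3)$, much larger than the automorphisms preserving the inner product at the identity, and flat left-invariant Lorentzian metrics on $\Nil$ have six-dimensional isometry algebras. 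So ruling out automorphism-induced isotropy does not rule out extra Killing fields of the Lorentzian tensor $\sTRic[\T h]$. The paper closes precisely this loophole by computing $\sTRic\left[\sTRic[\T h]\right]$ in \eqref{eq_ap_Ric_Ric_proof_Sol}, observing that for $A_2\neq A_3$ it is a positive-definite, non-maximal, left-invariant $\BVIn$ metric whose Killing algebra is exactly ${\rm span}\{\T\xi_1,\T\xi_2,\T\xi_3\}$, and then using the chain $\colRic(\mR^3,\T h)\subseteq\sym(\mR^3,\TRic[\TRic[\T h]])$, which holds because any Killing field of the non-degenerate tensor $\sTRic[\T h]$ preserves its own Ricci tensor. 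Replacing your isotropy argument with this double-Ricci step (or with the direct frame integration you mention as a heavier alternative) would make the proof complete.
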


\begin{remark}
    The metric always has $3$ linearly independent affine collineations. If the metric is minimal, i.e., $A_2\not= A_3$ there are only $3$ linearly independent Ricci collineations as well. However, when the metric is maximal, i.e., $A_2=A_3$, the set of Ricci collineations is infinite dimensional. The infinite dimensional part lies in the kernel of the Ricci tensor.
\end{remark}

\begin{proof}
The Ricci tensor of~\eqref{eq_ap_metric_VI} is
\begin{equation}\label{eq_ap_Ric_proof_Sol}
\begin{aligned}
	\sTRic[\T h] &=
		-\frac{(A_2+A_3)^2}{2 A_2 A_3} \,\e^1 \otimes \e^1 + \frac{(A_2 - A_3)(A_2+A_3)}{2 A_3 A_1} \,\e^2 \otimes \e^2 \\
		&\quad - \frac{(A_2 - A_3)(A_2+A_3)}{2 A_1 A_2} \,\e^3 \otimes \e^3\,.
\end{aligned}
\end{equation}
For $A_2 \not=A_3$, the Ricci tensor is always non-degenerate, and therefore it is a (semi-Riemannian) metric. Its Ricci tensor is therefore well-defined and given by
\begin{equation}\label{eq_ap_Ric_Ric_proof_Sol}
\begin{aligned}
	\sTRic\left[\sTRic[\T h]\right] &=
		\frac{(A_2-A_3)^2}{2 A_2 A_3} \,\e^1 \otimes \e^1 + \frac{(A_2 - A_3)^2}{2 A_3 A_1} \,\e^2 \otimes \e^2 \\
		&\quad + \frac{(A_2 - A_3)^2}{2 A_1 A_2} \,\e^3 \otimes \e^3\,.
\end{aligned}
\end{equation}
For $A_2 \not=A_3$, this tensor is always non-degenerate with positive eigenvalues, and therefore it is a Riemannian $\BVIn$ metric. Therefore, $\sym(\mR^3,\TRic[\TRic[\T h]]) = \colRic(\mR^3,\TRic[\T h]) = {\rm span}\left\{\T\xi_1, \T\xi_2, \T\xi_3\right\}$. By the inclusion 
\begin{equation*}
    \sym\left(\mR^3,\T h\right)
        \subseteq \aff\left(\mR^3,\T h\right)
        \subseteq \colRic\left(\mR^3,\T h\right)
        \subseteq \sym\left(\mR^3,\TRic\left[\TRic\left[\T h\right]\right]\right),
\end{equation*}
this implies 
\begin{equation}\label{eq:affRiceq}
    \aff(\mR^3,\T h) = \colRic(\mR^3,\T h) = {\rm span}\left\{\T\xi_1, \T\xi_2, \T\xi_3\right\}.
\end{equation}
Assuming $A_2 =A_3$, we find
\begin{equation}
\label{eq_ap_Ric_proof_Sol_A2_A3}
	\sTRic[\T h] = -2 \,\e^1 \otimes \e^1\,.
\end{equation}
Let $\T v = v^1(x,y,z) \Tparx + v^2(x,y,z) \Tpary + v^3(x,y,z) \Tparz$ be a vector field such that $\Lie{\T v} \sTRic[\T h] = 0$. This leads to $\partial_i v^1 = 0$ for $i \in \{x,y,z\}$, i.e.,
\begin{equation}\label{eq_app_col_Ric_proof_Sol}
	\T v = v^1 \Tparx + v^2(x,y,z) \Tpary + v^3(x,y,z) \Tparz \,.
\end{equation}
Finally, because an affine collineation vector field is always a Ricci collineation, it must have this form. Then, by solving $\Lie{\T v} \T\nabla[\T h] = 0$ with $A_2 = A_3$ for \eqref{eq_app_col_Ric_proof_Sol} being the affine collineation, it directly follows that the only solutions are the vectors~$\{\T\xi_1,\T\xi_2,\T\xi_3\}$.
\end{proof}
%

\subsection{$\SLRRs$-geometry}
\label{app_SL2R}

\subsubsection{Maximal Ricci tensor}

For any $\BVIII$ metric $\T h$, there always exists a unit Milnor basis $\{\e_i\}$ such that
\begin{align}\label{eq_ap_metric_VIII}
	\T h = A_1 \e^1 \otimes \e^1 + A_2 \e^2 \otimes \e^2 + A_3 \e^3 \otimes \e^3\,,
\end{align}
where $A_i > 0$ for $i=1,2,3$. The Ricci tensor in this basis reads
\begin{align}\label{eq_app_SL2R_Ric_min}
	\begin{split}
	\sTRic[\T h] &=
		\frac{(A_1+A_2-A_3)(A_1-A_2+A_3)}{2 A_2 A_3} \,\e^1 \otimes \e^1 \\
		&\quad + \frac{(A_1- A_2 + A_3)(A_1+A_2+A_3)}{2 A_3 A_1} \,\e^2 \otimes \e^2 \\
		&\quad + \frac{A_3^2 - (A_1+A_2)^2}{2 A_1 A_2} \,\e^3 \otimes \e^3\,.
	\end{split}
\end{align}
The metric $\T h$ has an additional continuous symmetry if and only if $A_2 = A_3$, in which case the additional Killing vector is given by
\begin{align}\label{eq_KVHs_SL2R_fourth}
	\T\xi_4 = -\T\partial_x = \e_1\,,
\end{align}
and the metric is maximal.

\begin{proposition}
	The maximal number of linearly independent Ricci collineations of a maximal metric on $\SLRRs$-geometric manifolds is~$6$ and is achieved if and only if $A_1 = 2A_2$. In this case, the Ricci tensor is left and right invariant under $\SLRR$, and its collineations are
	\begin{equation}\label{eq_KVHs_SL2R_Ric}
		\T\xi_1\,,\ \T\xi_2\,,\ \T\xi_3\,,\ \e_1\,,\ \e_2\,,\ \e_3\,,
	\end{equation}
	with $\Sym(\tilde\Sigma, \sTRic)_0 = \SLRR \times \SLRR$.
\end{proposition}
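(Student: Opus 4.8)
The plan is to convert the statement into one about the isometries of an auxiliary metric. For a maximal metric one has $A_2=A_3$ in \eqref{eq_ap_metric_VIII}, and substituting this into \eqref{eq_app_SL2R_Ric_min} produces a diagonal tensor $\sTRic={\rm diag}(R_1,R_2,R_2)$ in the unit Milnor basis $\{\e_i\}$ whose entries are nonzero for all admissible $A_1,A_2>0$; its signature is Lorentzian. Hence $\sTRic$ is itself a non-degenerate left-invariant metric on $\tilde\Sigma=\SLRR$. The crucial reduction is that a vector field is a Ricci collineation of $\T h$ exactly when it lies in $\colRic(\tilde\Sigma,\T h)=\sym(\tilde\Sigma,\sTRic)$, and because $\sTRic$ is non-degenerate this set is precisely its Killing algebra $\isom(\tilde\Sigma,\sTRic)$. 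Counting Ricci collineations of $\T h$ thus becomes counting isometries of the metric $\sTRic$.

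First I would invoke the classical fact that a $3$-dimensional (pseudo-)Riemannian manifold has isometry algebra of dimension at most $6$, with equality exactly for metrics of constant sectional curvature (and with dimension $5$ never occurring). This gives the upper bound of six collineations and reduces the ``if and only if'' to deciding when $\sTRic$ is a space form. In dimension three, constant curvature is equivalent to the Einstein condition, and since $\sTRic$ is homogeneous this is in turn equivalent to the proportionality $\sTRic[\sTRic]\propto\sTRic$ — the same iterated-Ricci device already used in \Cref{prop_property_Sol}. I would compute the Ricci tensor $\sTRic[\sTRic]={\rm diag}(\rho_1,\rho_2,\rho_2)$ of the left-invariant metric ${\rm diag}(R_1,R_2,R_2)$ on $\SLRR$ and impose $\rho_1R_2=\rho_2R_1$; after inserting the expressions for $R_1,R_2$ this collapses to $(A_1-2A_2)(A_1+A_2)=0$, hence $A_1=2A_2$. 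Equivalently, one checks directly that $\sTRic$ is proportional to the Killing form $B$ of $\mathfrak{sl}(2,\mR)$ precisely when $A_1=2A_2$, and on the simple group $\SLRR$ the bi-invariant metrics — the only constant-curvature, anti-de Sitter left-invariant metrics — are exactly the multiples of $B$.

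It then remains to identify the symmetries at $A_1=2A_2$. There $\sTRic$ equals a negative multiple of $B$, hence is bi-invariant, so its Killing fields include both the generators of left translations $\T\xi_1,\T\xi_2,\T\xi_3$ (Killing for every left-invariant metric) and the generators of right translations, which expressed in the left-invariant frame are $\e_1,\e_2,\e_3$; the field $\e_1$ is just the fourth Killing vector $\T\xi_4$ of \eqref{eq_KVHs_SL2R_fourth} carried by every maximal metric, whereas $\e_2,\e_3$ become collineations only in the bi-invariant case. These six fields are independent and split into two commuting copies of $\mathfrak{sl}(2,\mR)$, so they integrate to $\Sym(\tilde\Sigma,\sTRic)_0=\SLRR\times\SLRR$, which is the asserted equality.

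The step I expect to be the main obstacle is the constant-curvature computation: both the iterated Ricci $\sTRic[\sTRic]$ for the \emph{Lorentzian} metric ${\rm diag}(R_1,R_2,R_2)$ and the identification of its constant-curvature locus with the multiples of $B$ must be carried out with care for the indefinite signature, and one has to keep track of the fact that the distinguished left-invariant direction is $\e_1$ (the elliptic generator) so that the proportionality to the Killing form — and hence the clean condition $A_1=2A_2$ — comes out correctly.
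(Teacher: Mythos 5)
Your proposal is correct and follows essentially the same route as the paper: view the (always non-degenerate, Lorentzian) Ricci tensor of the maximal metric as a left-invariant metric, characterize the six-collineation case by the Einstein condition via the iterated Ricci tensor $\sTRic[\sTRic[\T h]]\propto\sTRic[\T h]$ (which yields $A_1=2A_2$, the factor $\tfrac14$ appearing in the paper), rule out five collineations by the classical dimension gap, and identify the bi-invariant/Killing-form case to obtain $\Sym(\tilde\Sigma,\sTRic)_0=\SLRR\times\SLRR$ with the right-translation generators $\e_i$ as the extra collineations.
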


\begin{remark}
From~\eqref{eq_app_SL2R_Ric_min}, we can see that a minimal metric with $A_1 = A_2 + A_3$ (equivalently with $n_1 + n_2 + n_3 = 0$ in an orthonormal Milnor basis) produces the same Ricci tensor as a maximal metric (i.e., $A_2 = A_3$, equivalently $n_2 = n_3$ in an orthonormal Milnor basis) with $A_1 = 2A_2$ (equivalently with $n_1 + 2n_2 = 0$ in an orthonormal Milnor basis).
\end{remark}

\begin{proof}
	The Ricci tensor of a maximal $\SLRRs$-metric is
\begin{equation}
	\sTRic[\T h] =
		\left(\frac{A_1}{2 A_2}\right)^2 \,\e^1 \otimes \e^1 + \frac{A_1(A_1+2A_2)}{2 A_2 A_1} \,\e^2 \otimes \e^2 + \frac{\left(A_2\right)^2 - (A_1+A_2)^2}{2 A_1 A_2} \,\e^3 \otimes \e^3\,.
\end{equation}
This tensor is always non-degenerate for any value of $A_1>0$ and $A_2>0$, with always two negative eigenvalues and one positive eigenvalue. Therefore, $\sTRic[\T h]$ is a Lorentzian metric and for this reason its maximum number of collineations is $6$. This is achieved if and only if $\sTRic[\T h]$ is an Einstein metric, i.e., $\sTRic\left[\sTRic[\T h]\right] \propto \sTRic[\T h]$. Solving the latter equation~yields
\begin{align}
	\sTRic\left[\sTRic[\T h]\right] = \frac{1}{4}  \sTRic[\T h],\quad {\rm iff} \ A_1 = 2 A_2\,.
\end{align}
This shows that for $A_1 = 2 A_2$, a maximal $\SLRRs$-metric has~$6$ linearly independent Ricci collineations. If $A_1 \not= 2 A_2$, then $\sTRic[\T h]$ is not an Einstein metric and the maximum number of Ricci collineations of $\T h$ is $4$. This is because, on a $3$-manifold, the isometry group of a metric cannot be 5-dimensional (cf., e.g., \cite[Section~1.2]{1997_Wainwright_et_al_BOOK}).

Finally, a direct computation of $\Lie{\e_i} \sTRic$ for all left-invariant vector basis (given in \Cref{tab_Bianchi_vectors}), shows that all $\e_i$ are collineations of $\sTRic$, and therefore  $\sTRic$ is right-invariant, implying $\Sym(\tilde\Sigma, \sTRic)_0 = \SLRR \times \SLRR$ (where the subscript ${}_0$ means the connected component of the identity). Interestingly, this implies that $\sTRic$ is the (opposite of the) Killing form of $\SLRR$ (e.g. \cite{2015_Olea}).
\end{proof}

\subsubsection{Bianchi VIII metrics as Bianchi III metrics}\label{app_SLRR_dico}

If a metric $\T h$ is at the same time a $\BVIII$ and a $\BIII$ metric, then it can be represented in two different orthonormal Milnor bases having the algebra of the former group and the latter group, respectively, hence non-isomorphic algebras. In the Bianchi VIII representation, $n^{\rm VIII}_2=n^{\rm VIII}_3$ is imposed by the presence of a fourth Killing vector additional to the three coming from the $\BVIII$ group. In the Bianchi III representation, $n^{\rm III}_2 \not = a^{\rm III}$ is imposed, otherwise the $\BIII$ metric is a $\GRHH$ metric. Since these are two different representations of the same metric, one can relate $n^{\rm VIII}_1$ and $n^{\rm VIII}_2$ to $n^{\rm III}_2$ and $a^{\rm III}$. To do so, it is sufficient to calculate the Ricci scalar $\tr_{\T h} \sTRic$ and the Ricci square $\sR_{ij} \sR^{ij}$. Since these are invariant quantities, they must agree in both bases. One finds
\begin{equation}
    n^{\rm VIII}_1
        = \frac{\left|\left(a^{\rm III}\right)^2 - \left(n^{\rm III}_2\right)^2\right|}{n^{\rm III}_2}\,, \\
    n^{\rm VIII}_2
        = \frac{4 \, \left(a^{\rm III}\right)^2\, n^{\rm III}_2}{\left|\left(a^{\rm III}\right)^2 - \left(n^{\rm III}_2\right)^2\right|}\,; \\ \forall\, n^{\rm III}_2 \neq a^{\rm III}\,.
\end{equation}

\subsection{$\RSS$-geometry}\label{app_RS2}

On $\RSS$, the generators of $\GRSS$-invariance are the vector fields
\begin{align}\label{eq_Killings_RS2}
    \begin{split}
    &\T\xi_1
        = \cos y\, \T\partial_x - \cot x\sin y \,\T\partial_y\,, \quad
    \T\xi_2
        = \sin y\, \T\partial_x + \cot x\cos y\, \T\partial_y\,, \\
    &\T\xi_3
        = \T\partial_y\,, \quad \T\xi_4 = \T\partial_z\,, 
    \end{split}
\end{align}
with
\begin{align}
	[\T\xi_1, \T\xi_2] = -\T\xi_3\,, \quad
	[\T\xi_1, \T\xi_3] = \T\xi_2\,, \quad
	[\T\xi_2, \T\xi_3] = -\T\xi_1\,,
\end{align}
and $\T\xi_4$ commutes with all the other vector fields.

A general $\GRSS$-invariant (Kantowski--Sachs) metric $\T h$ has the form
\begin{align}
	\T h
	    = A_1 \left(\T\dd x \otimes \T\dd x + \sin^2 x\, \T\dd y \otimes \T\dd y\right)
	        + A_2 \, \T\dd z \otimes \T\dd z\,.
\end{align}

By a direct computation with the Lie derivative along the vector fields~\eqref{eq_Killings_RS2}, one can prove the following proposition.
\begin{proposition}\label[poposition]{prop_property_RSS}
    Let $\Sigma$ be a closed manifold modeled on the $\RSS$-geometry. Let $\T v$, $\T n$ and $\T A$ be, respectively, a vector field, a 1-form and a symmetric $(0,2)$-tensor on $\Sigma$, all locally $\GRSS$-invariant. Then
	\begin{align*}
		\T v &= c_1\,\T\partial_z\,, \\
		\T n &= c_2\,\T\dd z\,, \\
		\T A &= c_3\, \left(\T\dd x\otimes\T\dd x + \sin^2 x\, \T\dd y\otimes\T\dd y\right)
		    + c_4 \, \T\dd z\otimes\T\dd z\,,
	\end{align*}
	where $c_i$ for $i=1,\ldots,4$ are constants.
\end{proposition}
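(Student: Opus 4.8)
The plan is to impose the infinitesimal invariance conditions $\Lie{\T\xi_a}\T v = \T 0$, $\Lie{\T\xi_a}\T n = \T 0$, and $\Lie{\T\xi_a}\T A = \T 0$ for each of the four generators $\T\xi_1,\dots,\T\xi_4$ of \eqref{eq_Killings_RS2}, and to solve the resulting linear first-order system for the coordinate components. The product structure $\RSS$ makes the strategy transparent: $\T\xi_4 = \Tparz$ generates translations along the $\mR$-factor, while $\T\xi_1,\T\xi_2,\T\xi_3$ generate the rotations of the round $\mS^2$-factor. I would exploit this by first eliminating coordinate dependence with the two ``easy'' generators, and only then confront the genuinely rotational generators $\T\xi_1,\T\xi_2$.

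First I would use $\Lie{\Tparz}=0$ (i.e. $\T\xi_4$) and $\Lie{\Tpary}=0$ (i.e. $\T\xi_3=\Tpary$). In a coordinate basis these simply assert that every component of $\T v$, $\T n$, $\T A$ has vanishing $\partial_z$ and $\partial_y$ derivative, so after this step each component is a function of $x$ alone. This reduces the problem to ordinary differential equations in $x$ that are coupled only through the $y$-dependent coefficients of $\T\xi_1$ and $\T\xi_2$.

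Next, for the vector field, I would compute $\Lie{\T\xi_1}\T v = [\T\xi_1,\T v]$ with $v^i=v^i(x)$. Using $\T\xi_1 = \cos y\,\Tparx - \cot x\,\sin y\,\Tpary$, the three components organize into terms proportional to $\cos y$ and to $\sin y$, which must vanish separately. The $z$-component gives $\cos y\,(v^z)'=0$, hence $v^z$ is constant; the $x$- and $y$-components give $(v^x)'=0$, $v^y=0$, and then $v^x=0$. Thus $\T v = c_1\,\Tparz$, and $\T\xi_2$ yields only a consistency check. The one-form case is identical in structure: the same $\{\cos y,\sin y\}$-splitting forces $n_x=n_y=0$ and $n_z$ constant, giving $\T n = c_2\,\T\dd z$. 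Conceptually, these two facts express that the round $\mS^2$ carries no nonzero $\SO(3)$-invariant vector field or one-form.

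The main obstacle is the symmetric tensor, where I would carry the six components $A_{ij}(x)$ through $\Lie{\T\xi_1}\T A=\T 0$ (and, for completeness, $\Lie{\T\xi_2}\T A=\T 0$). The bookkeeping is heavier because each $\cot x$ factor and each derivative of $\T\xi_1$ feeds into several component equations, but the same $\{\cos y,\sin y\}$-separation decouples them: the mixed components $A_{xz},A_{yz},A_{xy}$ are forced to zero, $A_{zz}$ to a constant $c_4$, while the purely angular block must be a constant multiple of the round metric, i.e. $A_{xx}=c_3$ and $A_{yy}=c_3\sin^2 x$. The only point requiring care is verifying that the $\sin^2 x$ weight emerges correctly for $A_{yy}$; this is the tensorial reflection of the fact that the only $\SO(3)$-invariant symmetric $(0,2)$-tensor on the round $\mS^2$ is a multiple of its metric. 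Collecting the three results yields the claimed forms with constants $c_1,\dots,c_4$.
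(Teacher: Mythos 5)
Your proposal is correct and follows essentially the same route as the paper, which simply states that the result follows ``by a direct computation with the Lie derivative along the vector fields'' $\T\xi_1,\dots,\T\xi_4$ of \eqref{eq_Killings_RS2}. Your organization --- first killing the $y$- and $z$-dependence with $\T\xi_3,\T\xi_4$, then separating the $\cos y$ and $\sin y$ coefficients in $\Lie{\T\xi_1}$ (with $\T\xi_2$ redundant since $[\T\xi_1,\T\xi_3]=\T\xi_2$) --- is just a cleanly spelled-out version of that same computation.
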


A direct consequence of this proposition is
\begin{corollary}\label[corollary]{prop_property_RSS_bis}
    Given a locally $\GRSS$-invariant metric $\T h$ on an $\RSS$-geometric manifold~$\Sigma$, any symmetric $(0,2)$-tensor $\T A$ such that $\Isom(\tilde\Sigma,\T h) \subseteq \Sym(\tilde\Sigma,\T A)$, is a transverse tensor field, i.e., $\div_{\T h} \T A = 0$.
\end{corollary}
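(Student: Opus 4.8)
The plan is to reduce the statement to \Cref{prop_property_RSS} and then exploit the product structure of $\T h$. First I would note that, since $\T h$ is locally $\GRSS$-invariant, its local isometry group contains a transitive action isomorphic to $\GRSS$; the hypothesis $\Isom(\tilde\Sigma,\T h)\subseteq\Sym(\tilde\Sigma,\T A)$ therefore makes $\T A$ locally $\GRSS$-invariant as well. Applying \Cref{prop_property_RSS} to both tensors then puts them simultaneously in the diagonal form
\begin{equation*}
  \T h = A_1\left(\T\dd x\otimes\T\dd x + \sin^2 x\,\T\dd y\otimes\T\dd y\right) + A_2\,\T\dd z\otimes\T\dd z, \qquad \T A = c_3\left(\T\dd x\otimes\T\dd x + \sin^2 x\,\T\dd y\otimes\T\dd y\right) + c_4\,\T\dd z\otimes\T\dd z,
\end{equation*}
with $A_1,A_2>0$ and $c_3,c_4$ all spatially constant on $\Sigma$.

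The core step is then the vanishing of the divergence, which I would phrase geometrically. Because $A_1$ and $A_2$ are constant on $\Sigma$, the metric $\T h$ is locally the Riemannian product of a round $2$-sphere of radius $\sqrt{A_1}$ with a line carrying the flat metric $A_2\,\T\dd z\otimes\T\dd z$. On such a product, each factor metric is parallel for the product Levi-Civita connection $\T\D[\T h]$, since both the full metric and the two orthogonal projectors onto the factors are parallel, hence so is each block. As $\T A$ is a spatially-constant linear combination of these two parallel factor metrics, I would conclude $\T\D[\T h]\,\T A = \T 0$, and in particular $\div_{\T h}\T A=\T 0$, which is the claim.

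Alternatively, I would give a symmetry argument that avoids connection coefficients entirely: $\div_{\T h}\T A$ is a $\GRSS$-invariant $1$-form (the divergence of a tensor invariant under an isometry is again invariant, because the operation is built solely from $\T h$), so the $1$-form part of \Cref{prop_property_RSS} forces $\div_{\T h}\T A = f\,\T\dd z$ with $f$ spatially constant. The reflection $z\mapsto -z$ lies in $\Isom(\tilde\Sigma,\T h)$ and, by hypothesis, in $\Sym(\tilde\Sigma,\T A)$; since it commutes with the $\T h$-divergence while sending $\T\dd z\mapsto -\T\dd z$, it forces $f=-f$, hence $f=0$.

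I do not expect a serious obstacle here: the only genuine content is \Cref{prop_property_RSS} itself, and the corollary is essentially the observation that a constant-coefficient combination of the parallel factor metrics of a metric product is covariantly constant. The routine but load-bearing points I would check are that the coefficients supplied by \Cref{prop_property_RSS} are truly constant along $\Sigma$ (rather than merely block-diagonal), which is what licenses treating $\T h$ as an honest Riemannian product, and that invariance under the connected group indeed propagates to the divergence in the second argument.
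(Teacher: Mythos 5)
Your proposal is correct and follows the same route the paper intends: the paper presents this corollary as a ``direct consequence'' of \Cref{prop_property_RSS}, i.e., one deduces that $\T A$ is locally $\GRSS$-invariant from the hypothesis, reads off its constant-coefficient block form from the proposition, and observes that the divergence with respect to the product metric $\T h$ then vanishes. Both of your ways of closing the final step (parallelism of the factor metrics, or the invariance-plus-reflection argument applied to the $1$-form $\div_{\T h}\T A$) are valid fillings of that last computation.
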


\newpage
\printbibliography[heading=bibintoc] 

\end{document}